\newif\iflong
\newif\ifold
\definecolor{pcol}{rgb}{0, 0.75, .8}
\definecolor{burntorange}{rgb}{0.93, 0.53, 0.18}
\definecolor{chocolate}{rgb}{0.82, 0.41, 0.12}
\definecolor{darkorange}{rgb}{1.0, 0.55, 0.0}
\tikzstyle{vertex}=[circle, draw, inner sep=0pt, minimum size=6pt]
\newcommand{\dd}{\ensuremath{\mathbf{d}}\xspace}
\newcommand{\cp}{$C_4$-pivotable\xspace}
\newcommand{\Oh}{\mathcal{O}\xspace}
\newcommand{\ie}{i.\,e.,\xspace}
\newcommand{\G}{\ensuremath{\mathcal{G}}\xspace}
\newcommand{\prob}[3]{\begin{quote}  \textsc{#1}\\  \textbf{Input:} #2\\  \textbf{Question:} #3\end{quote}}
\crefname{claim}{Claim}{Claims}
\keywords{temporal paths, gossiping, (multi)graphical degree sequences, edge-disjoint spanning trees, linear time algorithms}
\author{Arnaud Casteigts}{Department of Computer Science, University of Geneva, Switzerland\\ LaBRI, Université de Bordeaux, France}{arnaud.casteigts@unige.ch}{https://orcid.org/0000-0002-7819-7013}{Supported by the French ANR, project ANR-22-CE48-0001 (TEMPOGRAL).}
\author{Michelle {D\"oring}}{
        Hasso Plattner Institute, University of Potsdam, Germany
                \and \url{https://hpi.de/friedrich/people/michelle-doering.html} }{michelle.doering@hpi.de}{https://orcid.org/0000-0001-7737-3903}{German Federal Ministry for Education and Research (BMBF) through the project ``KI Servicezentrum Berlin Brandenburg'' (01IS22092)}
\author{Nils Morawietz}{Institute of Computer Science, Friedrich Schiller University Jena, Germany\\ LaBRI, Université de Bordeaux, France}{nils.morawietz@uni-jena.de}{https://orcid.org/0000-0002-7283-4982}{Supported by the French ANR, project ANR-22-CE48-0001 (TEMPOGRAL).}
\authorrunning{A.~Casteigts,  M.~{D\"oring}, and N.~Morawietz}
\title{Realization of Temporally Connected Graphs Based on Degree Sequences\thanks{A preliminary version of this work appeared in the Proceedings of the 36th International Symposium on Algorithms and Computation (ISAAC'25), Schloss Dagstuhl – Leibniz-Zentrum für Informatik. 
This full version contains all missing proofs.}}
\keywords{
temporal graphs, temporal paths and connectivity, gossiping, (multi)graphical degree sequences, edge-disjoint spanning trees, linear time algorithms}
\begin{document}
\maketitle
\begin{abstract}
Given an undirected graph $G$, the problem of deciding whether $G$ admits a simple and proper time-labeling that makes it temporally connected is known to be NP-hard (G\"obel et al., 1991).
In this article, we relax this problem and ask whether a given degree sequence can be realized as a temporally connected graph.
Our main results are a complete characterization of the feasible cases, and a recognition algorithm that runs in $\Oh(n)$~time for graphical degree sequences (realized as simple temporal graphs) and in $\Oh(n+m)$~time for multigraphical degree sequences (realized as non-simple temporal graphs, where the number of time labels on an edge corresponds to the multiplicity of the edge in the multigraph).
In fact, these algorithms can be made constructive at essentially no cost, we present a $\Oh(n+m)$~time algorithm that outputs, for a given (multi)graphical degree sequence~$\dd$, a temporally connected graph whose underlying (multi)graph is a realization of~$\dd$, if one exists.
\end{abstract}

\maketitle

\section{Introduction}
The problem of assigning time labels to the edges of a graph in such a way that every vertex can reach every other vertex by a non-decreasing path (temporal path) is known as the gossiping problem. 
Problems of this type were extensively studied in the late 70s - early 90s (see, e.g.~\cite{hedetniemi1988survey, hromkovic1996dissemination}), initially motivated by optimal scheduling of phone calls, where the agents are represented by vertices and the calls by time-labeled edges (e.g.~\cite{hajnal_cure_1972,bumby1981problem}). 
Most of the research in this line considers interactions that are mutually exclusive (an agent cannot have multiple phone calls at the same time) and non-repetitive (two agents cannot call each other twice). 
One of the landmark results in this area was obtained in 1991 by Göbel et al.~\cite{gobel_label-connected_1991}, who showed that deciding whether a given graph admits such a labeling is \textsf{NP}-complete. 

The past decade has seen a renewed interest in this type of problems, motivated by the modeling of various types of dynamic networks, such as wireless networks, social networks, autonomous vehicles, and transportation scheduling. 
In today's terminology, the gossip problem can naturally be rephrased in terms of temporal graphs. 
Formally, a temporal graph is a pair $\G=(G,\lambda)$ where $G=(V,E)$ is the underlying graph (in this work, undirected) and $\lambda$ is a mapping that assigns time labels to the edges. 
Such a labeling is \emph{simple} if every edge of $G$ has a single time label, and it is \emph{proper} if adjacent edges do not share any time label. 
The graph $\G$ is \emph{temporally connected} (i.e. 
in class \textsf{TC}) if there exists a path traversing edges with non-decreasing times (temporal path) between each ordered pair of nodes. 
In this terminology, the gossip problem corresponds to deciding, given a graph $G$, if there exists a simple and proper labeling $\lambda$ such that $\G=(G,\lambda) \in $ \textsf{TC} (and/or to find such a labeling).

A number of similar realizability questions have been considered recently in the temporal graph literature, focusing on labelings that satisfy additional prescribed properties.
For example, several works investigate whether the labeling can ensure that the fastest temporal paths between vertices match (or are upper-bounded by) a given matrix of pairwise durations~\cite{KLOBAS2025115508, erlebach_parameterized_2025,MMMS25,MMMS25b,meusel2025directedtemporaltreerealization}. 
Another problem asks that the temporal paths realize a target reachability relation~\cite{erlebach2025recognizingrealizingtemporal}, under various restrictions of the labeling function (see~\cite{casteigts_simple_2024,doering2025simplestrictproperdirected} for discussions on the resulting expressivity).
Further studies consider minimizing the number of labels~\cite{akrida_complexity_2017}, minimizing reachability times~\cite{enright_assigning_2021}, maximizing overall reachability sets~\cite{brunelli_maximizing_2023}, or avoiding temporal cycles~\cite{andrade_temporal_2025}, to name a few.

Broadly speaking, all these problems, and their non-temporal versions before them, are part of a long series of \emph{realizability questions} in graph theory, having their root in the seminal work of Erdős and Gallai~\cite{gallai_grafok_1960} and Havel and Hakimi~\cite{havel_casopis_1955,hakimi_realizability_1962} on whether (and how, respectively) a given degree sequence can be realized as a (static) graph. 
    
In this work, we return to the roots of realizability questions, studying the temporal analog of Erdős and Gallai's problem, namely, whether (and how) a given degree sequence can be realized as a temporal graph in \textsf{TC}, for both the simple and non-simple cases. 
This question is partly motivated by the hardness results of Göbel et al.~\cite{gobel_label-connected_1991}, a natural question being how to relax the input specification in a way that offers more flexibility (and tractability) for realization. 
Another motivation is to investigate whether this format allows for an elegant and concise characterization of all the feasible cases, a situation that is pretty rare in the literature on temporal graphs.

\subsection{Main results}
Before stating the results, we recall that a \emph{degree sequence}~$\dd$ is a non-increasing list of integers. A degree sequence of length~$n$ is \emph{graphical} if there exist an $n$-vertex (static) graph whose vertices have exactly these values as their degrees. Moreover, $\dd$ is \emph{multigraphical} if a multigraph with this degree distribution exists.

Our main result is a complete characterization of the graphical sequences that admit a realization as a simple and proper temporal graph in \textsf{TC}. 
We refer to these sequences as \textit{\textsf{TC}-realizable graphical sequences}. The characterization, obtained in Section~\ref{sec:characterizations}, is the following:

\setcounter{section}{5}
\begin{restatable}[]{theorem}{main}
\label{main}
A graphical sequence~$\dd = (d_1, \dots, d_n)$ with $m:= \frac{1}{2} \cdot \sum_{i=1}^n d_i$ edges is \textsf{TC}-realizable if and only if one of the following holds:
\begin{itemize}
\item $m = 2n-4$, $d_1 < n-1$, and~$d_n \geq 2$
\item $m \geq 2n - 3$ and either (i)~$n \leq 2$  or (ii)~$n > 2$, $d_{n-1} \geq 2$, and~$d_n \geq 1$.
\end{itemize}
\end{restatable}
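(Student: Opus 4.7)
I would prove the two directions of the equivalence separately. For necessity, I would start from the classical lower bound that any simple proper temporally connected graph on $n \geq 4$ vertices has at least $2n - 4$ edges, which immediately excludes all sequences with $m \leq 2n - 5$. For $n > 2$, I would rule out two leaves: if distinct vertices $u, v$ both have degree one, then a short case analysis on the labels of their unique incident edges shows that a temporal path cannot exist in both directions between $u$ and $v$ (the later leaf can never reach the earlier one, since its only outgoing instant is strictly later than the only arrival instant at the target). This forces $d_{n-1} \geq 2$ whenever $n > 2$. For the tight case $m = 2n - 4$, I would then invoke the structural characterization of extremal TC graphs (the \cp graphs developed earlier in the paper), which forces both $d_n \geq 2$ (no leaf) and $d_1 \leq n - 2$ (no universal vertex).

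\textbf{Sufficiency.}
For sufficiency, the plan is to construct an explicit realization in each feasible case. Tiny instances $n \leq 3$ are handled directly (single edge; triangle with labels $1, 2, 3$). In the tight case $m = 2n - 4$ with $d_1 \leq n - 2$ and $d_n \geq 2$, I would realize $\dd$ as a \cp graph: fix a $C_4$ as the pivot core and distribute the remaining $n - 4$ vertices as two-edge attachments to the two pivots (or to a pivot and a core vertex), consuming the degree budgets in a Havel--Hakimi-style exchange that preserves the pivot structure. In the abundant case $m \geq 2n - 3$ with $d_{n-1} \geq 2$ and $d_n \geq 1$, I would first build a temporally connected skeleton on at most $2n - 3$ edges that tolerates the (possibly) single leaf, for instance the union of two edge-disjoint arborescences rooted at a pivot with an extra pendant if needed, and then greedily append the remaining $m - (2n - 3)$ edges with extreme labels (smaller than all ``up'' labels or larger than all ``down'' labels of the skeleton) so that temporal reachability is preserved.

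\textbf{Main obstacle.}
The technically delicate step is the tight case $m = 2n - 4$. Every edge is structurally essential, and the combinatorial constraints $\sum_{i=1}^n d_i = 4n - 8$, $d_1 \leq n - 2$, and $d_n \geq 2$ must be shown to be jointly sufficient to realize $\dd$ as a \cp graph. The danger is that highly unbalanced sequences (for instance those dominated by degrees $2$ and $n - 2$) could force attachment patterns that are incompatible with the $C_4$-pivot structure; ruling this out requires a careful combinatorial argument, presumably leveraging the \cp machinery developed earlier in the paper. This is also where I expect the $\Oh(n)$ recognition algorithm advertised in the abstract to do its real work.
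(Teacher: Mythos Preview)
Your necessity argument matches the paper's exactly: the $2n-4$ lower bound of Hajnal et al., the two-leaf obstruction via a label-ordering argument, and the appeal to the \cp characterization (G\"obel et al.) for the tight case $m=2n-4$ are precisely the ingredients the paper uses. You also correctly identify the tight case as the technical crux; the paper devotes all of Section~4 to it, with a case split on $d_n\in\{2,3\}$ and on $d_1\le 4$ versus $d_1\ge 5$, not a single Havel--Hakimi pass.

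The genuine gap is in your sufficiency plan for the abundant case $m\ge 2n-3$. You propose to build a TC skeleton on $\le 2n-3$ edges and then ``greedily append the remaining $m-(2n-3)$ edges''. The labeling side is fine, but the structural side is not: the final graph must be a \emph{simple} realization of the \emph{given} $\dd$, so the skeleton and the extras are coupled. You would need the residual sequence $\dd-\dd_{\text{skel}}$ to be realizable on the complement of the skeleton, and you give no argument for why a suitable skeleton exists. The paper sidesteps this decomposition entirely: its Theorem~3.1 shows directly that under the stated hypotheses $\dd$ admits a realization $G$ containing two spanning trees sharing at most one edge (the ``extras'' are then just the edges of $G$ outside the two trees, with no separate append step). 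The engine is Kundu's theorem for the generic case $d_n\ge 2$, $m\ge 2(n-1)$, together with laying off the smallest entry and recursing for the borderline cases $d_n=1$ or $m=2n-3$ with $d_n\in\{2,3\}$. In effect, making your skeleton idea rigorous would force you to prove Theorem~3.1 anyway.
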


We also characterize the degree sequences that can be realized as a non-simple and proper temporal graph in \textsf{TC}, where \emph{non-simple} means that a same edge can receive multiple time labels. 
Conveniently, non-simple temporal graphs are equivalent to simple temporal graphs whose underlying graph is a multigraph and where each edge of the multigraph only receives one label.  
Thus, the problem amounts to characterizing \textit{\textsf{TC}-realizable multigraphical sequences}. The characterization is:

\begin{restatable}[]{theorem}{mainmulti}
\label{main 2 multi}
A multigraphical sequence~$\dd = (d_1, \dots, d_n)$ with $m:= \frac{1}{2} \cdot \sum_{i=1}^n d_i$ edges is \textsf{TC}-realizable if and only if one of the following holds:
\begin{itemize}
\item $m = 2n-4$ and $d_n \geq 2$
\item $m \ge 2n-3$ and either (i)~$n \leq 2$ 
or (ii)~$n > 2$, $d_{n-1} \geq 2$, and~$d_n \geq 1$.
\end{itemize}
\end{restatable}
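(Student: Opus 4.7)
The plan is to prove Theorem~\ref{main 2 multi} by leveraging Theorem~\ref{main} to the maximum extent possible, only handling separately the cases in which the two characterizations diverge. The only structural difference is the dropped condition $d_1 < n-1$ when $m = 2n-4$, which is compensated by the additional freedom offered by parallel edges in the multigraphical setting.

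For necessity, starting from a proper temporal multigraph $\G=(G,\lambda)\in \textsf{TC}$ realizing $\dd$, I would reuse the arguments from the proof of Theorem~\ref{main} that are insensitive to whether the underlying graph is simple or not. Specifically: the classical lower bound $m \geq 2n-4$, which holds in the multigraph setting verbatim; the requirement that every vertex have degree at least $2$ when $m = 2n-4$, since a single label on the unique edge of a hypothetical degree-$1$ vertex cannot serve as both the first step of an outgoing temporal path and the last step of an incoming one at the minimum edge count; the connectivity bound $d_n \geq 1$ for $n>2$; and the impossibility, for $n>2$, of having two distinct degree-$1$ vertices $u,v$. This last point follows from a short case distinction: if $u$ and $v$ share their unique neighbor, properness forces their two incident labels to be distinct while TC forces them to be equal; if they have distinct neighbors, both $u \to v$ and $v \to u$ require a proper temporal path whose initial and final labels are sandwiched between two specific values, and no such pair of paths can coexist.

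For sufficiency, I would split the feasible inputs into three cases. \emph{First}, if $\dd$ is graphical and meets the conditions of Theorem~\ref{main}, apply Theorem~\ref{main} directly: a simple TC realization is \emph{a fortiori} a TC multigraph. \emph{Second}, if $\dd$ is graphical with $m = 2n-4$, $d_1 = n-1$, and $d_n \geq 2$, then Theorem~\ref{main} does not apply and a dedicated multigraph construction is required. The construction I have in mind makes $v_1$ adjacent to $n-2$ other vertices, one of them via a parallel double edge, and distributes the remaining $n-3$ edges among the non-$v_1$ vertices to meet $\dd$. The double edge carries both the smallest and the largest time label, while the other spokes of $v_1$ form an increasing sequence in between; a short verification shows that every ordered pair of vertices is covered either by an ``early wave'' radiating out of $v_1$, a ``late wave'' collecting into $v_1$, or a combination of the two via the non-spoke edges. \emph{Third}, if $\dd$ is multigraphical but not graphical, then parallel edges are structurally forced to accommodate the Erdős--Gallai violation; a Hakimi-type procedure absorbs the excess degree into parallel edges incident to a high-degree vertex, and an adaptation of the canonical labeling schemes from the proof of Theorem~\ref{main}, with fresh labels inserted between existing ones along the parallel edges, then yields a TC realization.

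The main obstacle lies in the second case: one must simultaneously realize a rigid degree sequence and schedule a proper labeling witnessing TC at the precise regime in which no simple realization works. The cyclic obstruction that rules out TC in the simple setting under $d_1 = n-1$, $m=2n-4$ dissolves as soon as a single spoke is duplicated, because the two labels on the double edge provide an early outgoing and a late incoming incarnation of $v_1$; verifying TC then reduces to a small finite case analysis parametric in $\dd$. The third case, by contrast, is largely bookkeeping, provided one starts from a sufficiently flexible TC skeleton inherited from Theorem~\ref{main}.
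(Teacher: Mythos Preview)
Your overall decomposition differs from the paper's: the paper does not split into ``graphical vs.\ non-graphical,'' but instead directly proves multigraphical analogs of the two intermediate structural theorems (existence of two spanning trees sharing at most one edge when $m \geq 2n-3$; existence of a \cp multigraph when $m = 2n-4$) and then invokes the known labeling schemes. Your route could in principle work, but as written it has two substantial gaps.

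First, your necessity argument for $d_n \geq 2$ when $m = 2n-4$ is not correct. A degree-$1$ vertex is perfectly compatible with \textsf{TC} in general (indeed the second bullet of the theorem allows $d_n = 1$), so the claim that ``a single label cannot serve as both first-out and last-in'' is false as a general principle. What rules it out specifically at $m = 2n-4$ is an edge-counting argument: if $v_n$ has unique neighbor $v^*$ via edge $e_n$, then \textsf{TC} forces every other vertex to reach $v^*$ along temporal paths ending before $\lambda(e_n)$ and to be reached from $v^*$ along paths starting after $\lambda(e_n)$; hence $G - v_n$ contains two edge-disjoint spanning trees, so at least $2(n-2)$ edges, contradicting $|E(G - v_n)| = 2n-5$. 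You need this mechanism, not the stated one.

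Second, your Case~3 is not a proof but a hope. ``A Hakimi-type procedure absorbs the excess degree'' and ``an adaptation of the canonical labeling schemes'' do not specify a construction, and the non-graphical regime is not marginal: for $m = 2n-4$ one can have $d_1$ as large as $2n-4$ (e.g.\ $\dd = (6,2,2,2,2,2)$ at $n=6$), and for $m \geq 2n-3$ the non-graphical cases are equally rich. The paper handles these uniformly by building the target structure directly, typically laying off the smallest entry and reducing to a shorter multigraphical instance (or, for $m=2n-4$ with $d_{n-2}=2$, attaching an induced $4$-cycle to a realization of a shorter sequence that has two edge-disjoint spanning trees). You would need a comparably concrete reduction. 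Your Case~2 is better specified, but you should still verify that the residual degrees after placing the double edge and the $n-3$ remaining spokes always form a realizable sequence on the non-$v_1$ vertices.
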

\setcounter{section}{1}

As one can see, both characterizations are close to each other, the set of \textsf{TC}-realizable multigraphical sequences being naturally more general, as \textsf{TC}-realizable graphical sequences are special cases of these. However, the difference is more important than one may think by looking only at these theorems, because multigraphical sequences already differ from graphical sequences. (Both standard characterizations are recalled in the paper.)

On the algorithmic side, we show that \textsf{TC}-realizable graphical sequences can be recognized in $\Oh(n)$~time and \textsf{TC}-realizable multigraphical sequences in $\Oh(n+m)$~time. We also give a constructive $\Oh(n+m)$~time algorithm that outputs a temporal graph satisfying the desired constraints (if one exists), this algorithm being asymptotically optimal.

The fact that we consider proper labelings is not a limitation. Precisely, if non-proper labelings are considered, then one may either consider strict (i.e. increasing) or non-strict (i.e. non-decreasing) temporal paths. In the case of strict temporal paths, we know that non-proper labelings can always be turned into proper labelings without reducing the reachability relation~\cite{casteigts_simple_2024}. Together with the fact that proper labelings are particular cases of non-proper labelings, this implies that the feasible cases are the same as in the proper setting. If one considers non-strict temporal paths instead, then the problem becomes trivial to solve: a single spanning tree suffices (with the same time label on every edge). Thus, in this case, the problem reduces to testing if the sequence can be realized as a connected graph, which from earlier works~\cite{kundu_disjoint_1974,kleitman_decomposition_1976,gu_multigraphic_2012} is the case if and only if $m \geq n-1$ and $d_n\geq1$.

\subsection{Technical overview}
As already mentioned, the characterization of \textsf{TC}-realizable graphs is known to be difficult. Some necessary and sufficient conditions are known from the work of Göbel et al.~\cite{gobel_label-connected_1991} (itself based on earlier works in gossip theory~\cite{harary_communication_1974, hajnal_cure_1972,bumby1981problem}), and the remaining cases are NP-hard to decide. A necessary condition for these graphs is to contain two spanning trees that share at most two edges. 
If the graphs admit two spanning trees that share at most one edge, then this is sufficient: the graph is \textsf{TC}-realizable.
The case with two shared edges is further constrained. If the number of edges is exactly $2n-4$, then it is necessary and sufficient that the two shared edges belong to a central cycle of length~$4$.
Göbel et al.~\cite{gobel_label-connected_1991} called such graphs~``\emph{$C_4$ graphs}'' (although they do not consist only of a cycle of length 4). To avoid this confusion, we use the term \emph{\cp} graphs instead. A formal definition is given in~\Cref{sec:defs}.
Otherwise, there may exist feasible cases which are difficult to characterize, and these cases are precisely the ones making the problem \textsf{NP}-hard.
Interestingly, our results imply that the framework of degree sequences allows one to avoid these hard cases.

Our characterization proceeds as follows: First, we note that the characterization of degree sequences which are realizable as a graph admitting two edge-disjoint spanning trees is already known~\cite{kundu_disjoint_1974}. Thus, we focus on characterizing the degree sequences that can be realized as a graph admitting two spanning trees sharing exactly one edge, and, for degree sequences corresponding to exactly $2n-4$ edges, we characterize the ones admitting a realization with the above \cp graph property. 
Surprisingly, these cases already cover all the \textsf{TC}-realizable graphical sequences. Namely, if a graph has strictly more than $2n-4$ edges and is \textsf{TC}-realizable, then its degree sequence could \emph{also} be realized as a graph with two spanning trees sharing at most one edge, falling back on the previous cases. In other words, our analysis implies that the difficult graphs can safely be ignored in the framework of degree sequences. 

In light of these explanations, the feasible cases stated in~\Cref{main} correspond respectively to the following theorems in the paper: A degree sequence is \textsf{TC}-realizable if and only if it admits (i)~a realization which is a~\cp graph (\Cref{c4 sequences}) or (ii)~a realization with two spanning trees that share at most one edge (\Cref{two spanning trees}).
For multigraphical sequences, we prove the characterization similarly.

All of our proofs are constructive and rely on deconstructing recursively the degree sequence to a smaller degree sequence, using gadgets that preserve two edge-disjoint spanning trees in the constructed graph (up to a central component).
These constructive proofs can be implemented efficiently by using suitable data structures that store the degree sequence and the respective graph efficiently.
The labeling is then handled by a dedicated procedure that achieves the claimed time using our data structures and extra features offered by the above algorithms.

Temporal graphs are notoriously intractable objects. Most of the results in this young field are negative and most of the problems turn out to be computationally hard, often due to the fact that the temporal reachability relation is neither symmetric nor transitive. In this respect, the fact the feasible cases for \textsf{TC}-realizability admit a characterization that is at the same time compact, purely structural, and easy to recognize is quite significant. This situation is clearly an exception in the landscape of temporal graph theory.

\section{Definitions and Important Existing Results}\label{sec:defs}

A temporal graph is a pair $\G=(G,\lambda)$, where $G=(V,E)$ is a standard (in this work, \emph{undirected}) graph called the \emph{underlying graph} of $\G$, and $\lambda:E \to 2^{\mathbb{N}}$ is a labeling function that assigns a non-empty set of time labels to every edge of $E$, interpreted as availability times. 
The labeling function $\lambda$ can be restricted in various ways. It is called \emph{proper} if adjacent edges cannot share a common time label ($\lambda$ is locally injective), and it is called \emph{simple} if every edge has exactly one time label ($\lambda$ is single-valued). The typical setting of gossiping, including that of Göbel et al.~\cite{gobel_label-connected_1991}, requires that the labeling is both proper and simple.

A pair $(e,t)$ such that $t \in \lambda(e)$ is a \emph{temporal edge} of \G. 
A \emph{temporal path} in $\G$ is a sequence of temporal edges $\langle (e_i,t_i)\rangle$ such that $\langle e_i \rangle$ is a path in the underlying graph and $\langle t_i \rangle$ is non-decreasing (such a path is \emph{strict} if $\langle t_i \rangle$ is increasing).
A temporal graph \G is \emph{temporally connected} (in class \textsf{TC}) if temporal paths exist between all ordered pairs of nodes.
Observe that, beyond modeling mutually exclusive interaction, the proper setting has the technical advantage of removing the distinction between strict and non-strict temporal paths (indeed, all the temporal paths in this case are \emph{de facto} strict), which allows us to rely on a single definition of \textsf{TC} throughout the paper. 

We can now state the gossiping problem as follows:

\prob{\textsc{Gossiping}}
{An undirected graph $G=(V,E)$.}
{Does there exist a simple and proper labeling $\lambda$ such that $(G,\lambda)\in \textsf{TC}$?}

 A \emph{degree sequence} is a sequence $\dd = (d_1, \dots, d_n)$ of non-negative integers. By convention, we always require that this sequence is ordered non-increasingly.  
The original realizability question of Erdős and Gallai asks the following:

\prob{\textsc{Degree Sequence Realization}}
{A degree sequence~$\dd = (d_1, \dots, d_n)$.}
{Does there exist a graph $G=(V,E)$ with $V=\{v_1,\dots,v_n\}$ such that $v_i$ has degree $d_i$?}

When the answer is yes, the degree sequence $\dd$ is called \emph{graphical} and $G$ is called a \emph{realization} of $\dd$.
If multiedges are permitted in $G$ and if there is a multigraph that realizes~$\dd$, then the sequence is instead called \emph{multigraphical}~\cite{mcavaney_simple_1981}.
In this work, we study the following temporal version of the problem:

\prob{\textsc{Degree Sequence TC Realization}}
{A degree sequence~$\dd = (d_1, \dots, d_n)$.}
{Does there exist a realization $G=(V,E)$ of $\dd$ and a labeling $\lambda$ such that $(G,\lambda) \in \textsf{TC}$?}

Next, we state important existing results on (multi)graphical degree sequences and TC-realizable graphs and derive immediate consequences that we use throughout the paper.
\subsection{Graphical Degree Sequences}

To be \emph{graphical}, a degree sequence must satisfy two properties: (i) the sum of all degrees must be even (standard handshaking lemma), and (ii) to avoid the need for multi-edges, the sum of degrees of $r$ vertices cannot exceed what these vertices can actually absorb; namely, at most $r(r-1)$ for the edges among themselves (sum of degrees in a complete graph on $r$ vertices), plus at most $\min(r, d_i)$ for the edges to each remaining vertex~$i$.
This is formalized in the following classical characterization.

    \begin{theorem}[Graphical sequence \cite{gallai_grafok_1960,tripathi_short_2010}] \label{def: simple graphical}
    A degree sequence~$\dd =  (d_1, \dots, d_n)$ is graphical if and only if \\
    \begin{enumerate*} [label=(\roman*)]
        \item $\displaystyle \sum_{i=1}^n d_i$ is even, and
        \item for each~$r\in [1,n-1]:$\quad
        $\displaystyle\sum_{i=1}^r d_i \leq r \cdot (r-1) + \sum_{i=r+1}^n \min(r, d_i).$
    \end{enumerate*} 
                        \end{theorem}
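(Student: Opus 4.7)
The plan is to prove both directions of this classical Erdős--Gallai characterization.

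For the necessary direction, suppose $\dd$ is realized by a graph $G=(V,E)$ on vertices $v_1,\dots,v_n$ with $\deg_G(v_i)=d_i$. Condition~(i) follows immediately from the handshake lemma $\sum_i d_i = 2|E|$. For condition~(ii), I would fix $r \in [1,n-1]$, set $V_r := \{v_1,\dots,v_r\}$, and double-count: $\sum_{i=1}^r d_i$ equals twice the number of edges inside $V_r$ plus the number of edges crossing to $V \setminus V_r$. The interior contribution is bounded by $2\binom{r}{2} = r(r-1)$, and each $v_j$ with $j>r$ contributes at most $\min(r,d_j)$ crossing edges, since $v_j$ has at most $r$ neighbors in $V_r$ and at most $d_j$ neighbors overall. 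Summing yields exactly (ii).

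The sufficient direction is the nontrivial one; the plan is to follow the near-realization exchange strategy of Tripathi, Venugopalan, and West. Assuming (i) and (ii) hold, consider the family of simple graphs $G$ on labeled vertex set $\{v_1,\dots,v_n\}$ with $\deg_G(v_i)\le d_i$ for all $i$, and pick a \emph{near-realization} $G$ that maximizes $\sum_i \deg_G(v_i)$, breaking ties so as to concentrate the remaining deficiency at the low-index vertices. Suppose for contradiction that $\dd$ is not graphical; then some vertex is \emph{deficient}, and parity forces at least two. Standard edge-swap arguments based on the maximality of $G$ then show that any two deficient vertices are adjacent in $G$, and that the neighborhoods around deficient vertices are tightly constrained: otherwise one could either add an edge, or perform a two-edge swap that strictly improves the objective or the tie-breaker.

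The main obstacle, and the heart of the proof, is to convert this structural rigidity into a concrete violation of inequality (ii) at a well-chosen index $r$. The plan is to select $r$ as a critical threshold separating the ``high-degree'' and ``low-degree'' parts of the sequence (typically the largest index with $d_r \ge r-1$, or a closely related cutoff determined by the structure of $G$), and to evaluate both sides of (ii) on the near-realization. The rigidity conditions force the edges inside $V_r$ to fill up $\binom{r}{2}$ and every $v_j$ with $j>r$ to contribute its maximum $\min(r,d_j)$ crossing edges to $V_r$, saturating the right-hand side of (ii); meanwhile the total deficiency, being strictly positive and by the tie-breaker concentrated inside $V_r$, strictly inflates the left-hand side. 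This yields $\sum_{i=1}^r d_i > r(r-1) + \sum_{j>r}\min(r,d_j)$, contradicting (ii) and completing the proof.
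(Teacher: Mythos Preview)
The paper does not prove this theorem at all: it is quoted as a classical external result (with citations to Erd\H{o}s--Gallai and to Tripathi et al.) and then used as a black box, chiefly to derive Corollaries~\ref{realizable if dn geq 3} and~\ref{realizable degree 1 and bound}. So there is no ``paper's own proof'' to compare your proposal against.

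That said, a brief assessment of your sketch on its own merits: the necessity direction is correct and complete as written. For sufficiency you correctly identify the Tripathi--Venugopalan--West near-realization strategy, but two details are off. First, your tie-breaking rule is stated backwards: in that proof one does \emph{not} concentrate deficiency at the low-index (high-degree) vertices; rather one chooses the near-realization so that the \emph{critical index}~$r$ (the first index with $\deg_G(v_r)<d_r$) is as large as possible, i.e.\ the prefix $v_1,\dots,v_{r-1}$ has zero deficiency. Second, your candidate for~$r$ (``the largest index with $d_r\ge r-1$'') is not the one used; it is precisely this first-deficient index~$r$ that makes the bookkeeping work, because then $\sum_{i\le r} d_i > \sum_{i\le r}\deg_G(v_i)$ while the exchange arguments force $G[V_r]$ to be complete and each $v_j$ with $j>r$ to have exactly $\min(r,d_j)$ neighbours in~$V_r$, giving the contradiction with~(ii). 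With these two corrections your outline matches the cited source.
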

    
    Based on this characterization, we obtain the following sufficient condition under which degree sequences with few edges are graphical.
    
    \begin{corollary}\label{realizable if dn geq 3}
        
    Let~$\dd =  (d_1, \dots, d_n)$ be a degree sequence with~$\sum_{i=1}^n d_i$ being even, $\sum_{i=1}^n d_i \leq 4(n-1)$, $n > 4$, $d_1 \leq n-1$, and~$d_n \geq 2$.
    Then~$\dd$ is graphical if $d_4\geq 3$.
    \end{corollary}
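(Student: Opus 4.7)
The plan is to verify the Erdős--Gallai inequality $\sum_{i=1}^r d_i \leq r(r-1) + \sum_{i=r+1}^n \min(r, d_i)$ from \Cref{def: simple graphical} for every $r \in [1, n-1]$; the parity condition is already part of the hypotheses. The three ingredients I would use throughout are the global budget $\sum_{i=1}^n d_i \leq 4(n-1)$, the pointwise lower bound $d_i \geq 2$ (a consequence of $d_n \geq 2$ and the non-increasing ordering), and the strengthening $d_4 \geq 3$.

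For $r \in \{1, 2\}$ the inequality follows directly from $d_1 \leq n-1$: the right-hand side equals $n-1$ and $2n-2$ respectively, while the left-hand side is bounded by $d_1$ and $2 d_1$. For $r \geq 4$ I would use a budget argument: since every $d_i$ is at least $2$, the left-hand side is at most $4(n-1) - 2(n-r) = 2n + 2r - 4$, while the right-hand side is at least $r(r-1) + 2(n-r) = r^2 - 3r + 2n$; the required inequality then reduces to $(r-1)(r-4) \geq 0$, which holds for all $r \geq 4$.

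The delicate case, and the only one where $d_4 \geq 3$ is genuinely needed, is $r = 3$. A naive bound using only $d_i \geq 2$ yields LHS $\leq 2n + 2$ and RHS $\geq 2n$, a deficit of two. The hypothesis $d_4 \geq 3$ shaves one off the left, giving LHS $\leq 4(n-1) - 3 - 2(n-4) = 2n + 1$, and adds one to the right via $\min(3, d_4) = 3$, giving RHS $\geq 6 + 3 + 2(n-4) = 2n + 1$; the two bounds meet exactly, closing the case. The assumption $n > 4$ is used precisely to ensure there are enough indices $\geq 5$ to contribute their $2$'s alongside the separately accounted $d_4$, and this single case is the entire technical content of the corollary, since everywhere else the inequality is comfortably slack.
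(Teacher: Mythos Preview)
Your proof is correct and follows essentially the same approach as the paper: both verify the Erdős--Gallai inequality case-by-case in $r$, using $d_1 \leq n-1$ for $r\in\{1,2\}$, the global budget $\sum d_i \leq 4(n-1)$ together with $d_i\geq 2$ for $r\geq 4$, and the extra unit from $d_4\geq 3$ on both sides to close the $r=3$ case. The only cosmetic difference is that for $r\geq 4$ the paper writes the comparison as $r(r-1)\geq 4(r-1)$ whereas you phrase it as $(r-1)(r-4)\geq 0$.
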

    
    \begin{proof}
    According to~\Cref{def: simple graphical}, it suffices to show that for each~$r\in [1,n-1]$, $\sum_{i=1}^r d_i \leq r \cdot (r-1) + \sum_{i=r+1}^n \min(r, d_i)$.
    For~$r = 1$, the statement follows since~$d_1 \leq n-1 = (n-1) \cdot 1$.
    
    Consider~$r = 2$ and observe that~$d_1+d_2 \leq 2(n-1)$. 
    Moreover, $\sum_{i=3}^n \min(r=2, d_i) = 2(n-2)$ since~$d_n \geq 2$.
    Hence, $d_1 + d_2 \leq 2(n-1) = 2 + 2(n-2) = 2  + \sum_{i=3}^n \min(r=2, d_i)$, showing the statement for~$r=2$.
    
    Now, consider~$r = 3$.
    Recall that~$\sum \dd = 4(n-1) - c \leq 4(n-1)$, $d_n \geq 2$, and~$d_4 \geq 3$.
    This implies that~$\sum_{i=4}^n d_i \geq \sum_{i=4}^n \min(r=3,d_i) \geq 3 + (n-4)\cdot 2 = 2(n-1) - 3$.
    Thus, $d_1+d_2+d_3 = \sum \dd - \sum_{i=4}^n d_i \leq 4(n-1) - (2(n-1) - 3) = 2(n-1) + 3$. 
    Consequently, $d_1+d_2+d_3 \leq 2(n-1) + 3 = 6 + 2(n-1) - 3 \leq r\cdot (r-1) + \sum_{i=4}^n \min(r,d_i)$, showing the statement for~$r=3$.  
        
    Finally, consider~$r \geq 4$.
    Recall that~$\sum\dd \leq 4(n-1)$ and~$d_n \geq 2$. The latter implies $\sum_{i=r+1}^n \min(r,d_i) \geq 2 (n-r)$. Consequently, $\sum_{i=1}^r d_i = \sum \dd - \sum_{i=r+1}^n d_i\leq 4(n-1) - 2(n-r) = 2(n-1) + 2(r-1)$.
    Moreover, $r\cdot (r-1)\geq 4(r-1)$ since~$r\geq 4$.
    This implies that~$\sum_{i=1}^r d_i \leq 2(n-1) + 2(r-1) = 4(r-1) + 2(n-r) \leq r\cdot (r-1) + \sum_{i=r+1}^n \min(r,d_i) \geq 2 (n-r)$, showing the statement for~$r\geq 4$.    
    \end{proof}
    
    We will make use of this corollary several times in this work to easily argue that specific degree sequences are graphical.
    Similarly, we also obtain the following sufficient condition.
    \begin{corollary}\label{realizable degree 1 and bound}
        Let~$\dd =  (d_1, \dots, d_n)$ be a degree sequence with~$\sum_{i=1}^n d_i = 4(n-1) - 2$, $n > 4$, $d_1 \leq n-1$, $d_{n-1} \geq 2$, and~$d_n \geq 1$.
        Then~$\dd$ is graphical if $d_4 \geq 3$.
    \end{corollary}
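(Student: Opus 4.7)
The plan is to closely follow the structure of the proof of \Cref{realizable if dn geq 3}, applying \Cref{def: simple graphical}. First note that $\sum_{i=1}^n d_i = 4(n-1) - 2$ is even, so condition~(i) of the graphical characterization holds; the remaining work is to verify the Erd\H{o}s--Gallai inequality $\sum_{i=1}^r d_i \leq r(r-1) + \sum_{i=r+1}^n \min(r,d_i)$ for every $r\in[1,n-1]$. The only substantive change compared to \Cref{realizable if dn geq 3} is that $d_n$ may be $1$ instead of $2$, which weakens the lower bound on $\sum_{i=r+1}^n \min(r,d_i)$ by~$1$; this is compensated by the fact that $\sum\dd$ is smaller by $2$, so the upper bound on $\sum_{i=1}^r d_i$ that we get from the sum argument is also smaller.

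For $r=1$, I would just use $d_1 \leq n-1$, which trivially gives the bound. For $r=2$, the sum-based bound gives $d_1+d_2 \leq 4(n-1)-2-\sum_{i=3}^n d_i$, and using $d_3 \geq d_4 \geq 3$, $d_i \geq 2$ for $5 \leq i \leq n-1$, and $d_n\geq 1$ yields $\sum_{i=3}^n d_i \geq 2n-3$, hence $d_1+d_2\leq 2n-3$; the required right-hand side is $2 + \sum_{i=3}^n \min(2,d_i) \geq 2 + 2(n-3)+1 = 2n-3$, which matches. For $r=3$, the same kind of counting gives $\sum_{i=4}^n d_i \geq 3 + 2(n-5) + 1 = 2n-6$, so $d_1+d_2+d_3 \leq 2n$, while $6 + \sum_{i=4}^n \min(3,d_i) \geq 6 + 3 + 2(n-5)+1 = 2n$, closing the case.

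For the generic case $r\geq 4$, the argument parallels the last paragraph of the proof of \Cref{realizable if dn geq 3}. Using $d_i\geq 2$ for $i\leq n-1$ and $d_n\geq 1$, one gets $\sum_{i=r+1}^n \min(r,d_i) \geq 2(n-1-r)+1$ (with the trivial adjustment when $r=n-1$), so the sum argument bounds $\sum_{i=1}^r d_i \leq 4(n-1)-2 - (2(n-1-r)+1) = 2n+2r-5$. Then one only needs $2r-4 \leq r(r-1) - 2r = r^2-3r$, i.e.\ $(r-1)(r-4)\geq 0$, which holds for all $r\geq 4$. A separate short check handles $r=n-1$, where the quadratic term $r(r-1)$ already dominates since $n>4$.

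The only place that requires any care is the $r=2$ and $r=3$ steps, where losing $1$ unit on both sides of the inequality (from the sum being smaller by~$2$ and from $d_n$ possibly being $1$ instead of $2$) makes the bound exactly tight; this is where the hypothesis $d_4\geq 3$ is really used, via $\sum_{i=3}^n d_i \geq 2n-3$. Everything else is a direct transcription of the previous corollary's calculation with the constants shifted by $2$.
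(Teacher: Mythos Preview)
Your proposal is correct and follows essentially the same approach as the paper: both verify the Erd\H{o}s--Gallai inequalities case by case, observing that the possible loss of~$1$ on the right-hand side from $d_n=1$ is compensated by the decrease of~$2$ in $\sum\dd$. The paper only gives a brief sketch of this compensation argument, while you carry out the explicit computations for each range of~$r$; your calculations are accurate (including the factorization $(r-1)(r-4)\geq 0$ for $r\geq 4$), and the case $r=n-1$ in fact falls under the same formula without needing a separate check.
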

    
    \begin{proof}[Sketch]
        The statement can be shown analogously to~\Cref{realizable if dn geq 3}.
        The only differences are that for each~$r\in [2,n-1]$, $\sum_{i=r+1}^n \min(r,d_i)$ is smaller by at most one because~$d_n = 1$ and~$d_{n-1} \geq 2$, and that $\sum_{i=1}^r d_i$ is smaller by at least one since~$\sum\dd$ is reduced by 2 compared to~\Cref{realizable if dn geq 3}.
        Hence, $\sum_{i=1}^r d_i \leq r\cdot (r-1)  + \sum_{i=r+1}^n \min(r,d_i)$ holds for each~$r\in [1,n-1]$.
    \end{proof}
    
    The following algorithm constructs a graph from a graphical degree sequence.
                                                \begin{definition} [Graphical Laying Off Process \cite{hakimi_realizability_1962}] \label{def:havel hakimi laying off simple}
        Let $\dd = (d_1, \dots, d_n)$ be a graphical sequence.
        The \emph{laying off} procedure consists of connecting the vertex $v_i$ to the first $d_i$ vertices, excluding $v_i$. The resulting \emph{residual sequence} is given by:
        \begin{align*}
            &(d_1-1,\dots,d_{d_i}-1,d_{d_i+1},\dots,d_{i-1},d_{i+1},\dots,d_n) &\text{if } d_i<i, \\
            &(d_1-1,\dots,d_{i-1}-1,d_{i+1}-1,\dots,d_{d_i+1}-1,d_{d_i+2},\dots,d_n) 
            &\text{if } d_i\geq i.
        \end{align*}
    \end{definition}
    The next lemma implies that the laying off process can be used iteratively to build a realization for a graphical degree sequence.
    \begin{lemma}[\cite{fulkerson_properties_1965,hakimi_realizability_1962}] \label{lem:havel-hakimi-correctness}
        If $\dd=(d_1,\dots,d_n)$ is a graphical sequence, then the residual sequence after laying off any entry is also graphical.
    \end{lemma}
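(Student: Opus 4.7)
The plan is to run the classical Havel--Hakimi-style swap argument, adapted to the generalized laying off rule of~\Cref{def:havel hakimi laying off simple}. I start from an arbitrary realization $G$ of $\dd$ on vertices $v_1,\dots,v_n$ with $\deg(v_j)=d_j$, and let $S_i$ denote the target neighborhood of $v_i$ prescribed by the laying off process: $S_i = \{v_1,\dots,v_{d_i}\}$ if $d_i<i$, and $S_i = \{v_1,\dots,v_{i-1},v_{i+1},\dots,v_{d_i+1}\}$ if $d_i\geq i$. In either case, $|S_i|=d_i$ and $S_i$ is exactly the set of $d_i$ highest-degree vertices of $G$ other than $v_i$ (with ties broken by index). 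If I can exhibit a realization $G'$ of $\dd$ in which $N_{G'}(v_i)=S_i$, then deleting $v_i$ from $G'$ yields a graph on $n-1$ vertices whose degree sequence, after re-indexing, is exactly the residual sequence, and the lemma follows.

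The construction of $G'$ proceeds by iterated two-swaps that preserve all degrees and strictly increase $|N(v_i)\cap S_i|$. While $N(v_i)\neq S_i$, pick $v_j\in S_i\setminus N(v_i)$ and $v_k\in N(v_i)\setminus S_i$; by definition of $S_i$ we have $d_j\geq d_k$. Since $v_i\in N(v_k)\setminus N(v_j)$, a counting argument using $|N(v_j)|\geq|N(v_k)|$ shows $N(v_j)\setminus N(v_k)\neq\emptyset$, so some $v_\ell$ lies in it; because $v_i\notin N(v_j)$, automatically $v_\ell\neq v_i$. The swap that deletes $v_iv_k$ and $v_jv_\ell$ and inserts $v_iv_j$ and $v_kv_\ell$ preserves all four endpoint degrees and replaces the ``misplaced'' neighbor $v_k$ of $v_i$ by the ``correct'' neighbor $v_j$, so $|N(v_i)\cap S_i|$ strictly increases. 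Iterating terminates in at most $d_i$ steps with $N_{G'}(v_i)=S_i$, as desired.

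The main obstacle is the swap step: one must verify simultaneously that both edges to be removed are present, both edges to be added are absent, and the chosen $v_\ell$ is distinct from $v_i$. The potentially delicate case is $d_j=d_k$, where a symmetry between the two neighborhoods could in principle obstruct the existence of $v_\ell$; however, the asymmetry $v_i\in N(v_k)\setminus N(v_j)$ forces $N(v_j)\setminus N(v_k)$ to be nonempty even in the equality case, which is precisely what makes the argument go through uniformly. All other checks (edge presence/absence of $v_iv_k$, $v_jv_\ell$, $v_iv_j$, $v_kv_\ell$) follow immediately from the choices of $v_j$, $v_k$, and $v_\ell$.
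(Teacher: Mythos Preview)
The paper does not prove this lemma; it is cited from the literature (Fulkerson 1965, Hakimi 1962) and stated without proof. Your argument is precisely the classical two-swap proof from those sources, so there is no paper-specific comparison to make beyond noting that you have correctly reproduced the standard argument.

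One small gap: you verify $v_\ell\neq v_i$ but not $v_\ell\neq v_k$. If $v_jv_k$ happens to be an edge of $G$, then $v_k\in N(v_j)\setminus N(v_k)$ and could be selected as $v_\ell$, in which case the swap would attempt to insert the self-loop $v_kv_k$. The fix is immediate: in that situation $v_j\in N(v_k)\setminus N(v_j)$ as well (since $v_j\notin N(v_j)$), so together with $v_i$ we get $|N(v_k)\setminus N(v_j)|\geq 2$, whence $|N(v_j)\setminus N(v_k)|\geq 2$ by your counting inequality, and one may choose $v_\ell\neq v_k$. With this one-line addition the proof is complete.
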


\subsection{Multigraphical degree sequences}
To be \emph{multigraphical}, a degree sequence should also satisfy that the sum of degrees are even (handshaking lemma again), but since multi-edges are permitted, the other terms are not as restricted as for graphical sequences. The only restriction is that the highest-degree vertex can distribute all its edges among the remaining vertices.
This is formalized as follows.

    \begin{theorem}
    [Multigraphical sequence \cite{hakimi_realizability_1962}]\label{multi realize chara}
    A degree sequence~$\dd =  (d_1, \dots, d_n)$ is multigraphical if and only if  \begin{enumerate*} [label=(\roman*)]
            \item $\sum_{i = 1}^n d_i$ is even, and 
            \item $\displaystyle d_1 \leq \sum_{i = 2}^n d_i$.  
        \end{enumerate*} 
                                            \end{theorem}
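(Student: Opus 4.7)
The plan is to handle necessity directly and sufficiency by induction on the number of edges $m := \tfrac{1}{2}\sum_i d_i$.

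For necessity, I would assume $\dd$ admits a multigraph realization $G$ on vertex set $\{v_1, \ldots, v_n\}$. Each edge contributes exactly $2$ to $\sum_i \deg(v_i)$, so $\sum_i d_i = 2m$ is even, yielding~(i). Since multigraphs are loopless, every edge incident to $v_1$ has its other endpoint in $\{v_2, \ldots, v_n\}$; counting with multiplicity gives $d_1 \leq \sum_{i=2}^n d_i$, which is~(ii).

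For sufficiency, I would induct on $m$. The base case $m = 0$ is immediate: all degrees are zero and the empty graph works. Otherwise $d_1 \geq 1$, and then (ii) forces $d_2 \geq 1$ as well (assuming, as we may, that the sequence is sorted non-increasingly). The inductive step places one edge between $v_1$ and $v_2$ and recurses on the residual sequence $\dd' := (d_1 - 1, d_2 - 1, d_3, \ldots, d_n)$, re-sorted if needed. Condition~(i) for $\dd'$ is immediate since the sum drops by~$2$.

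The main step is verifying condition~(ii) for $\dd'$. Rewriting (ii) for $\dd$ as $2d_1 \leq \sum_i d_i = 2m$, i.e.\ $d_1 \leq m$, and noting that the new number of edges is $m' = m - 1$, we need $M' \leq m - 1$, where $M'$ is the maximum entry of $\dd'$. The only way $M'$ can exceed $d_1 - 1$ is if $d_3 > d_1 - 1$; since entries are integers and $d_3 \leq d_1$, this forces $d_1 = d_2 = d_3 = M'$. In the first case, $M' = d_1 - 1 \leq m - 1$ follows immediately from $d_1 \leq m$. In the second case, if $M' = m$ held then $\sum_i d_i \geq 3 d_1 = 3m$ would contradict $\sum_i d_i = 2m$ (since $m \geq 1$); hence $M' \leq m - 1$ here as well. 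Adding back the edge $v_1 v_2$ to the realization supplied by the inductive hypothesis then yields a realization of $\dd$. I expect the main subtlety to be this re-sorting issue, but the three-way tie argument resolves it cleanly.
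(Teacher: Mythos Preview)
Your proof is correct. The paper does not actually prove this theorem; it is quoted from Hakimi~\cite{hakimi_realizability_1962} as a known result in the preliminaries, so there is no in-paper argument to compare against. Your inductive step---remove one edge between the two largest degrees and recurse---is exactly the ``laying off'' process that the paper cites separately as \Cref{laying off multi}, so your write-up is well aligned with how the result is used later. The case split on whether $d_3 = d_1$ to handle the re-sort is the right way to close the only nontrivial gap.
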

    For multigraphical sequences there also exists a ``laying off'' process which is similar to \Cref{def:havel hakimi laying off simple}, but more flexible. 
    Contrary to the graphical case, where a whole degree is laid off, in the multigraphical case, only a single edge is laid off.
    \begin{theorem}[Multigraphical Laying Off Process \cite{boesch_line_1976}]\label{laying off multi}
Let $\dd=(d_1,\dots,d_n)$ be a multigraphical degree sequence with $d_n>0$.
Then, for each~$j, 2\leq j\leq n$, the degree sequence $(d_1-1,d_2,\dots,d_{j-1},d_j-1,d_{j+1},\dots,d_n)$ is multigraphical (after reordering).
    \end{theorem}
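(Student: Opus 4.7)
The plan is to verify that the reordered version of $\dd' := (d_1-1, d_2, \ldots, d_{j-1}, d_j-1, d_{j+1}, \ldots, d_n)$ satisfies the two conditions of \Cref{multi realize chara}. The evenness of $\sum \dd'$ is immediate since $\sum \dd' = \sum \dd - 2$, so the real work reduces to showing $2 M' \leq \sum \dd'$, where $M' := \max \dd'$. Note that only two entries of $\dd$ are decreased (by $1$ each), so trivially $M' \leq d_1$.

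The natural approach is a case split on whether $M' \leq d_1 - 1$ or $M' = d_1$. In the first case, the multigraphical inequality $d_1 \leq \sum_{i \geq 2} d_i$ applied to $\dd$ gives $2M' \leq 2d_1 - 2 \leq \sum \dd - 2 = \sum \dd'$, and we are done.

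The main (and really only) obstacle is the second case, $M' = d_1$. Here one must argue that $\sum \dd$ has enough slack over the bare multigraphical lower bound $2d_1$ to absorb the loss of $2$ from the laying off. The condition $M' = d_1$ forces some entry $d_k$ with $k \notin \{1,j\}$ to already equal $d_1$, and by the non-increasing ordering this in turn forces $d_2 = d_1$. The plan is to derive $\sum \dd \geq 2d_1 + 2$ by contradiction: if $\sum \dd = 2d_1$, then $\sum_{i \geq 2} d_i = d_1$, yet the existence of $d_k = d_1$ with $k \in \{2,\ldots,n\}\setminus\{j\}$ forces every other entry indexed $\geq 2$ to be zero, contradicting the hypothesis $d_n > 0$ as soon as such an index exists. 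The degenerate case $n=2$ is ruled out separately, since then $j$ must be $2$ and no admissible $k \notin \{1,j\}$ remains, contradicting the assumption of this case. Combining $\sum \dd > 2d_1$ with the parity of $\sum \dd$ gives $\sum \dd \geq 2d_1 + 2$, hence $\sum \dd' \geq 2d_1 = 2M'$, closing the argument.

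The crucial ingredient in the contradiction step is that $d_n > 0$ forces \emph{every} entry of $\dd$ to be strictly positive, which is precisely what prevents the pathological configuration where the mass of $\dd$ is concentrated entirely on $d_1$ and a single sibling $d_k$. Once this is noted, the rest is bookkeeping with the characterization of \Cref{multi realize chara}.
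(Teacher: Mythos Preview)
The paper does not supply its own proof of this statement: it is quoted as a known result from \cite{boesch_line_1976} in the preliminaries section and is used as a black box throughout. So there is no in-paper argument to compare against.

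Your verification via \Cref{multi realize chara} is correct. The parity condition is immediate, and your case split on whether $M' \le d_1-1$ or $M'=d_1$ is the right organizing principle. In the second case your chain of implications is sound: $M'=d_1$ forces some unchanged entry $d_k$ with $k\in\{2,\dots,n\}\setminus\{j\}$ to equal $d_1$; if additionally $\sum\dd = 2d_1$ held, then $\sum_{i\ge 2} d_i = d_1 = d_k$ would force all remaining $d_i$ with $i\in\{2,\dots,n\}\setminus\{k\}$ to vanish, contradicting the positivity of all entries (which follows from $d_n>0$ and monotonicity) once $n\ge 3$; and $n=2$ lands you back in Case~1 automatically. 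The parity step from $\sum\dd > 2d_1$ to $\sum\dd \ge 2d_1+2$ is clean. This is essentially the standard argument one would extract from Hakimi's characterization, and it is complete as written.
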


\subsection{(Multi)graphical sequences admitting two edge-disjoint spanning trees}

    The characterization of (multi)graphical sequences that admit a realization containing $k$ edge-disjoint spanning trees is identical for graphical and multigraphical sequences.
    For graphical sequences, this result was first established for the special case of $k=2$ by Kundu~\cite{kundu_disjoint_1974} through a constructive proof, and two years later generalized to arbitrary $k$ by Kleitman and Wang~\cite{kleitman_decomposition_1976}, also constructively. For multigraphical sequences, this statement was proven non-constructively by Gu, Hai and Liang~\cite{gu_multigraphic_2012}. We are primarily interested in the case $k=2$. Nevertheless, we state the full characterization for general $k$ below.
    
    \begin{theorem}[\cite{kundu_disjoint_1974,kleitman_decomposition_1976,gu_multigraphic_2012}]
    \label{thm:edt_76}\label{two spanning trees multi sequence}\label{thm:two-edst-kundu}
        A (multi)graphical sequence~$\dd =  (d_1, \dots, d_n)$ with~$n\geq 2$ admits a (multigraphical) realization with~$k\in\mathbb{N}$ edge-disjoint spanning trees if and only if\\ \begin{enumerate*} [label=(\roman*)]
            \item $d_n \geq k$, and
            \item $\displaystyle\sum_{i=1}^n d_i\geq 2\cdot k(n-1)$.
        \end{enumerate*} 
    \end{theorem}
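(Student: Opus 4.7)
The necessity direction is immediate and I would dispatch it in one paragraph: in any realization $G$ with $k$ edge-disjoint spanning trees $T_1,\dots,T_k$, each vertex has degree at least one in every $T_i$ and the trees are edge-disjoint, so every vertex has degree at least~$k$ in $G$, giving~(i); and since the $k$ trees contribute $k(n-1)$ pairwise distinct edges, $\sum d_i = 2|E(G)| \geq 2k(n-1)$, giving~(ii).

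For sufficiency I would treat the multigraphical and graphical cases separately, using different tools. In the multigraphical case, the plan is to invoke the Nash-Williams--Tutte theorem, which characterizes the existence of $k$ edge-disjoint spanning trees in a multigraph via a partition condition: for every partition of $V$ into $r$ parts, the number of inter-part edges must be at least $k(r-1)$. Starting from an arbitrary multigraphical realization of $\dd$ (which exists by~\Cref{multi realize chara}), I would argue that degree-preserving edge swaps, i.e., replacing a pair $\{uv,xy\}$ by $\{ux,vy\}$, can always be used to break a tightest violating partition without creating new violations. Here~(ii) serves as a global edge budget for relocation and~(i) ensures that no vertex becomes isolated in any part during the process; iterating yields a realization meeting the Nash-Williams--Tutte condition.

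In the graphical case, I would proceed by induction on~$n$ using a Havel--Hakimi-style laying off tailored to preserve the edge-disjoint spanning trees property. Given $\dd$ satisfying~(i)--(ii), I would pick the minimum-degree vertex~$v_n$ (with $d_n \geq k$ by~(i)), earmark~$k$ of its $d_n$ future neighbors as its tree-edges, and reduce to a residual sequence on $n-1$ vertices to which the inductive hypothesis applies. The resulting realization on $n-1$ vertices with~$k$ edge-disjoint spanning trees $T_1',\dots,T_k'$ is then extended by attaching~$v_n$ via its $d_n$ edges, routing one earmarked edge into each~$T_i$ to obtain $k$ edge-disjoint spanning trees of the full realization.

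The main obstacle is preserving condition~(ii) under the reduction in the graphical case when $d_n > k$: the residual degree sum drops by~$2 d_n$ while the threshold $2k(n-1)$ drops only by~$2k$, so $\sum d_i$ must have slack equal to $2(d_n-k)$ beyond the bound~(ii) — a property not directly implied by the hypotheses. Resolving this requires a careful choice of which~$d_n$ neighbors $v_n$ is attached to, likely via an exchange argument (as in Kundu's original work for $k=2$) that first rearranges the realization into a canonical form in which the high-degree vertices carry the reduction load, before applying the laying-off step. A secondary constraint is simultaneously preserving graphicality via the Erdős--Gallai inequalities of~\Cref{def: simple graphical}, which further restricts the admissible neighbor set. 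The multigraphical route via Nash-Williams--Tutte bypasses both difficulties entirely, which is presumably why the proof of Gu, Hai, and Liang is non-constructive while those of Kundu and of Kleitman--Wang are intricate but constructive.
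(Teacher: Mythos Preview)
The paper does not prove this theorem; it is quoted as a known result from the cited references. The closest the paper comes to a proof is the constructive $k=2$ case worked out later in \Cref{lem: simple constructing 2 edst in poly,lem: multi constructing 2 edst in poly - equiality,lem: multi constructing 2 edst in poly}, so any comparison is against those.

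Your graphical plan (induction on $n$, lay off the minimum entry, extend each smaller tree by one new edge) is exactly the paper's Case~1 of \Cref{lem: simple constructing 2 edst in poly}, and you correctly isolate the obstruction when $\sum d_i$ is tight and $d_n>k$. However, your proposed fix --- choosing the $d_n$ neighbors carefully or rearranging the realization by exchanges before laying off --- cannot work as stated: laying off $d_n$ drops the residual degree sum by exactly $2d_n$ irrespective of which neighbors are selected, so no amount of pre-rearrangement rescues condition~(ii). The paper's actual fix for $k=2$, $d_n=3$, tight sum (Case~2 of \Cref{lem: simple constructing 2 edst in poly}) uses a \emph{different} reduction: delete $d_n$ and decrement $d_1$ by one, so the sum drops by exactly $2k=4$; on reattachment, $v_n$ is connected to $v_1$ and then \emph{inserted on} (subdivides) an existing edge of $G'$, recovering degree~3 for $v_n$ while leaving all other degrees unchanged and giving one new edge to each tree. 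The subdivision trick, not neighbor selection, is what closes the gap.

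For the multigraphical case your Nash-Williams--Tutte plus edge-swap route is a genuinely different approach from the paper's. The paper again proceeds by direct induction (\Cref{lem: multi constructing 2 edst in poly - equiality,lem: multi constructing 2 edst in poly}): when the sum is tight it lays off $d_n=2$ via two applications of \Cref{laying off multi}, and when $d_n=3$ with tight sum it observes the sequence is in fact graphical and defers to the graphical argument; when the sum has slack it peels off a single edge between the two largest degrees and recurses on the sum. Your route is closer in spirit to the non-constructive Gu--Hai--Liang proof the paper cites; it is a legitimate strategy, but the step ``swaps can always break a tightest violating partition without creating new violations'' is the entire content of such a proof and is left as an assertion here.
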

                                                                                        
\subsection{TC-realizable graphs}
Finally, we provide an overview on some known necessary and sufficient conditions of TC-realizable (multi)graphs.
\begin{lemma}[\cite{hajnal_cure_1972}]\label{min number edges of tc graphs}
Let~$G$ be a TC-realizable (multi)graph on~$n$ vertices.
Then~$G$ has at least~$2n-4$ edges.
\end{lemma}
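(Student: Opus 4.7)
The case $n \leq 2$ is vacuous, so assume $n \geq 3$ and fix a simple proper labeling $\lambda$ witnessing $(G,\lambda) \in \textsf{TC}$. Let $e_{\min} = \{u,v\}$ be an edge of smallest label and $e_{\max} = \{u',v'\}$ one of largest label. The crucial observation, used pervasively in classical gossip proofs, is that no temporal path starting at a vertex $w \notin \{u,v\}$ can use $e_{\min}$: in such a path, $e_{\min}$ cannot be the first edge (its endpoints are $u$ and $v$, not $w$), so it is preceded by some edge $e'$ sharing an endpoint with it; by time-monotonicity $\lambda(e') \leq \lambda(e_{\min})$, hence equality by minimality of $\lambda(e_{\min})$, contradicting properness between the adjacent edges $e'$ and $e_{\min}$. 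A symmetric argument shows that no temporal path ending at a vertex $w \notin \{u',v'\}$ uses $e_{\max}$. Consequently, any $w \notin \{u,v\}$ reaches every other vertex through edges of $G - e_{\min}$, so $G - e_{\min}$ is connected, and analogously $G - e_{\max}$ is connected. In particular, $G$ admits a spanning tree $T_1$ avoiding $e_{\min}$ and a spanning tree $T_2$ avoiding $e_{\max}$.

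The remaining, and main, step is to show that $T_1$ and $T_2$ can be chosen so that $|T_1 \cap T_2| \leq 2$, from which $|E(G)| \geq |T_1 \cup T_2| = 2(n-1) - |T_1 \cap T_2| \geq 2n-4$ follows immediately. Following~\cite{hajnal_cure_1972}, the natural choice is to take $T_1$ to be a temporal earliest-arrival out-tree rooted at some $w \notin \{u,v\}$ (automatically avoiding $e_{\min}$) and $T_2$ to be a latest-departure in-tree into some $w' \notin \{u',v'\}$ (automatically avoiding $e_{\max}$); for $n \geq 5$ one may pick $w = w' \notin \{u,v,u',v'\}$. A shared edge in $T_1 \cap T_2$ must be traversed consistently in both trees, since otherwise splicing the corresponding $T_1$- and $T_2$-subpaths through it would yield a path contradicting the minimality built into one of the trees; temporal monotonicity then forces all shared edges to lie along a common temporal backbone from $w$ to $w'$, and a short case analysis caps their number at two.

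The main obstacle is precisely this last \emph{at most two} count: one has to inspect carefully how a $T_1$-path and a $T_2$-path can overlap while both remaining temporally valid and optimal in their respective senses. This is the technical heart of~\cite{hajnal_cure_1972}, and for the lemma as stated we appeal to that proof; the bound $|E(G)| \geq 2n-4$ is then immediate from the inequality displayed above, completing the argument for all $n$.
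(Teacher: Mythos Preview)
The paper does not give a proof of this lemma; it is quoted verbatim as a result of Hajnal, Milner and Szemer\'edi~\cite{hajnal_cure_1972} and used as a black box. There is therefore no ``paper's own proof'' to compare against.

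That said, your proposal is not a self-contained proof, and you acknowledge this yourself. The first paragraph is correct and standard: the minimality/properness argument indeed shows that $G-e_{\min}$ and $G-e_{\max}$ are connected. But from there on you only sketch. You claim that an earliest-arrival tree $T_1$ and a latest-departure tree $T_2$ can be chosen with $|T_1\cap T_2|\le 2$, and then in the final paragraph you explicitly write that this bound is ``the technical heart of~\cite{hajnal_cure_1972}'' and that you ``appeal to that proof''. Since $|E(G)|\ge |T_1\cup T_2|=2(n-1)-|T_1\cap T_2|$, the overlap bound is \emph{equivalent} to the statement you are trying to prove, so what you have deferred is precisely the lemma itself. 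The informal justification you offer for the overlap bound (``splicing \ldots\ would yield a path contradicting the minimality built into one of the trees'') is not an argument: an edge traversed outward in $T_1$ and inward in $T_2$ at the same label creates no contradiction with foremost/latest optimality, and nothing prevents many such coincidences a priori.

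A further issue is the attribution: the two-spanning-trees-with-small-overlap viewpoint is the structural analysis developed later (and exploited heavily in G\"obel et al.~\cite{gobel_label-connected_1991}), not the argument of~\cite{hajnal_cure_1972}. The original lower bound is obtained by a direct counting/induction on $n$ using the first and last calls, without ever constructing spanning trees. If you want to complete the proof rather than cite it, that is the route to follow; your current text reduces the lemma to an equally hard (indeed equivalent) statement and then cites the very reference the lemma is taken from.
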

\begin{lemma}[\cite{baker_gossips_1972}]\label{one edge towards EDSTs minus one}
Let~$G$ be a (multi)graph with two spanning trees that share at most one edge.
Then~$G$ is TC-realizable.
\end{lemma}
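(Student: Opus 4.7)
The plan rests on two classical tree-labeling primitives, both realizable by a single DFS. Given a rooted tree, an \emph{inbound labeling} (post-order: label an edge $(v,\mathrm{parent}(v))$ upon completion of $v$'s subtree) is proper and makes every vertex reach the root along an increasing temporal path; dually, an \emph{outbound labeling} (pre-order: label the edge $(v,c)$ on first descent into child $c$) is proper and makes the root reach every other vertex along an increasing temporal path. If $T_1$ and $T_2$ are edge-disjoint, I would fix any common root $r$ and apply an inbound labeling to $T_1$ using $\{1,\dots,n-1\}$ and an outbound labeling to $T_2$ using $\{n,\dots,2n-2\}$. For any ordered pair $(u,v)$, the path $u \to r$ in $T_1$ followed by $r \to v$ in $T_2$ is then a temporal path. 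Properness is immediate from the disjoint label ranges, and the edges of $G\setminus(T_1\cup T_2)$ are given fresh unused labels.

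The interesting case is $T_1 \cap T_2 = \{e^*\}$ with $e^* = (x,y)$. Here the single label on $e^*$ has to serve both as the ``final inbound step'' of $T_1$ and the ``first outbound step'' of $T_2$. The idea is to orchestrate both DFS orderings so that each assignment wants the \emph{same} value on $e^*$. Set $r := x$. In the inbound DFS on $T_1$, process $y$ as the \emph{last} child of $x$: then $e^*$ is the last edge labeled in $T_1$ and receives label $n-1$. In the outbound DFS on $T_2$, process $y$ as the \emph{first} child of $x$, but start the counter at $n-1$ rather than $n$: then $e^*$ again receives label $n-1$, and the remaining $n-2$ edges of $T_2$ get labels in $\{n,\dots,2n-3\}$, assigned in pre-order.

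Let $A_i$ be the vertex set of $y$'s subtree in $T_i$ and $B_i := V\setminus A_i$. Temporal connectivity follows from a case analysis on where $u$ and $v$ fall. The default route $u \to x \to v$ works whenever $u \in B_1$, since $u$ then arrives at $x$ at a time at most $n-2$, leaving room to continue in $T_2$ at labels at least $n-1$. The delicate subcase is $u \in A_1$ together with $v \in A_2$: such $u$ can reach $x$ only by traversing $e^*$ at time $n-1$, leaving no later opportunity to return to $y$ via $e^*$. The remedy is a detour: $u$ already passes through $y$ at some time strictly below $n-1$ on its inbound journey, so the path $u \to y$ (in $T_1$) followed by $y \to v$ (in $T_2$, using labels at least $n$) is a valid temporal path that avoids $x$ entirely. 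Properness on $T_1 \cup T_2$ holds because non-$e^*$ labels in $T_1$ lie in $[1,n-2]$ and non-$e^*$ labels in $T_2$ lie in $[n,2n-3]$, so only the shared edge $e^*$ carries the label $n-1$. Remaining edges of $G \setminus (T_1\cup T_2)$ get fresh labels at least $2n-2$. The main obstacle is precisely the rerouting subcase above: the single label on $e^*$ forbids its double use, and one must exploit the observation that every vertex of $A_1$ visits $y$ before reaching $x$ in $T_1$, providing the natural bypass.
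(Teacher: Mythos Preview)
Your proof is correct and uses the same pivot-labeling strategy that the paper recalls in Section~\ref{sec:algos} (originating from Baker--Shostak): an inbound labeling of $T_1$ towards a pivot, then an outbound labeling of $T_2$ away from it, with the shared edge sitting in between. The only difference is how the shared edge $e^*=xy$ is treated. The paper contracts $e^*$ into a single root $v_r$, labels $T_1\setminus e^*$ inbound to $v_r$, assigns $e^*$ the next fresh label $t_r+1$, and then labels $T_2\setminus e^*$ outbound from $v_r$; routing is then uniform (every $u$ reaches $\{x,y\}$ inside $T_1\setminus e^*$, optionally crosses $e^*$, then descends to $v$ inside $T_2\setminus e^*$), so no case analysis is needed. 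You instead root both trees at one endpoint $x$ and arrange the DFS orders so that $e^*$ is simultaneously the last inbound and the first outbound edge. This is an equally valid variant; the price is the rerouting argument for $u\in A_1$, $v\in A_2$. One cosmetic gap: your case split should also spell out $u\in A_1$, $v\in B_2$ (the inbound path ends at $e^*$ with label $n-1$, the outbound path to $B_2$ avoids $e^*$ and starts at a label $\ge n$), which you leave implicit.
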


\begin{definition}[Reformulation of Göbel et al.~\cite{gobel_label-connected_1991}]
A graph~$G$ on~$n$ vertices and~$2n-4$ edges is a~\emph{\cp graph} if~$G$ contains an induced cycle~$C$ of length~$4$ (called a~\emph{central cycle}) and two spanning trees that share exactly two edges, where both shared edges are from~$C$.
\end{definition}

This definition is equivalent to the one provided by Göbel et al.~\cite{gobel_label-connected_1991}, since no tree can contain more than three edges of the central cycle, and due to the number of edges, each edge of the cycle is contained in at least one of the two spanning trees. 
\begin{lemma}[\cite{gobel_label-connected_1991}]\label{c4 property}
Let~$G$ be a graph with~$n$ vertices and~$2n-4$ edges.
Then~$G$ is TC-realizable if and only if~$G$ is a~\cp graph.
\end{lemma}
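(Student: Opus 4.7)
The plan is to prove the two directions separately, starting with sufficiency. Suppose $G$ is a \cp graph with central $4$-cycle $C = v_1v_2v_3v_4$ and two spanning trees $T_1, T_2$ whose intersection consists of two edges of~$C$. I would construct a proper labeling in three phases. First, assign small labels to the edges of $T_1 \setminus E(C)$, propagating from the leaves of~$T_1$ inward toward~$C$, so that every non-cycle vertex reaches some cycle vertex via an early temporal path. Second, assign four consecutive middle labels to the edges of~$C$, in an order chosen so that properness is preserved at the attachment points with the outer trees and so that every cycle vertex reaches every other cycle vertex through~$C$. Third, assign large labels to $T_2 \setminus E(C)$, propagating outward from~$C$, so that every cycle vertex reaches every non-cycle vertex. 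Composing the three phases via the cycle then produces temporal paths between all ordered pairs.

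For necessity, assume $(G, \lambda)$ is temporally connected with $|E(G)| = 2n-4$. Fixing an arbitrary root~$r$, I would define two canonical spanning trees: $T_\text{out}$, containing, for each $v \neq r$, the first edge on some foremost temporal path from~$r$ to~$v$; and $T_\text{in}$, containing, for each $v \neq r$, the last edge on some foremost temporal path from~$v$ to~$r$. Each is a spanning tree, so together they consume $2(n-1) = 2n-2$ edges with multiplicity. Since $|E(G)| = 2n-4$, they must share exactly two edges. The remaining claim is that these two shared edges are edges of an induced $4$-cycle in~$G$. For this, I would consider the extremal edges $e_{\min}$ and $e_{\max}$ of~$\lambda$: by properness and the tightness of the edge count (via \Cref{min number edges of tc graphs}), both $e_{\min}$ and $e_{\max}$ are forced into $E(T_\text{out}) \cap E(T_\text{in})$, and their four endpoints must be pairwise distinct, as otherwise one could delete an edge while preserving temporal connectivity.

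The main obstacle will be the final step: showing that the four endpoints of $e_{\min}$ and $e_{\max}$ actually close up into an induced $4$-cycle rather than some longer or chorded configuration. I expect this to follow from a short case analysis exploiting that, by extremality, any temporal path between these four vertices can use $e_{\min}$ only as its first edge and $e_{\max}$ only as its last edge, which severely restricts how the four endpoints can be connected to one another and to the rest of~$G$. Ruling out the presence of chords will require invoking the edge-budget $2n-4$ one more time: any chord would create a second pair of spanning trees with smaller intersection, contradicting that $T_\text{out}$ and $T_\text{in}$ already share the minimum possible two edges. By contrast, the sufficiency direction reduces to a direct verification once the three-phase labeling is in place.
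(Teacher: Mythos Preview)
The paper does not prove this lemma; it is quoted from G\"obel et~al.\ without proof. For sufficiency, your three-phase labeling is essentially the pivot-labeling scheme the paper later recalls in \Cref{sec:algos} (crediting Baker--Shostak), so that direction is sound, modulo specifying the cycle labeling concretely (the paper uses two labels on opposite edge-pairs rather than four consecutive labels).

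The necessity direction has genuine gaps. First, your definitions of $T_{\text{out}}$ and $T_{\text{in}}$ are miswritten: the \emph{first} edge of a path from $r$ to $v$ is incident to $r$, so taken over all $v$ you collect at most $\deg(r)$ edges, not a spanning tree (and symmetrically for $T_{\text{in}}$); presumably you mean the last edge into $v$ and the first edge out of $v$. Second, and more seriously, even with the intended definitions your key structural claim is false. Take the $4$-cycle $v_1v_2v_3v_4$ with $\lambda(v_1v_2)=1$, $\lambda(v_2v_3)=4$, $\lambda(v_3v_4)=2$, $\lambda(v_4v_1)=3$; this is temporally connected, but with root $r=v_1$ the two foremost trees share $\{v_1v_2,\,v_4v_1\}$: these two edges are \emph{adjacent} (three endpoints, not four), and they do \emph{not} include $e_{\max}=v_2v_3$. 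So the shared edges need be neither $\{e_{\min},e_{\max}\}$ nor a matching, and the case analysis you build on that premise collapses. Third, your counting only gives $|T_{\text{out}}\cap T_{\text{in}}|\ge 2$, not equality; to get exactly two you would need every edge of $G$ to lie in $T_{\text{out}}\cup T_{\text{in}}$, but $T_{\text{out}}\cup T_{\text{in}}$ certifies only $r$-to-all and all-to-$r$ paths, not all ordered pairs, so you cannot simply invoke \Cref{min number edges of tc graphs} on $G$ minus an unused edge. G\"obel et~al.'s original argument is more delicate and does not run through a single fixed root; the identification of the central $C_4$ there comes from a global analysis of how information propagates in a minimum gossip scheme rather than from extremal labels.
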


Even though it is not explicitly stated~\cite{gobel_label-connected_1991}, the following generalization of~\cp graphs to multigraphs is also TC-realizable by the same labeling procedure used for~\cp graphs.\footnote{We will recall this labeling procedure in~\Cref{sec:algos} and provide an efficient algorithm.}

\begin{definition}
A multigraph~$G$ on~$n$ vertices and~$2n-4$ edges is a~\emph{\cp multigraph} if~$G$ contains an induced cycle~$C$ of length~$4$ (called a~\emph{central cycle}) and two spanning trees that share exactly two edges, where both shared edges are from~$C$.
\end{definition}
Here, an induced cycle in a multigraph~$G$ has the additional property that each edge of~$C$ exists exactly once in~$G$. 
Note that a~\cp graph is also a~\cp multigraph.

\begin{corollary}\label{c4 property multi}
Let~$G$ be a~\cp multigraph.
Then~$G$ is TC-realizable.
\end{corollary}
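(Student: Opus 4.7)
The plan is to verify that the labeling construction used by Göbel et al.~\cite{gobel_label-connected_1991} to establish \Cref{c4 property} in the simple-graph setting transfers verbatim to the multigraph setting, so that the corollary follows with essentially no new argument. First I would recall the labeling in detail. Given the central cycle $C = v_1 v_2 v_3 v_4$, say with shared edges $e = v_1v_2$ and $e' = v_3v_4$ lying in both spanning trees $T_1$ and $T_2$, one picks a pivot vertex (say $v_1$) and orients $T_1$ as an in-branching into $v_1$ and $T_2$ as an out-branching out of $v_1$. Labels are then assigned so that every $T_1$-path directed towards $v_1$ is non-decreasing, every $T_2$-path directed away from $v_1$ is non-decreasing, and all $T_2$-labels strictly exceed all $T_1$-labels. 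Any ordered pair $(u,w)$ is served by concatenating the $u \to v_1$ path in $T_1$ with the $v_1 \to w$ path in $T_2$; the four edges of the central cycle provide the ``pivoting'' that reconciles the labels of $e$ and $e'$ with both the in- and the out-branching constraints.

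Second, I would observe that this construction depends only on three structural ingredients of the input: (i) two spanning subtrees $T_1,T_2$ viewed as simple subgraphs, (ii) a $4$-cycle $C$ whose four edges are pairwise distinct and each appears exactly once in $G$ (so that each admits an unambiguous single label), and (iii) the fact that every edge of $G$ lies in $T_1 \cup T_2$. All three properties are present in a \cp multigraph by definition: spanning trees are simple by convention; the central cycle is required to be induced in the multigraph sense, i.e.\ each of its four edges exists exactly once in~$G$; and the edge count forces $E(G) = E(T_1) \cup E(T_2)$, since $|E(G)| = 2n-4 = (n-1)+(n-1)-2 = |E(T_1)|+|E(T_2)|-|E(T_1)\cap E(T_2)|$.

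Third, I would argue that the resulting labeling satisfies the two required properties. It is proper because adjacent edges receive distinct labels by construction inside each branching, and the $T_1$/$T_2$ label ranges are disjoint. It is temporally connecting because, as recalled above, temporal paths between any ordered pair live entirely inside $T_1 \cup T_2 \cup E(C) = E(G)$, which is combinatorially identical in the multigraph and simple-graph settings. Parallel edges in $G$, if present, are by the edge count edges of $T_1$ or $T_2$ not incident to the central cycle; they are handled by the standard in/out-branching labeling without interaction with the pivoting step.

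The only genuine subtlety, and hence the main thing to be careful about, is the status of the central cycle in the multigraph: if $C$ were not required to be induced, a second copy of, say, $e$ could end up carrying a label that breaks either the properness or the non-decreasing property at the pivot. This is precisely why the definition of a \cp multigraph forces $C$ to be induced in the multigraph sense. With that in hand, the construction of \cite{gobel_label-connected_1991} applies verbatim, and I would defer the explicit efficient implementation of this labeling to \Cref{sec:algos}, where it is given once and for all for both the graph and multigraph cases.
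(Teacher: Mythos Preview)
Your proposal is correct and matches the paper's approach: the paper likewise asserts that the labeling procedure of G\"obel et al.\ for \cp graphs applies unchanged to \cp multigraphs, and defers the explicit construction to \Cref{sec:algos}. Your additional justification (the edge-count argument forcing $E(G)=E(T_1)\cup E(T_2)$ and the role of the induced-cycle requirement) makes explicit what the paper leaves implicit, but the underlying argument is the same.
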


\section{Realizations with Two Spanning Trees Sharing at Most one Edge}\label{sec:twotrees}
In this section, we establish the following characterization of sequences that allow for a realization with two spanning trees that share (at most) one edge.
\begin{theorem}\label{two spanning trees}
Let~$\dd = (d_1, \dots, d_n)$ be a graphical sequence.
Then, $\dd$ admits a realization with two spanning trees that share at most one edge if and only if
$\sum_{i=1}^n d_i \geq 4(n-1)-2$ and (a)~$n\leq 2$ or (b)~$n > 2$, $d_{n-1} \geq 2$, and~$d_n \geq 1$.
\end{theorem}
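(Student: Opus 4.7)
The plan is to split the proof into the easy necessity direction and a case-based sufficiency argument, reducing to \Cref{thm:two-edst-kundu} whenever possible and inducting on $n$ only in the delicate case $\sum d_i = 4(n-1)-2$. For necessity, two spanning trees with at most one shared edge jointly cover at least $2(n-1)-1 = 2n-3$ distinct edges, forcing $\sum d_i \geq 4(n-1)-2$; connectivity gives $d_n \geq 1$; and if two vertices both had degree $1$, each would be a leaf in both trees, so their unique incident edges would both be shared, contradicting the fact that there is at most one shared edge (and, for $n\geq 3$, that a single edge cannot have both endpoints be leaves in a spanning tree). Hence $d_{n-1} \geq 2$.

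For sufficiency when $\sum d_i \geq 4(n-1)$ and $d_n \geq 2$, \Cref{thm:two-edst-kundu} gives two edge-disjoint spanning trees immediately. When $d_n = 1$, the sum bound forces $d_1 \geq 3$, so laying off $v_n$ onto $v_1$ via Hakimi's process (\Cref{def:havel hakimi laying off simple}) yields a residual $\dd^{*} = (d_1 - 1, d_2, \dots, d_{n-1})$ which is graphical by \Cref{lem:havel-hakimi-correctness}, has sum $\geq 4(n-2)$, and has minimum degree $\geq 2$; \Cref{thm:two-edst-kundu} applies to $\dd^{*}$, and re-attaching $v_n$ as a leaf of $v_1$ makes $\{v_n, v_1\}$ the unique shared edge of the two trees.

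When $\sum d_i = 4(n-1) - 2$ and $d_n \geq 2$, I would induct on $n$, treating the base cases $n \in \{3, 4\}$ directly (namely $(2,2,2) = K_3$ and $(3,3,2,2) = C_4 + \text{chord}$), and splitting the inductive step on $d_n$. If $d_n = 2$, laying off $v_n$ onto $v_1, v_2$ yields a graphical residual (by \Cref{lem:havel-hakimi-correctness}) with sum $4(n-2)-2$ and minimum degree $\geq 2$ (since the sum bound forces $d_1, d_2 \geq 3$); the inductive hypothesis gives $G'$ with two spanning trees sharing at most one edge, and re-attaching $v_n$ with $\{v_n, v_1\}$ in one tree and $\{v_n, v_2\}$ in the other preserves the sharing count. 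If $d_n \geq 3$, a counting argument forces $d_n = 3$ and $d_1 \leq n-2$; I would then form $\dd' = (d_1, \dots, d_{n-2}, d_{n-1} - 1)$, verify graphicality via \Cref{realizable degree 1 and bound}, and apply the inductive hypothesis to obtain $G'$ with two spanning trees $T'_1, T'_2$ whose union equals $E(G')$ and which share exactly one edge; picking any edge $\{a, b\} \in E(G')$ with $a, b \neq v_{n-1}$, deleting it, and introducing a new vertex $v_n$ adjacent to $a$, $b$, and $v_{n-1}$ then realizes $\dd$.

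The main obstacle will be re-routing the two spanning trees in this last sub-case, particularly when $\{a, b\}$ happens to be the shared edge of $(T'_1, T'_2)$: both trees break simultaneously and must be repaired using the three new edges incident to $v_n$. The trick is to distribute $\{v_n, a\}$, $\{v_n, b\}$, and $\{v_n, v_{n-1}\}$ between $T_1$ and $T_2$ according to which component of $T'_i \setminus \{a, b\}$ contains $v_{n-1}$, so that both trees are restored while introducing exactly one new shared edge; in the remaining cases (where $\{a, b\}$ lies in only one of the two trees) the re-routing is straightforward and adds no new shared edge.
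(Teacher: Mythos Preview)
Your proposal is correct and tracks the paper's scaffolding closely: necessity via the edge-count bound and the two-leaf argument, sufficiency via \Cref{thm:two-edst-kundu} when $\sum d_i \geq 4(n-1)$ and $d_n\geq 2$, via laying off the unique degree-$1$ vertex when $d_n=1$, and via induction on $n$ in the tight case $\sum d_i = 4(n-1)-2$ with $d_n\geq 2$.  Two small slips: you silently skip the $n\le 2$ verification (the paper dispatches $(0)$, $(0,0)$, $(1,1)$ explicitly), and for the graphicality of your $\dd'$ in the $d_n=3$ branch you should invoke \Cref{realizable if dn geq 3} rather than \Cref{realizable degree 1 and bound}, since $d'_{n'}=d_{n-1}-1=2$.

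The one genuine divergence is your treatment of the sub-case $\sum d_i = 4(n-1)-2$, $d_n=3$.  The paper's \Cref{minus two dn eq 3} splits further on $d_2=3$ versus $d_2\geq 4$, hard-codes realizations for $n\in\{6,7,8\}$, and in both branches attaches a wheel-like gadget (a vertex joined to every vertex of a fresh cycle) to a smaller graph coming either from the fixed sequence $(3^6)$ or from \Cref{tc if deg 1}.  Your reduction is more uniform: drop $d_n$, decrement $d_{n-1}$, land in the $d'_{n'}=2$ branch of the \emph{same} induction, and then recover $\dd$ by subdividing an edge $\{a,b\}$ of $G'$ disjoint from $v_{n-1}$ and joining the subdivision vertex to $v_{n-1}$.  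This buys you a single inductive loop with no ad~hoc base graphs, at the cost of the tree re-routing you correctly flag.  That cost is in fact avoidable: since $|E(G')|=2n'-3$ forces $T'_1\cup T'_2=E(G')$ with \emph{exactly one} shared edge, and $v_{n-1}$ has degree~$2$ in $G'$, there are at least $2n-8\geq 4$ edges that avoid both $v_{n-1}$ and the shared edge (for $n\geq 6$).  Choosing $\{a,b\}$ among those makes the re-routing the ``easy'' case throughout, and the shared-edge analysis you sketch becomes unnecessary.
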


First, we consider the special cases of~$n\leq 2$.
To this end, note that~$\{(0),(0,0),(1,1)\}$ are exactly the graphical sequences of length at most~$2$.
Each of these sequences has a unique realization and~$(0,0)$ is the only sequence for which the unique realization is not connected.
Hence for $\dd = (0,0)$, 
    there is no realization with two spanning trees that share at most one edge and 
    $\sum\dd = 0 < 2 = 4(n-1)-2$.
For~$\dd\in \{(0),(1,1)\}$,  
    $\sum\dd \geq 4(n-1)-2$ and 
    there is no realization with two spanning trees that share at most one edge.
This proves the equivalence for part (a) of \Cref{two spanning trees}.

\begin{figure}[t]
    \centering
    \includegraphics[width=0.9\linewidth]{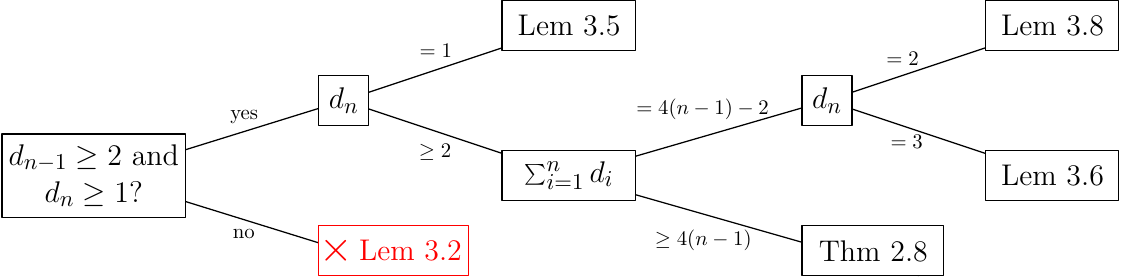}
    \caption{A guideline for the branching cases in the proof of \Cref{two spanning trees}.}
    \label{fig:Overview section EDST-1}
\end{figure}
By the above, we may assume $n>2$ and proceed to show the forward direction for part (b) of \Cref{two spanning trees}.
\Cref{fig:Overview section EDST-1} provides a guideline for the steps of the proof of this case.

\begin{lemma}\label{no two trees if small degree}

Let~$\dd = (d_1, \dots, d_n)$ be a graphical sequence with~$n>2$.
If~$\dd$ admits a realization with two spanning trees that share at most one edge, then~$\sum_{i=1}^n d_i \geq 4(n-1)-2$, $d_{n-1} \geq 2$, and~$d_n \geq 1$.
\end{lemma}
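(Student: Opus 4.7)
The plan is to verify each of the three conclusions directly from the existence of two spanning trees $T_1, T_2$ in $G$ that share at most one edge, treating them in roughly increasing order of subtlety.

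First I would handle the edge-count bound by simply counting: each of $T_1$ and $T_2$ has $n-1$ edges, and by hypothesis $|E(T_1) \cap E(T_2)| \leq 1$, so $|E(G)| \geq |E(T_1) \cup E(T_2)| \geq 2(n-1) - 1 = 2n-3$. Since $\sum_{i=1}^n d_i = 2|E(G)|$, this immediately yields $\sum_{i=1}^n d_i \geq 4n-6 = 4(n-1)-2$. For the bound $d_n \geq 1$, I would note that in any spanning tree of a graph on $n \geq 2$ vertices every vertex has degree at least one; so each vertex has degree at least one in $T_1$ (and in $G$), giving $d_n \geq 1$.

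The interesting case is $d_{n-1} \geq 2$, which I plan to prove by contradiction. Suppose $d_{n-1} \leq 1$; then there exist two distinct vertices $u, v$ with $\deg_G(u), \deg_G(v) \leq 1$, and by the previous paragraph both degrees equal exactly $1$. Let $e_u$ and $e_v$ be the unique edges of $G$ incident to $u$ and $v$, respectively. Since $u$ must be connected to the rest of the tree in both $T_1$ and $T_2$, and its only incident edge in $G$ is $e_u$, the edge $e_u$ lies in $E(T_1) \cap E(T_2)$; by the same argument $e_v \in E(T_1) \cap E(T_2)$.

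The main (and only slightly tricky) step is ruling out the degenerate possibility $e_u = e_v$. If the two edges coincide, then $e_u = e_v = uv$ is the unique edge incident to either endpoint, so in $T_1$ the component containing $u$ is precisely $\{u,v\}$; but $n > 2$ forces $T_1$ to have another vertex, contradicting that $T_1$ is spanning and connected. Hence $e_u \neq e_v$, so $|E(T_1) \cap E(T_2)| \geq 2$, contradicting the hypothesis. This finishes the proof; the only place where the assumption $n > 2$ is actually needed is in excluding the collapsed case $e_u = e_v$.
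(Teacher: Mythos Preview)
Your proof is correct and follows essentially the same approach as the paper: an edge count via inclusion--exclusion for the sum bound, the observation that every vertex of a spanning tree has degree at least one for $d_n\ge 1$, and the argument that two degree-one vertices force two distinct edges into $E(T_1)\cap E(T_2)$ (with the degenerate case $e_u=e_v$ ruled out using $n>2$). The paper phrases the last part as a contrapositive and splits into the cases ``$uv$ is an edge'' versus ``$uv$ is not an edge,'' but this is only a cosmetic difference from your presentation.
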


\begin{proof}
Note that each graph on~$n > 2$ vertices needs at least~$2(n-1) - 1$ edges to have two spanning trees that share at most one edge.
This immediately implies that~$\sum \dd \geq 4(n-1)-2$.

It remains to show that~$d_{n-1} < 2$ or~$d_n = 0$.
We show this statement via contraposition.

If~$d_n = 0$, then each realization~$G$ of~$\dd$ contains an isolated vertex.
Since~$n>2$, $G$ is not connected and does not even have a single spanning tree.
This implies that~$d_n \geq 1$.

If~$d_n = d_{n-1} = 1$, then each realization~$G$ of~$\dd$ contains two degree-1 vertices~$v_1$ and~$v_2$.
Let~$e_1$ denote the single edge of~$G$ incident with~$v_1$ and let~$e_2$ denote the single edge of~$G$ incident with~$v_2$.
In the case that~$v_1v_2$ is an edge of~$G$, this edge forms a connected component, which implies that $G$ is not connected since $n>2$. 
Hence, $G$ admits no spanning tree.
In the case that~$v_1v_2$ is not an edge of~$G$, every spanning tree of~$G$ must contain both edges~$e_1$ and~$e_2$, since~$v_1$ and~$v_2$ have degree 1.
By the fact that~$e_1 \neq e_2$, this implies that~$G$ does not admit two spanning trees that share at most one edge.
Consequently, $d_{n-1}\geq 2$.
\end{proof}

\subsection{Realizability} Now, we consider the backward direction for part (b) of~\Cref{two spanning trees}.
That is, we will show the following.
\begin{lemma}
Let~$\dd = (d_1, \dots, d_n)$ be a graphical sequence with~$\sum_{i=1}^n d_i \geq 4(n-1)-2$, $n>2$, $d_{n-1} \geq 2$, and~$d_n \geq 1$.
Then~$\dd$ admits a realization with two spanning trees that share at most one edge.
\end{lemma}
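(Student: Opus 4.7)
The plan is to reduce every instance to Kundu's theorem (\Cref{thm:two-edst-kundu}) applied to a derived degree sequence on $n-1$ vertices, case-splitting on whether the edge count $\sum d_i$ is tight at $4(n-1)-2$ or strictly larger, and on whether $d_n=1$ or $d_n\geq 2$.

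When $\sum d_i \geq 4(n-1)$ and $d_n \geq 2$, \Cref{thm:two-edst-kundu} directly yields a realization with two edge-disjoint spanning trees, which share zero edges. When $d_n=1$ (whether or not the edge count is tight), I would apply the laying-off procedure (\Cref{def:havel hakimi laying off simple}) to $v_n$, yielding the residual $\dd'=(d_1-1,d_2,\ldots,d_{n-1})$, which is graphical by \Cref{lem:havel-hakimi-correctness}. A short pigeonhole argument on $\sum_{i<n} d_i \geq 4n-7$ together with $d_{n-1}\geq 2$ forces $d_1\geq 3$, so $\dd'$ has minimum degree at least $2$ and sum at least $4(n-2)$, matching Kundu's conditions. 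Lifting the resulting realization $G'$ and its edge-disjoint spanning trees $T_1',T_2'$ by adding $v_n$ and the edge $v_n v_1$ extends both trees by this one edge, which becomes their unique shared edge.

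The delicate remaining case is $\sum d_i = 4(n-1)-2$ with $d_n\geq 2$; here only $2n-3$ edges are available, so any two spanning trees sharing at most one edge must in fact share exactly one and together cover $E(G)$. My plan is a \emph{subdivision reduction}: consider $\dd^-:=(d_1,\ldots,d_{n-1})$ on $n-1$ vertices, of sum $4((n-1)-1)$ and minimum $d_{n-1}\geq 2$. Provided $\dd^-$ is graphical (which I would verify using \Cref{realizable if dn geq 3}), Kundu's theorem gives a realization $G^-$ whose two edge-disjoint spanning trees $T_1^-,T_2^-$ partition $E(G^-)$. Picking any edge $xy\in T_1^-$ and subdividing it with a new vertex $v_n$ of degree $2$, the spanning trees $T_1:=(T_1^-\setminus\{xy\})\cup\{v_n x, v_n y\}$ and $T_2:=T_2^-\cup\{v_n x\}$ of the resulting realization of $\dd$ share exactly the edge $v_n x$.

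The main obstacle is that this subdivision only produces realizations with $d_n=2$, and that $\dd^-$ can fail to be graphical precisely when $d_1=n-1$ (no room for a universal vertex on $n-1$ vertices) or when $d_4$ is too small to invoke \Cref{realizable if dn geq 3}. For the universal-vertex configuration $d_1=n-1$, I would construct $G$ directly as a star centered at $v_1$ plus any tree $T'$ on $V\setminus\{v_1\}$ realizing $(d_2-1,\ldots,d_n-1)$ (which is tree-realizable because every entry is at least $1$ and the sum equals $2(n-2)$); the star together with $T'\cup\{v_1 v_j\}$ are two spanning trees sharing only $v_1 v_j$. For the remaining configurations, most notably $d_n\geq 3$ where $\dd$ has no degree-$2$ vertex to introduce, I would fall back on a \emph{contraction reduction}: pick two low-degree vertices $v_a,v_b$ with $d_a+d_b\leq n$ (which exist by a counting argument since $\sum d_i=4n-6$ rules out all degrees being close to $n$), merge them into a single vertex of degree $d_a+d_b-2$ to form $\dd''$ on $n-1$ vertices, check graphicality, apply Kundu to $\dd''$, and uncontract by partitioning the merged vertex's neighborhood between $v_a$ and $v_b$ and adding the edge $v_a v_b$ to both resulting spanning trees as their unique shared edge.
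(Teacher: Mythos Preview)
Your handling of the cases $\sum d_i\ge 4(n-1)$ with $d_n\ge 2$ and of $d_n=1$ is correct and matches the paper's argument exactly (direct application of \Cref{thm:two-edst-kundu}, respectively laying off $d_n=1$ and then applying \Cref{thm:two-edst-kundu}).

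The gap is in the remaining case $\sum d_i=4(n-1)-2$ with $d_n\ge 2$. Both of your reductions---the subdivision reduction $\dd^-=(d_1,\dots,d_{n-1})$ and the contraction reduction $\dd''$---require the derived sequence on $n-1$ vertices to be \emph{graphical}, and you only sketch how this would be verified. In fact neither reduction works for $\dd=(4,4,4,2,2,2)$: here $n=6$, $\sum\dd=18=4(n-1)-2$, $d_1=4<n-1$, and $d_n=2$. The subdivision reduction gives $\dd^-=(4,4,4,2,2)$, which is \emph{not} graphical (Erd\H{o}s--Gallai fails at $r=3$: $12>6+2+2$). Your star construction does not apply since $d_1\ne n-1$. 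And every admissible choice for the contraction reduction (merging two degree-$2$ vertices, or a degree-$4$ and a degree-$2$ vertex---the pair $(4,4)$ violates $d_a+d_b\le n$) again yields $\dd''=(4,4,4,2,2)$, so that route is blocked as well. Thus none of your three fallbacks covers this instance.

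The paper avoids this obstruction by \emph{not} trying to jump directly to \Cref{thm:two-edst-kundu} on a truncated sequence. Instead it lays off $d_n=2$ in the Havel--Hakimi sense (\Cref{def:havel hakimi laying off simple}), which by \Cref{lem:havel-hakimi-correctness} always preserves graphicality, and inducts on $n$: the residual $\dd'$ again satisfies $\sum\dd'=4(n'-1)-2$ with $d'_{n'}\in\{2,3\}$, so one recurses (when $d'_{n'}=2$) or invokes the separately proved $d_n=3$ case (\Cref{minus two dn eq 3}). The crucial difference is that Havel--Hakimi decrements $d_1,d_2$ by one each, which is exactly what makes the Erd\H{o}s--Gallai inequalities survive the passage to $n-1$ vertices---whereas simply deleting the entry $d_n=2$ does not.
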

First, recall~\Cref{thm:two-edst-kundu}, for which the following is a corollary.
\begin{corollary}[\cite{kleitman_decomposition_1976}]
Let~$\dd = (d_1, \dots, d_n)$ be a graphical sequence with $\sum_{i=1}^n d_i \geq 4(n-1)$, $n > 2$, and $d_n \geq 2$. 
Then~$\dd$ admits a realization that has two edge-disjoint spanning trees.
\end{corollary}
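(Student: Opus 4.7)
The statement is a direct instantiation of \Cref{thm:two-edst-kundu} (originally due to Kundu and to Kleitman--Wang) at the particular value $k=2$, so the plan is simply to verify that both hypotheses of that theorem hold with $k=2$ and invoke it.

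More concretely, I would first remind the reader of the two conditions in \Cref{thm:two-edst-kundu}, namely that a graphical sequence $\dd=(d_1,\dots,d_n)$ with $n\geq 2$ admits a realization with $k$ edge-disjoint spanning trees if and only if (i)~$d_n\geq k$ and (ii)~$\sum_{i=1}^n d_i \geq 2k(n-1)$. Then I would set $k:=2$ and observe that the hypotheses of the corollary are exactly these two conditions in this case: the assumption $d_n\geq 2$ is condition~(i) with $k=2$, and the assumption $\sum_{i=1}^n d_i \geq 4(n-1)=2\cdot 2\cdot (n-1)$ is condition~(ii) with $k=2$. The additional hypothesis $n>2$ is stronger than the $n\geq 2$ requirement of the theorem, so it is harmlessly absorbed.

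Since $\dd$ is graphical and both conditions (i) and (ii) hold for $k=2$, \Cref{thm:two-edst-kundu} immediately guarantees the existence of a realization of $\dd$ that contains two edge-disjoint spanning trees, which is exactly the conclusion of the corollary.

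\textbf{Main obstacle.} There is essentially no obstacle: the corollary is a one-line specialization of a theorem already stated in the preceding paragraph, and the only thing to check is the arithmetic identity $2k(n-1)=4(n-1)$ for $k=2$. The reason for stating it as a separate corollary (rather than simply invoking \Cref{thm:two-edst-kundu} wherever needed) is to have a convenient form that exactly matches the setting used later in the section, where two edge-disjoint spanning trees are the object of interest.
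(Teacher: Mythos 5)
Your proposal is correct and matches the paper exactly: the paper states this corollary as an immediate specialization of \Cref{thm:two-edst-kundu} to $k=2$, with no further argument needed beyond checking that $d_n\geq 2$ and $\sum_{i=1}^n d_i\geq 4(n-1)=2\cdot 2\cdot(n-1)$ are conditions (i) and (ii) of that theorem. Nothing is missing.
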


Hence, it remains to consider (i)~$\sum\dd \geq 4(n-1) - 2$, $d_{n-1}\geq 2$, and~$d_n=1$, and (ii)~$\sum\dd = 4(n-1) - 2$ and~$d_n \geq 2$.
We first consider the former. 

\begin{lemma}\label{tc if deg 1}
Let~$\dd = (d_1, \dots, d_n)$ be a graphical sequence with $\sum_{i=1}^n d_i \geq 4(n-1) - 2$, $n > 2$, $d_{n-1} \geq 2$, and~$d_n = 1$. 
Then~$\dd$ admits a realization with two spanning trees that share one edge.
\end{lemma}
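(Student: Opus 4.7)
The plan is to peel off the degree-1 vertex $v_n$ via the Havel--Hakimi laying-off process, invoke the Kundu--Kleitman theorem (\Cref{thm:two-edst-kundu}) on the residual sequence, and then reattach $v_n$. Since $v_n$ will have degree $1$ in the final graph, the unique edge incident to it must lie in every spanning tree, and so this edge becomes the only edge shared between the two spanning trees we construct.

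First, I would establish the key inequality $d_1 \geq 3$. Indeed, if $d_1 \leq 2$, then $d_i \leq 2$ for all $i \leq n-1$, so $\sum_{i=1}^n d_i \leq 2(n-1) + 1 = 2n-1$, which contradicts $\sum_{i=1}^n d_i \geq 4(n-1)-2$ whenever $n > 2$.

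Next, I would apply the laying-off process of \Cref{def:havel hakimi laying off simple} to the entry $d_n$ (using the case $d_n = 1 < n$), which attaches $v_n$ to $v_1$ and yields the residual sequence $\dd' := (d_1 - 1, d_2, \ldots, d_{n-1})$. By \Cref{lem:havel-hakimi-correctness}, $\dd'$ is a graphical sequence on $n' := n-1$ vertices. I would then verify the hypotheses of \Cref{thm:two-edst-kundu} with $k = 2$: the minimum entry is $\min(d_1 - 1, d_{n-1}) \geq 2$ by the preceding paragraph together with $d_{n-1} \geq 2$, while $\sum \dd' = \sum \dd - 2 \geq 4(n-1) - 4 = 4(n'-1)$. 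Consequently, $\dd'$ admits a realization $G'$ with two edge-disjoint spanning trees $T_1$ and $T_2$. (One can check by a short case analysis that the hypotheses of the current lemma are infeasible for $n \in \{3,4\}$, so in fact $n \geq 5$ and $n' \geq 4$; thus no degenerate case of \Cref{thm:two-edst-kundu} is ever invoked.)

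Finally, let $v$ be the vertex of $G'$ whose degree equals $d_1 - 1$. I would define $G$ by adding the vertex $v_n$ to $G'$ together with the edge $v_n v$. Then $G$ realizes $\dd$ (the degree of $v$ rises from $d_1-1$ to $d_1$, all others are unchanged, and $v_n$ has degree $1$), and the two subgraphs $T_1 \cup \{v_n v\}$ and $T_2 \cup \{v_n v\}$ are spanning trees of $G$ sharing exactly the edge $v_n v$, as required. The only nontrivial step is the initial bound $d_1 \geq 3$, which unlocks the minimum-degree hypothesis of \Cref{thm:two-edst-kundu} after the laying-off; once that is in hand, the rest of the argument is bookkeeping.
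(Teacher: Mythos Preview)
Your proposal is correct and follows essentially the same approach as the paper's proof: lay off the degree-$1$ entry, verify that the residual sequence meets the hypotheses of \Cref{thm:two-edst-kundu} (minimum degree~$\geq 2$ via $d_1\geq 3$ and $d_{n-1}\geq 2$, and degree sum~$\geq 4(n'-1)$), and then reattach $v_n$ to a vertex of degree $d_1-1$, letting the new edge be the single shared edge of the two extended spanning trees. The only cosmetic difference is that the paper derives $d_1\geq 3$ from parity together with $d_{n-1}\geq 2$ and $d_n=1$, whereas you use the lower bound on $\sum\dd$; both arguments are valid.
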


\begin{proof}
Since~$\dd$ is graphical, $\sum\dd$ is even.
Hence, $d_1 > 2$, since~$d_n = 1$ and~$d_{n-1}\geq 2$.
This further implies that~$n \geq 4$, as otherwise~$d_1 \geq 3 = n$.

Consider the residual sequence~$\dd'$ after laying off entry~$d_n$ in~$\dd$.
Recall that this sequence is obtained by removing~$d_n$ from~$\dd$ and decreasing the value of the first~$d_n = 1$ entries of~$\dd$ by 1 each (and reordering the entries).
Due to~\Cref{lem:havel-hakimi-correctness}, $\dd'$ is graphical.
Hence, $\dd' = (d_1', \dots, d_{n-1}')$ is a graphical sequence of length~$n' := n-1$ that fulfills~$\sum\dd' \geq 4(n-1) - 2 - 2 = 4(n-2) = 4(n'-1)$ and~$d'_{n'}=\min(d_1-1,d_{n-1}) \geq 2$, since~$d_1 \geq 3$.
This implies that there is a realization~$G'$ for~$\dd'$ that contains two edge-disjoint spanning trees~$T'_1$ and~$T'_2$ according to~\Cref{thm:two-edst-kundu}.

Now consider the graph~$G$ obtained from~$G'$ by adding a degree-1 neighbor~$v_n$ to an arbitrary vertex~$v$ of~$G'$ of degree~$d_1-1$.
Then, $G$ is a realization of~$\dd$.
Moreover, $G$ contains the spanning trees~$T_1 := T'_1 \cup vv_n$ and~$T_2 := T'_2 \cup vv_n$ that only share the edge~$vv_n$, since~$T'_1$ and~$T'_2$ are edge-disjoint.
\end{proof}

Now, we consider~$\sum\dd = 4(n-1) - 2$ and~$d_n \geq 2$.
Note that these sequences are guaranteed to fulfill~$d_n \in \{2,3\}$, as otherwise, $\sum \dd \geq 4 n > 4(n-1) - 2$.
We first prove the case of~$d_n=3$ in \Cref{minus two dn eq 3}, which is needed for the subsequent case~$d_n=2$ in \Cref{minus two dn eq 2}.
 
\begin{lemma}\label{minus two dn eq 3}
Let~$\dd = (d_1, \dots, d_n)$ be a graphical sequence with $\sum_{i=1}^n d_i = 4(n-1) - 2$ and~$d_n = 3$. 
Then~$\dd$ admits a realization with two spanning trees that share one edge.
\end{lemma}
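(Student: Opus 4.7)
Since $d_n = 3$ and the sequence is non-increasing, every $d_i \ge 3$; combined with $\sum d_i = 4n-6$ this forces $n \ge 6$ and $d_1 \le (4n-6) - 3(n-1) = n-3$, so in particular $d_1 + 1 \le n-2$ and $d_4 \ge 3$.

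My plan is to transfer the problem to Kundu's theorem via the slightly larger ``bumped'' sequence $\dd^+ := (d_1+1, d_2+1, d_3, \ldots, d_n)$. It has sum exactly $4(n-1)$, maximum entry at most $n-2$, minimum entry $3$, and fourth entry at least $3$; together with $n > 4$, this matches the hypotheses of \Cref{realizable if dn geq 3}, which certifies that $\dd^+$ is graphical. Then \Cref{thm:edt_76} with $k=2$ yields a realization $G^+$ of $\dd^+$ with two edge-disjoint spanning trees $T_1$ and $T_2$. The intended realization of $\dd$ is $G := G^+ - v_1 v_2$, since deleting precisely this edge is the unique operation that undoes the bumping of $d_1$ and $d_2$. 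Assuming $v_1v_2 \in E(G^+)$, it belongs to exactly one of $T_1, T_2$, say $T_1$; its removal splits $T_1$ into two components $A, B$, and since $T_2$ is spanning, some edge $e \in T_2$ crosses the cut $(A, B)$. Setting $T_1' := (T_1 - v_1v_2) \cup \{e\}$, the pair $(T_1', T_2)$ forms two spanning trees of $G$ whose intersection is exactly $\{e\}$, as required.

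The main obstacle is ensuring that some realization $G^+$ of $\dd^+$ with two edge-disjoint spanning trees actually contains the edge $v_1v_2$. If the initial Kundu realization does not, I would perform a two-switch: using $\deg_{G^+}(v_1), \deg_{G^+}(v_2) \ge 4$ together with the bound $d_1, d_2 \le n-3$, a short counting argument produces $u \in N_{G^+}(v_1) \setminus (N_{G^+}(v_2) \cup \{v_2\})$ and $w \in N_{G^+}(v_2) \setminus (N_{G^+}(v_1) \cup \{v_1\})$ with $u \ne w$ and $uw \notin E(G^+)$, so that the swap $\{v_1u, v_2w\} \leftrightarrow \{v_1v_2, uw\}$ introduces the missing edge (in degenerate cases where $N_{G^+}(v_1) \subseteq N_{G^+}(v_2) \cup \{v_2\}$, a chained two-switch that first breaks this inclusion can be used). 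The delicate part is to carry the two edge-disjoint spanning-tree structure through the swap: the natural strategy is to choose $v_1u \in T_1$ and $v_2w \in T_2$ and repair each tree via a single fundamental-cycle exchange using the newly available edges $v_1v_2$ and $uw$. Verifying that such exchanges always succeed, or alternatively reworking the constructive proof of \Cref{thm:edt_76} so that the laying-off order directly forces $v_1v_2$ to be placed as an edge of $G^+$, is where the bulk of the technical work lies.
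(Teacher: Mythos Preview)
Your outline is conceptually pleasant and genuinely different from the paper's proof: the paper proceeds by an explicit case split on whether $d_2=3$ or $d_2\ge 4$, handling small $n$ by hand and otherwise attaching a wheel-type gadget to a realization of a shorter sequence (built via \Cref{tc if deg 1} or \Cref{thm:edt_76}). Your ``bump two entries, invoke Kundu, then delete one edge'' strategy would be much slicker if it closed; the reduction to \Cref{realizable if dn geq 3} and the tree-repair step after deleting $v_1v_2$ (using that $E(G^+)=T_1\sqcup T_2$ since $|E(G^+)|=2(n-1)$) are both correct.

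The gap you flag, however, is exactly the heart of the matter, and your proposed fixes do not close it. A two-switch $\{v_1u,v_2w\}\leftrightarrow\{v_1v_2,uw\}$ need not preserve the existence of two edge-disjoint spanning trees: if you pick $v_1u\in T_1$ and $v_2w\in T_2$ and try to repair each tree by a fundamental-cycle exchange with one of the new edges, there is no reason why $v_1v_2$ or $uw$ should cross the cut that removing $v_1u$ creates in $T_1$ (that depends on which side of $T_1-v_1u$ the vertices $v_2$ and $w$ lie), and symmetrically for $T_2$. Your fallback of ``reworking the constructive proof of \Cref{thm:edt_76} so that the laying-off order forces $v_1v_2\in E(G^+)$'' is not an argument either; the construction in that proof lays off the \emph{smallest} degree, so the edge pattern around the two top-degree vertices is only determined deep inside the recursion and is not under your control. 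Even the preliminary counting step for the switch is incomplete: when $N_{G^+}(v_1)\subseteq N_{G^+}(v_2)$ (which is possible whenever $d_1=d_2$), the single switch does not exist and you defer to an unspecified ``chained'' procedure. As written, the proposal is an outline with the central difficulty still open; to turn it into a proof you would need a self-contained construction of a realization of $\dd^+$ with two edge-disjoint spanning trees \emph{and} the edge $v_1v_2$ present, which is essentially as hard as the original lemma.
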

\begin{proof}
We will show that~$n\geq 6$ in this case.
To this end, we provide a lower bound on the number of entries of value~$3$ in~$\dd$.

\begin{claim}\label{claim number of threes}

Let~$\ell$ be the smallest index of~$[1,n]$ for which~$d_\ell = 3$. 
Then, the number of entries of value~$3$ in~$\dd$ is~$6 + \sum_{i=1}^{\ell-1} (d_i - 4)$.
\end{claim}

\begin{claimproof}
In other words we want to show that~$n$ equals~$\ell - 1 + 6 + \sum_{i=1}^{\ell-1} (d_i - 4)$.
Since~$\sum\dd = 4(n-1)-2$, $\sum\dd = \sum_{i=1}^{\ell-1} d_i  + (n-(\ell-1))\cdot 3 = 4(n-1)-2 = 4(n-6) + 3\cdot 6$.
This implies that~$\ell \leq n-5$ and moreover, that~$\sum_{i=1}^{\ell-1} d_i + 3\cdot (n-(\ell-1) - 6) = 4(n-6)$.
Thus, $\sum_{i=1}^{\ell-1} d_i = 3(\ell-1) + n-6 = 4(\ell-1) + (n-(\ell-1)-6)$, which leads to~$\sum_{i=1}^{\ell-1} (d_i - 4) = (n-(\ell-1)-6)$.
Hence, the statement holds.
\end{claimproof}

Note that this implies that~$d_1 < n-1$, as otherwise, $\dd$ would contain at least $n-1 - 4 + 6 > n$ entries of value~$3$.
We distinguish between~$d_2 = 3$ and~$d_2 \geq 4$.

Firstly, consider the case that~$d_2 = 3$.
Note that this implies by~\Cref{claim number of threes} that~$d_1 = n-3$, since~$\sum\dd = 4(n-1)-2$.
In other words, for each~$n\geq 6$, there is exactly one degree sequence~$\dd^n := (d^n_1, \dots, d^n_n)$ of length~$n$ with~$d^n_2 = d^n_n = 3$ fulfilling~$\sum\dd^n = 4(n-1)-2$.
Realizations with two spanning trees that share one edge for the three sequences~$\dd^6$, $\dd^7$, and~$\dd^8$ are depicted in \Cref{fig:1 d6 d7 d8}.

\begin{figure}
    \centering
    \includegraphics[width=0.85\linewidth]{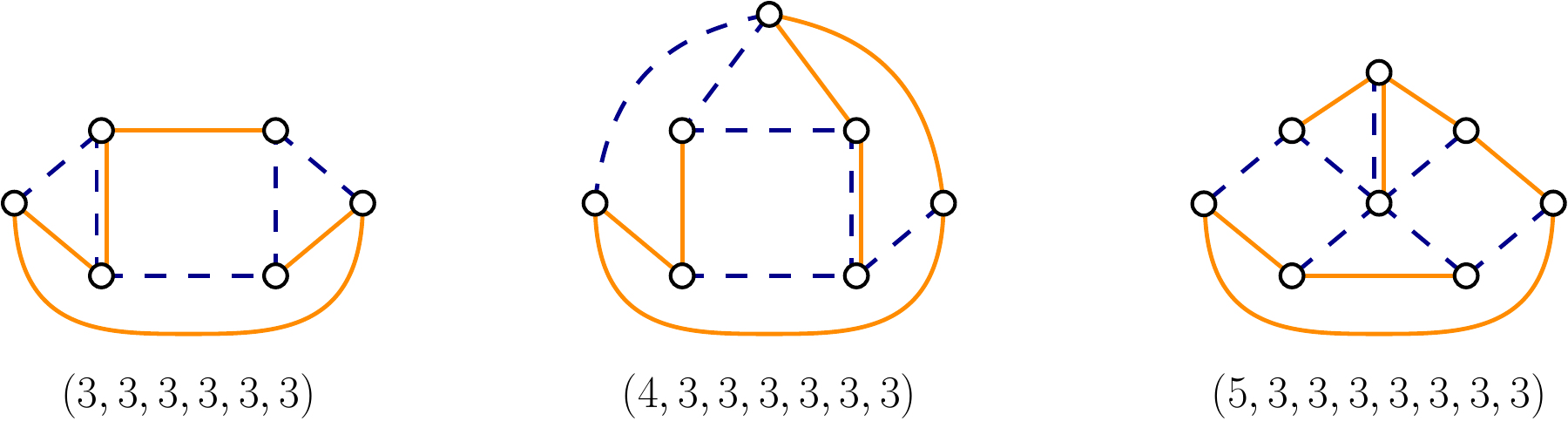}
    \caption{Realizations of the degree sequences $\dd^6=(3,3,3,3,3,3)$, $\dd^7=(4,3,3,3,3,3,3)$, and~$\dd^8=(5,3,3,3,3,3,3,3)$ with two spanning trees $T_1$ (solid orange edges) and $T_2$ (dashed blue edges) that share exactly one edge, shown left to right.}
    \label{fig:1 d6 d7 d8}
\end{figure}

Now, consider~$n\geq 9$.
Recall that~$d_1 = n-3$ and $d_i = 3$ for each~$i\in [2,n]$.
Moreover, recall that~$\dd^6 = (3,3,3,3,3,3)$ has a realization~$G'$ with two spanning trees~$T_1'$ and~$T_2'$ that share exactly one edge (see~\Cref{fig:1 d6 d7 d8}).
Let~$v_1$ be an arbitrary vertex of degree~$3$ in~$G'$.
We obtain a graph~$G$ by adding a cycle~$C$ of length~$d_1-3 = n-6$ to~$G'$ and adding the edge~$v_1x$ to~$G'$ for each vertex~$x$ of the cycle~$C$.
Note that~$G$ is a realization of~$\dd$, since~$G$ contains~$5 + d_1-3 = n-1$ vertices of degree~$3$ and the degree of~$v_1$ was increased by~$d_1-3$ to~$d_1$.
Since~$G^c := G[\{v_1\}\cup C]$ consists of a universal vertex attached to a cycle of length at least 3 (since~$d_1 - 3 = n-6 \geq 3$ by~$n\geq 9$), $G^c$ contains two edge-disjoint spanning trees~$T^c_1$ and~$T^c_2$.
Hence, $T_1 := T'_1 \cup T^c_1$ and~$T_2:= T'_2 \cup T^c_2$ are spanning trees of~$G$, and~$T_1$ and~$T_2$ share only one edge, namely the edge shared by~$T_1'$ and~$T_2'$.
Hence, $\dd$ admits a realization with two spanning trees that share one edge.

Secondly, consider the case that~$d_2 \geq 4$. 
Note that~\Cref{claim number of threes} thus implies that~$n \geq d_1 + 4$, since~$d_2\geq 4$ and~$\dd$ contains at least~$d_1 - 4 + 6 = d_1 + 2$ entries of value~$3$.
Let~$\dd'$ be the sequence obtained from~$\dd$ by removing the entry~$d_1$, removing~$d_1-1$ entries of value 3, and adding a 1.
We first show that~$\dd'$ admits a realization with two spanning trees that share one edge, and then describe how to construct the realization for~$\dd$.

Note that~$\dd'$ has length~$n' := n-(d_1-1)$ and~$\sum\dd' = \sum\dd - d_1 - (d_1-1)\cdot 3 + 1 = 4(n-1) - 2 - 4(d_1-1) = 4(n - d_1) - 2 = 4(n'-1)-2$.
Moreover, $d'_4 \geq 3$ since~$n' = n-(d_1-1) \geq d_1 + 4 - (d_1-1) = 5$.
Due to~\Cref{realizable degree 1 and bound}, $\dd'$ is graphical if~$d'_1 \leq n'-1$.
To see that this is the case, suppose that~$d'_1 > n'-1$.
By definition of~$\dd'$, $d'_1 = d_2$.
This would then imply that~$\sum\dd \geq d_1 + d_2 + (n-2)\cdot 3 \geq d_1 + n' + 3(n-2) = d_1 + n-(d_1-1) + 3(n-2) = n + 1 + 3(n-2) = 4(n-1) - 1 > 4(n-1) - 2$, which is not the case by assumption.
Hence, $d'_1 \leq n'-1$, which implies that~$\dd'$ is graphical due to~\Cref{realizable degree 1 and bound}. 
Moreover, since~$\dd'$ fulfills~$n' > 2$, $d_{n'-1}' \geq 2$, $d_{n'}' = 1$, and~$\sum\dd' = 4(n'-1)-2$, \Cref{tc if deg 1} implies that~$\dd'$ admits a realization~$G'$ with two spanning trees~$T_1'$ and~$T_2'$ that share one edge.

Now consider the graph~$G$ obtained by adding  a cycle~$C$ of length~$d_1-1$ to~$G'$ and adding an edge between~$v_1$ and each vertex of~$C$, where~$v_1$ is the unique vertex of~$G'$ of degree 1.
Then $G$ is a realization for~$\dd$, since  $v_1$ has degree $1 + (d_1-1) = d_1$ in~$G$ and~$d_1-1$ additional vertices of degree 3 were added to~$G'$ (namely, the vertices of~$C$).
It suffices to show that~$G$ has two spanning trees that share one edge.
Since~$G^c := G[\{v_1\}\cup C]$ consists of a universal vertex attached to a cycle of length at least 3 (since~$d_1 \geq 4$), $G^c$ contains two edge-disjoint spanning trees~$T^c_1$ and~$T^c_2$.
Hence, $T_1 := T'_1 \cup T^c_1$ and~$T_2:= T'_2 \cup T^c_2$ are spanning trees of~$G$, and~$T_1$ and~$T_2$ share only one edge, namely the edge shared by~$T_1'$ and~$T_2'$.
Hence, $\dd$ admits a realization with two spanning trees that share one edge. 
\end{proof}

Finally, we consider the case where~$\sum\dd=4(n-1)-2$ and $d_n=2$.
Essentially, we perform an induction over the length of the sequence and lay off the smallest degree~$d_n = 2$, until we reach a sequence of constant length or~$d_n$ has value unequal to~$2$.
In the latter case, the previous lemma acts as a base case.

\begin{lemma}\label{minus two dn eq 2}

Let~$\dd = (d_1, \dots, d_n)$ be a graphical sequence with $\sum_{i=1}^n d_i = 4(n-1) - 2$ and~$d_n = 2$. 
Then~$\dd$ admits a realization with two spanning trees that share one edge.
\end{lemma}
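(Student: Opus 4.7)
The plan is to induct on~$n$, as explicitly suggested in the sentence just before the statement. First I would dispose of the base cases~$n\in\{3,4\}$: the hypotheses pin down~$\dd$ uniquely to~$(2,2,2)$, realized as the triangle~$C_3$ (whose two spanning paths share exactly one edge), respectively to~$(3,3,2,2)$, realized as~$K_4$ minus an edge, in which the two stars centered at the two degree-$3$ vertices share only the edge between them.

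For the inductive step~$n\geq 5$, I would apply the graphical laying-off procedure of~\Cref{def:havel hakimi laying off simple} to the entry~$d_n = 2$. Since~$d_n < n$, the residual sequence~$\dd'$ has length~$n' := n-1$, is graphical by~\Cref{lem:havel-hakimi-correctness}, and satisfies~$\sum\dd' = 4(n'-1) - 2$; the two decremented entries are~$d_1$ and~$d_2$. The decisive observation is that~$d'_{n'}\in\{2,3\}$. The upper bound is forced by~$\sum\dd' < 4n'$, while the lower bound follows from~$d'_{n'} = \min(d_2 - 1, d_{n-1})$ together with~$d_{n-1}\geq d_n = 2$ and the fact that~$d_2\geq 3$ whenever~$n\geq 5$: otherwise~$d_i = 2$ for all~$i\geq 2$ would force~$d_1 = 2n-4 > n-1$, contradicting graphicality.

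Depending on the value of~$d'_{n'}$, I would obtain a realization~$G'$ of~$\dd'$ with two spanning trees~$T_1', T_2'$ sharing exactly one edge, either by the inductive hypothesis (if~$d'_{n'} = 2$) or by~\Cref{minus two dn eq 3} (if~$d'_{n'} = 3$). To recover a realization~$G$ of~$\dd$, I re-add the vertex~$v_n$ together with its two incident edges~$e_1, e_2$ prescribed by the laying-off procedure. Since~$e_1\neq e_2$, setting~$T_1 := T_1'\cup\{e_1\}$ and~$T_2 := T_2'\cup\{e_2\}$ gives two spanning trees of~$G$ whose intersection equals~$T_1'\cap T_2'$, a single edge.

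The main obstacle is the case analysis on~$d'_{n'}$, in particular the exclusion of~$d'_{n'}\leq 1$: the argument ruling out~$d_2 = 2$ crucially combines the graphical bound~$d_1\leq n-1$ with the exact edge count~$\sum\dd = 4(n-1)-2$. Everything else---graphicality of~$\dd'$, extension of the spanning trees, and verification of the base cases---reduces to routine laying-off bookkeeping of the kind already used in the preceding lemma.
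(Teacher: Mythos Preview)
Your proof is correct and follows essentially the same route as the paper's: induct on~$n$, lay off~$d_n=2$ via~\Cref{def:havel hakimi laying off simple}, argue that the residual minimum degree is~$2$ or~$3$ (invoking either the induction hypothesis or~\Cref{minus two dn eq 3}), and extend the two spanning trees by the two new edges to~$v_n$. The only cosmetic difference is that you treat~$n=4$ as a separate base case, whereas the paper starts the inductive step at~$n\ge 4$; since your argument excluding~$d_2=2$ in fact already works for~$n=4$, this extra case is harmless but unnecessary.
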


\begin{proof}
To prove the statement, we show that for each~$n$, each graphical sequence~$\dd^n$ of length~$n$ with $\sum \dd^n = 4(n-1) - 2$ and~$d_n = 2$ admits a realization with two spanning trees that share one edge.
We show this statement via induction over~$n$.

For the base case, note that for~$n \in \{1,2\}$, there is no graphical sequence with~$d_n = 2$. 
Moreover, for~$n = 3$, $(2,2,2)$ is the only graphical sequence of length~$n$ with~$d_n = 2$.
This sequence is TC-realizable, since the unique realization of~$\dd$ is a triangle, which obviously contains two spanning trees that share exactly one edge.

For the inductive step, let~$n \geq 4$ and assume that the statement holds for~$n-1$.
Let~$\dd$ be an arbitrary graphical sequence of length~$n$ fulfilling~$\sum\dd = 4(n-1)-2$ and~$d_n = 2$.
We show that~$\dd$ admits a realization with two spanning trees that share one edge.
Since~$\dd$ is graphical, the residual sequence~$\dd' = (d_1', \dots, d_{n'}')$ obtained from laying off~$d_n$ is graphical (see~\Cref{lem:havel-hakimi-correctness}).
Moreover, $\dd'$ has length~$n' = n-1$ and fulfills~$\sum \dd' = \sum \dd - 4 = 4(n-1)-2 - 4 = 4(n'-1)-2$, since $d_n=2$.
Hence, $\dd'$ admits a realization~$G'$ with two spanning trees~$T_1'$ and~$T_2'$ that share one edge by (i)~the induction hypothesis if~$d'_{n'} = 2$ and (ii) by~\Cref{minus two dn eq 3} if~$d'_{n'} = 3$.
Note that~$d'_{n'} < 2$ is not possible, since~$d_2 > 2$.
The latter follows from the fact that for~$n\geq 4$, $\sum \dd = 4(n-1)-2$ and~$d_{n} = 2$.
Let~$v_1$ be a vertex of degree~$d_1-1$ in~$G'$, and let~$v_2$ be a vertex of degree~$d_2-1$ in~$G'$.
These vertices exists, since~$G'$ is a realization of~$\dd'$ and~$\dd'$ is obtained from~$\dd$ by laying off~$d_n=2$.
Consider the graph~$G$ obtained by adding a vertex~$v_n$ to~$G'$ as well as the edges~$v_1v_n$ and~$v_2v_n$.
Note that~$G$ is a realization of~$\dd$.
Moreover, $T_1 := T'_1 \cup v_1v_n$ and~$T_2 := T'_2 \cup v_2v_n$ are spanning trees of~$G$ that share exactly one edge, namely the unique edge shared by~$T_1'$ and~$T_2'$.
This implies that~$\dd$ admits a realization with two spanning trees that share one edge.
Since~$\dd$ is an arbitrary graphical sequence of length~$n$ that fulfills~$\sum\dd = 4(n-1)-2$ and~$d_n = 2$, the statement holds for~$n$.
\end{proof}

\section{Realizations of \cp graphs}\label{sec:c4}
\begin{figure}[t]
    \centering
    \includegraphics[width=0.55\linewidth]{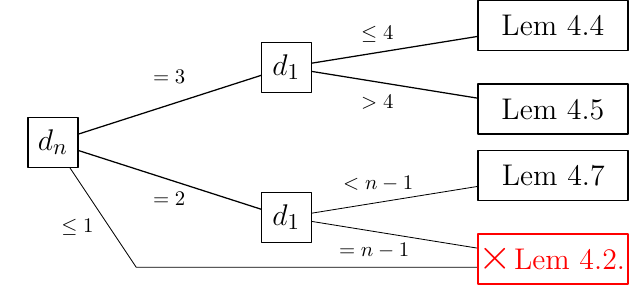}
    \caption{A guideline for the branching cases in the proof of \Cref{c4 sequences}.}
    \label{fig:Overview section C4}
\end{figure}
In this section, we establish the following characterization of graphical sequences that allow for a~\cp graph realization.

\begin{theorem}\label{c4 sequences}
Let~$\dd = (d_1, \dots, d_n)$ be a graphical sequence.
Then, $\dd$ admits a realization which is a~\cp graph if and only if
$\sum_{i=1}^n d_i = 4(n-1)-4$, $d_1 < n-1$, and~$d_n \geq 2$.
\end{theorem}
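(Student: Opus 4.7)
The necessity of the three conditions (forward direction) is the easier half. The edge count is immediate: a \cp graph has exactly $2n-4$ edges, so $\sum d_i = 4n-8 = 4(n-1)-4$. The bound $d_n\geq 2$ follows by writing $G=T_1\cup T_2$ with $|T_1\cap T_2|=2$ both edges lying on the central cycle $C$: every vertex is spanned by both trees, and for any $v\notin C$ we have $\deg_G(v)=\deg_{T_1}(v)+\deg_{T_2}(v)\geq 2$, while for $v\in C$ the two cycle edges incident to $v$ already give $\deg_G(v)\geq 2$. For $d_1<n-1$, I would assume toward a contradiction that some vertex $v$ is adjacent to all others. Because $C$ is induced, $v$ cannot lie on $C$ (otherwise the opposite cycle vertex would create a chord). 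Hence no shared edge is incident to $v$, and so $\deg_{T_1}(v)+\deg_{T_2}(v)=n-1$. Each non-$v$ non-cycle vertex $u$ has $\deg_{T_1}(u)+\deg_{T_2}(u)=\deg_G(u)$, and each of its $\deg_G(u)-1$ non-$v$-incident edges lies in exactly one tree. The plan is to count the edges of $G-v$ in each tree separately (each $T_i\cap E(G-v)$ is a forest on $n-1$ vertices with exactly $\deg_{T_i}(v)$ components) and combine this with the fact that $T_1\cap C$ and $T_2\cap C$ are two length-$3$ paths sharing exactly two edges of $C$. A direct component-tracking argument (the $v$-incident edges of $T_i$ are the only means to reconnect the forest $T_i\cap E(G-v)$) forces one of the two trees to leave a component of $G-v$ unreached, contradicting connectedness.

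For the backward direction (sufficiency), I would proceed by induction on $n$. The base case $n=4$ forces $\dd=(2,2,2,2)$, which is realized by $C_4$ itself and is trivially \cp. For the inductive step, assuming $\dd$ satisfies the three conditions, the strategy is to reduce $\dd$ to a shorter sequence $\dd'$ that either still satisfies the hypotheses of~\Cref{c4 sequences} or falls into the reach of \Cref{two spanning trees}, and then to lift a realization of $\dd'$ back to a \cp realization of $\dd$ by attaching a single new vertex while preserving the central $4$-cycle and the two spanning trees sharing two of its edges.

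The natural reduction is to lay off the smallest entry $d_n$. If $d_n=2$, then after laying off we obtain a sequence of length $n-1$ with edge-sum $4(n-2)-4$, and one can choose the two ``connection'' vertices of the laid-off degree-$2$ vertex to lie outside the central cycle of the inductively constructed realization~$G'$, so that $C$ remains an induced $4$-cycle of the new graph and both spanning trees of $G'$ extend by adding one of the two new pendant edges each, yielding trees that still share exactly the two central-cycle edges. If $d_n\geq 3$, the sequence has strictly more than $2n-4$ edges in the reduced problem, and I would instead reduce to a sequence realizable by two spanning trees sharing at most one edge (via \Cref{two spanning trees}), then re-introduce a small modification (replacing one edge by a length-$2$ path through a fresh vertex of the cycle, for instance) to create the required central $4$-cycle. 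The graphicality of each intermediate sequence is verified using \Cref{realizable if dn geq 3} or \Cref{realizable degree 1 and bound}, analogously to how they are applied in~\Cref{sec:twotrees}.

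The main obstacle I anticipate is the sufficiency direction, specifically controlling the interaction between the central $4$-cycle and the inductive peeling: one must ensure that laying off never forces the new smallest degree below $2$ and that the two vertices to which the freshly created degree-$2$ vertex is attached lie in a position that preserves both the inducedness of $C$ and the precise-two-shared-edges structure of the two spanning trees. Handling the borderline cases where almost all entries equal $2$ (so that the room to choose attachment sites outside $C$ is tight) will likely require a dedicated small-case analysis, analogous to the explicit constructions of $\dd^6,\dd^7,\dd^8$ appearing in the proof of \Cref{minus two dn eq 3}.
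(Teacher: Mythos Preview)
Your forward direction is mostly aligned with the paper, but your argument for $d_1<n-1$ diverges and is left incomplete. The paper does not use a direct counting/component argument; it proves by induction on $n$ that no \cp graph has a universal vertex (\Cref{c4 degree bounds}). The key step: since a universal $v$ lies off the induced $C$, and since by degree count some vertex $u\neq v$ has degree $2$, this $u$ is also off $C$; removing $u$ yields a smaller \cp graph in which $v$ still has degree $(n-1)-1$, contradicting the induction hypothesis (or the minimum-degree bound if $G-u$ acquires a degree-$1$ vertex). Your ``component-tracking'' sketch is not fleshed out, and the naive counts are all consistent (e.g.\ $\deg_{T_1}(v)+\deg_{T_2}(v)=n-1$ together with $|E(T_i-v)|=(n-1)-\deg_{T_i}(v)$ just recovers $|E(G-v)|=n-3$), so a contradiction does not fall out of counting alone.

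The serious gap is your sufficiency plan for $d_n=3$. Laying off $d_n=3$ via Havel--Hakimi yields a sequence of length $n-1$ with sum $4(n-1)-4-6=4(n-2)-6$, which is \emph{below} both the \cp threshold $4(n'-1)-4$ and the threshold $4(n'-1)-2$ of \Cref{two spanning trees}; neither the induction hypothesis nor \Cref{two spanning trees} applies. The paper handles $d_n=3$ by two reductions that do not go through lay-off. When $d_1\geq 5$ (\Cref{minus four dn eq 3 d1 geq 5}): delete six entries of value $3$ and decrease $d_1$ by $2$, obtaining a graphical $\dd'$ of length $n-6$ with $\sum\dd'=4(n'-1)$ and $d'_{n'}\geq3$; realize $\dd'$ with two edge-disjoint spanning trees (\Cref{thm:edt_76}), then attach a fixed $6$-vertex gadget at a vertex of degree $d_1-2$ that simultaneously supplies the central $4$-cycle and the six degree-$3$ vertices. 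When $d_1\leq 4$ (\Cref{minus four dn eq 3 d1 leq 4}): there is a unique such sequence $\dd^n$ for each $n\geq8$, and the paper runs a bespoke induction that maintains, beyond the \cp structure, an auxiliary invariant (two vertex-disjoint ``matching edge pairs'' off $C$, one edge from each tree); the step inserts a new degree-$4$ vertex by subdividing both edges of one pair and identifying the two subdivision points. Your proposed ``replace one edge by a length-$2$ path \ldots\ to create the required central $4$-cycle'' does not match either construction and does not obviously yield an induced $C_4$ with exactly two shared tree edges. Your $d_n=2$ case is essentially \Cref{minus four dn eq 2}; the worry about attaching outside $C$ is unnecessary, since the new vertex is automatically off $C$ and attaching it to any $v_1,v_2$ of the right degrees keeps $C$ induced while each tree absorbs one of the two new edges.
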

\Cref{fig:Overview section C4} provides a guideline on the individual steps for proving~\Cref{c4 sequences}.
First, we show the forward direction of~\Cref{c4 sequences}.
To this end, we analyze the minimum and maximum degree of~\cp graphs.

\begin{lemma}\label{c4 degree bounds}

Let~$G$ be a~\cp graph on~$n$ vertices. 
Then~$G$ has a minimum degree of at least~$2$ and a maximum degree of at most~$n-2$.
\end{lemma}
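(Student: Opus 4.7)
The plan is to prove the two bounds separately, both by contradiction, exploiting the structural equality $T_1 \cup T_2 = G$ that follows from $|E(G)| = 2n-4 = 2(n-1) - 2 = |T_1| + |T_2| - |T_1 \cap T_2|$.

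For the lower bound, I would suppose some vertex $v$ has $\deg(v) = 1$ with unique incident edge $e$. Since both spanning trees $T_1$ and $T_2$ must reach $v$, both must contain $e$, so $e \in T_1 \cap T_2$. By the definition of a \cp graph, the two shared edges both lie on the central $4$-cycle $C$, so $e$ is on $C$ and $v$ is an endpoint of $e$ on $C$. But every vertex of a $4$-cycle has two incident cycle edges, giving $\deg(v) \geq 2$, a contradiction.

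For the upper bound, I would suppose $v$ has $\deg(v) = n-1$, i.e., $v$ is universal. First, note that $v \notin V(C)$: otherwise $v$ would be adjacent in $G$ to both of its non-neighbors on the induced cycle $C$, contradicting the fact that $C$ is induced. The main lemma I would establish next is that $|T_1 \cap E(C)| = |T_2 \cap E(C)| = 3$. This uses inclusion–exclusion on the sets $T_i \cap E(C)$: since $T_1 \cup T_2 = G \supseteq C$, every edge of $C$ lies in at least one $T_i$, giving $|T_1 \cap E(C)| + |T_2 \cap E(C)| = 4 + |T_1 \cap T_2 \cap E(C)| = 4 + 2 = 6$; and since each $T_i \cap E(C)$ is a subgraph of a tree intersected with a $4$-cycle, it omits at least one cycle edge, so $|T_i \cap E(C)| \leq 3$, forcing equality.

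With this in hand, the final step is to derive the contradiction by counting edges from $v$ to $C$. Since the $3$ edges of $T_1 \cap E(C)$ form a spanning path on the four vertices of $C$, all four vertices of $C$ lie in a single component of $T_1 - v$; therefore at most one neighbor of $v$ in $T_1$ belongs to $V(C)$, so at most one of the four edges $v c_1, v c_2, v c_3, v c_4$ lies in $T_1$. The same holds for $T_2$. But these four edges all lie in $G = T_1 \cup T_2$, and since no edge incident to $v$ belongs to $T_1 \cap T_2 \subseteq E(C)$, each lies in exactly one of the two trees, giving at most $1 + 1 = 2$ such edges overall, contradicting the existence of $4$. The main obstacle I anticipate is being careful with the inclusion–exclusion step and the cleanup observation that $T_1 \cap T_2$ contains no edge incident to $v$, both of which hinge on the induced-cycle assumption and must be handled precisely.
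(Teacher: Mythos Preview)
Your proof is correct. The minimum-degree argument is essentially the paper's own reasoning, phrased contrapositively. For the maximum degree, however, the paper takes a different route: it proceeds by induction on $n$, checking $n\in\{4,5\}$ by hand, and in the inductive step argues that a universal vertex forces the existence of a degree-$2$ vertex off the central cycle whose deletion yields a smaller \cp graph still containing a universal vertex, contradicting the hypothesis. Your approach is more direct and avoids induction entirely: from $E(G)=T_1\cup T_2$ and inclusion--exclusion on $E(C)$ you deduce that each tree meets $C$ in a $3$-edge spanning path, so all four cycle vertices lie in a single component of $T_i-v$, and hence at most one of the four edges from the universal vertex $v\notin V(C)$ to $V(C)$ can lie in each tree; since none of those edges lies in $T_1\cap T_2\subseteq E(C)$, at most two of the four can be covered, a contradiction. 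Your argument exposes the combinatorial obstruction in one shot and needs no base-case analysis; the paper's induction, by contrast, dovetails with the degree-$2$ laying-off steps used in the constructive lemmas later in the section.
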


\begin{proof}
Recall that by definition of a~\cp graph, $G$ has~$2n-4$ edges and at least~$4$ vertices.

By definition, each vertex~$v$ of the~\cp graphs~$G$ is (i)~part of the central cycle or (ii)~a vertex that has at least one incident edge in each of the two spanning trees.
In both cases, vertex~$v$ has degree at least 2.
Hence, the minimum degree of~$G$ is at least 2.

We now show that~$G$ also has a maximum degree of at most~$n-2$.
We show this statement via induction over~$n$.

For~$n = 4$, there is only a single~\cp graph, namely the cycle of length 4 itself, which has a maximum degree of~$2 = n-2$.
For~$n = 5$, the only (non-isomorphic) graph with minimum degree at least~$2$ and maximum degree~$4 = n-1$ that has~$2n-4 = 6$ edges is the graph~$G'$ that contains a universal vertex~$v_1$ and where~$G' - v_1$ is a perfect matching of size~$2$.
Hence, $G'$ contains no induced cycle of length~$4$, which implies that~$G'$ is not a~\cp graph.

For the inductive step, assume that~$n>5$ and that for $n'=n-1$, there is no~\cp graph on~$n'$ vertices with maximum degree~$n'-1$.

Assume towards a contradiction that there is a~\cp graph~$G_n$ on~$n$ vertices that has a maximum degree of~$n-1$.
Then, the minimum degree of~$G_n$ is exactly~$2$, as otherwise, the number of edges of~$G_n$ would exceed~$2n-4$.
Let~$v_1$ be a vertex of degree~$n-1$ in~$G_n$ and let~$v_n$ be a vertex of degree~$2$ in~$G_n$.
Recall that the central cycle in~$G_n$ is an induced cycle of length 4.
This implies that~$v_1$, which has degree~$n-1$, is not part of the central cycle.
This further implies that~$v_n$ is also not part of the central cycle, since~$v_1v_n$ is an edge of~$G_n$ (due to the degree of~$v_1$) and~$v_n$ has only one other neighbor in~$G_n$.
Hence, $G' := G_n - v_n$ is a~\cp graph, since (i)~$v_n$ has degree 2 in~$G_n$ and (ii)~$v_n$ is not part of the central cycle and thus connected with exactly one edge to each of the two spanning trees.
But $G'$ cannot be a~\cp graph, due to the following.
If~$G'$ has a minimum degree of at most 1, our initial argument shows that~$G'$ is not a~\cp graph. If~$G'$ has a minimum degree of at least 2, since~$G'$ has~$n' = n-1$ vertices and a vertex of degree~$n'-1$ (namely $v_1$), the induction hypothesis implies that~$G'$ is not a~\cp graph.
This contradicts the assumption that~$G_n$ is a~\cp graph, which shows the statement for~$n$.
\end{proof}

This directly proves the forward direction of~\Cref{c4 sequences}.

\subsection{Realizability}
Now, we show the backward direction~of~\Cref{c4 sequences}.
That is, we show the following.
\begin{lemma}\label{c4 really realizable}
Let~$\dd = (d_1, \dots, d_n)$ be a graphical sequence with $\sum_{i=1}^n d_i = 4(n-1)-4$, $d_1 < n-1$, and~$d_n \geq 2$.
Then, $\dd$ admits a realization which is a~\cp graph.
\end{lemma}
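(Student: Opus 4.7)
The plan is to prove \Cref{c4 really realizable} by induction on $n$. For the base case $n=4$, the hypotheses $d_1\leq 2$, $d_4\geq 2$, and $\sum d_i=8$ force $\dd=(2,2,2,2)$, and the cycle $C_4$ itself is a~\cp realization: its two Hamiltonian paths form spanning trees sharing exactly two opposite edges of the cycle. For the inductive step $n\geq 5$, the sum identity $\sum d_i = 4n-8$ together with $d_n\geq 2$ restricts $d_n$ to $\{2,3\}$, since $d_n\geq 4$ would yield $\sum d_i\geq 4n$. This splits into two cases.

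In the main case $d_n=2$, apply the graphical laying off from \Cref{def:havel hakimi laying off simple} to $d_n$, connecting $v_n$ to the two largest-degree entries. The residual sequence is $\dd' = (d_1-1,\, d_2-1,\, d_3,\dots,d_{n-1})$ (after reordering), which is graphical by \Cref{lem:havel-hakimi-correctness}. I would then verify that $\dd'$ satisfies the theorem's hypothesis on $n-1$ vertices: the sum is $4((n-1)-1)-4$; the identity together with $d_1\leq n-2$ excludes the extremal pattern $d_1=d_2=d_3=n-2$ (which would require $n\leq 4$), so $d_1'\leq n-3$; and the identity similarly forces $d_2\geq 3$ when $n\geq 5$ (otherwise $d_1 = 2n-6 > n-2$), so $d_{n-1}'\geq 2$. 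The inductive hypothesis then produces a~\cp realization $G'$ of $\dd'$ with spanning trees $T_1', T_2'$ sharing two edges of a central $4$-cycle $C$. Let $a,b$ be the (distinct) vertices of $G'$ corresponding to the two decremented entries. Form $G$ by adding a new vertex $v_n$ with edges $v_na$ and $v_nb$, and set $T_1 := T_1'\cup\{v_na\}$ and $T_2 := T_2'\cup\{v_nb\}$. Since $v_n$ becomes a leaf in both trees and the shared edges of $C$ are unaffected, $G$ is a~\cp realization of $\dd$.

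In the remaining case $d_n=3$, every $d_i\geq 3$ forces $n\geq 8$ and, more strongly, $d_1\leq n-5$ (from $d_1 + 3(n-1) \leq 4n-8$). Define $\dd'$ by removing $d_n$ and decrementing $d_1$ by~$1$. Then $\sum \dd' = 4((n-1)-1)-4$, $d_1'\leq n-5\leq (n-1)-2$, and every entry of $\dd'$ is at least $2$. To see that $\dd'$ is graphical I would invoke \Cref{realizable if dn geq 3}: its fourth-largest entry is at least $3$ (if $d_1\geq 4$ then $d_1-1\geq 3$, while if $d_1=3$ then all $d_i=3$ forces $n=8$ and $\dd'=(3,3,3,3,3,3,2)$). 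The inductive hypothesis then gives a~\cp realization $G'$ with central cycle $C$ and spanning trees $T_1', T_2'$. Let $a$ be the vertex of $G'$ of $\dd'$-degree $d_1-1\leq n-6$. Since $G'$ has $2n-6$ edges and at most $n-6$ of them are incident to $a$, at least $n$ edges avoid $a$; excluding the two shared central cycle edges leaves at least $n-2\geq 6$ candidate edges $bc$ lying in exactly one of $T_1',T_2'$. Pick such an edge $bc$; up to symmetry, assume $bc\in T_1'\setminus T_2'$. Form $G$ by deleting $bc$ from $G'$ and adding $v_n$ with edges $v_na, v_nb, v_nc$, and define $T_1 := (T_1'\setminus\{bc\})\cup\{v_nb, v_nc\}$ and $T_2 := T_2'\cup\{v_na\}$. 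A direct check shows that $T_1$ rejoins the two components of $T_1'\setminus\{bc\}$ through $v_n$ (so it is a spanning tree), $T_2$ just appends the new leaf $v_n$, and $T_1\cap T_2 = T_1'\cap T_2'$ consists of the same two edges of $C$.

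The main obstacle is the boundary analysis for $\dd'$: one must repeatedly exploit the identity $\sum d_i = 4n-8$ to exclude extremal configurations (such as $d_1=d_2=d_3=n-2$ or $d_2=2$ in the case $d_n=2$, and the fourth-entry argument in the case $d_n=3$) so that the hypotheses of \Cref{lem:havel-hakimi-correctness} or \Cref{realizable if dn geq 3} can be invoked to establish graphicality of $\dd'$. Once $G'$ is secured, the construction of $G$ from $G'$ is purely local and the preservation of the~\cp structure is immediate.
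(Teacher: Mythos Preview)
Your induction is sound and your treatment of the case $d_n=2$ is essentially identical to the paper's (\Cref{minus four dn eq 2}): lay off $d_n=2$, verify $d_1'\leq n-3$ and $d'_{n'}\geq 2$ via the sum identity, and append $v_n$ as a leaf in each tree.

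For $d_n=3$ your route genuinely differs from the paper. The paper splits further into $d_1\leq 4$ (\Cref{minus four dn eq 3 d1 leq 4}) and $d_1\geq 5$ (\Cref{minus four dn eq 3 d1 geq 5}), using in the first case an explicit ``matching edge pairs'' induction that simultaneously subdivides two matched edges to insert a degree-$4$ vertex, and in the second case a one-shot reduction to \Cref{thm:edt_76} by peeling off six degree-$3$ entries and attaching a fixed gadget. Your argument is more uniform: decrement $d_1$, drop $d_n$, invoke \Cref{realizable if dn geq 3} for graphicality, and then recover $G$ by subdividing a single tree edge and hanging a pendant at~$a$. This avoids the separate case analysis and the auxiliary matching-edge-pairs invariant; the paper's constructions, on the other hand, are tailored so that the later linear-time implementation is immediate.

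There is one small gap. When you select the edge $bc$ in $G'$, you exclude only the two \emph{shared} edges of the central cycle~$C$, on the grounds that you need $bc$ to lie in exactly one tree. But the two non-shared edges of~$C$ also lie in exactly one tree, and if you pick one of them as $bc$ and delete it, then $C$ is no longer a cycle in~$G$, so your conclusion that $G$ is \cp with central cycle~$C$ fails. The fix is to exclude all four edges of~$C$: your count then gives at least $n-4\geq 4$ valid candidates, which is still ample.
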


To prove this statement, we assume that we are given a graphical sequence~$\dd$ with $\sum\dd = 4(n-1)-4$, $d_1 < n-1$, and~$d_n \geq 2$.
Note that these sequences are guaranteed to fulfill~$d_n \in \{2,3\}$, as otherwise, $\sum \dd \geq 4 n > 4(n-1) - 4$.
We split the analysis in three cases. 
First, we consider the case of $d_n=3$~and~$d_1\leq 4$ in~\Cref{minus four dn eq 3 d1 leq 4}, then $d_n=3$~and~$5\leq d_1 < n-1$ in~\Cref{minus four dn eq 3 d1 geq 5}, and lastly $d_n=2$~and~$d_1<n-1$ in~\Cref{minus four dn eq 2}.

\begin{lemma}\label{minus four dn eq 3 d1 leq 4}
Let~$\dd = (d_1, \dots, d_n)$ be a graphical sequence with $\sum_{i=1}^n d_i = 4(n-1) - 4$, $d_1 \leq 4$ and~$d_n = 3$. 
Then~$\dd$ admits a realization which is a~\cp graph.
\end{lemma}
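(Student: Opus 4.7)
The plan is to first pin down the exact form of $\dd$ and then construct a realization by induction on $n$.

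First, since $d_1 \leq 4$ and $d_n = 3$, every entry of $\dd$ lies in $\{3,4\}$. Writing $k$ for the number of entries equal to $4$, the equation $\sum_{i=1}^n d_i = 4n-8$ yields $4k + 3(n-k) = 4n-8$, i.e.\ $k = n-8$. Hence $n \geq 8$ and $\dd = (4,\dots,4,3,\dots,3)$ with exactly $n-8$ fours and $8$ threes is the only possibility, so the task reduces to exhibiting, for each such $n$, a \cp realization.

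For the base case $n = 8$, the sequence is $(3,\dots,3)$ and I would exhibit the $3$-cube $Q_3$ as a realization: fixing one coordinate gives an induced $4$-cycle $C$, and I would construct a Hamiltonian spanning tree $T_1$ of $Q_3$ together with $T_2 := (E(Q_3) \setminus T_1) \cup \{f_1, f_2\}$ for two suitable edges $f_1, f_2 \in T_1 \cap E(C)$ chosen so that $T_2$ is acyclic. Since $|E(Q_3)|=12=2n-4$, this is a finite verification yielding two spanning trees sharing exactly the two edges $f_1,f_2$ of $C$.

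For the induction step $n-1 \to n$ with $n \geq 9$, I would take a \cp realization $G'$ of $(4^{n-9}, 3^{8})$ with central cycle $C$ and spanning trees $T_1', T_2'$ satisfying $T_1' \cap T_2' \subseteq E(C)$, and select two vertex-disjoint edges $e_1 = ab \in (T_1' \setminus T_2') \setminus E(C)$ and $e_2 = cd \in (T_2' \setminus T_1') \setminus E(C)$. I would then form $G$ by deleting $e_1, e_2$ from $G'$ and adding a new vertex $v$ with neighbors $a, b, c, d$. The new vertex has degree~$4$; each of $a, b, c, d$ loses exactly one edge and gains exactly one, so the sequence becomes $(4^{n-8}, 3^{8})$; the central cycle $C$ is untouched since $e_1, e_2 \notin E(C)$; and $T_1 := (T_1' \setminus \{e_1\}) \cup \{va, vb\}$ and $T_2 := (T_2' \setminus \{e_2\}) \cup \{vc, vd\}$ are spanning trees of $G$, each replacing a single edge by a length-$2$ path through the newly added vertex $v$, with $T_1 \cap T_2 = T_1' \cap T_2' \subseteq E(C)$.

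The principal difficulty is guaranteeing the existence of the vertex-disjoint pair $(e_1,e_2)$ outside $E(C)$. Since $|T_i'| = n-2$ and $|T_1' \cap T_2'| = 2$, each symmetric difference contains $n-3$ edges, of which at most one lies on $C$, leaving candidate pools of size at least $n-4$. For any fixed $e_1$ with endpoints $a,b$, the number of edges of $G'$ incident to $\{a,b\}$ is at most $\deg(a)+\deg(b)-1 \leq 7$, which settles existence for $n$ large enough. For the finitely many small $n$ where this counting argument is tight, I would exhibit \cp realizations explicitly; alternatively, one can strengthen the induction hypothesis to also carry the existence of a size-$2$ matching within each symmetric difference (away from $C$), eliminating the case split. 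This matching-structure bookkeeping in the small-$n$ regime is the main technical burden of the proof.
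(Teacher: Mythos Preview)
Your proposal is essentially the same approach as the paper's, and it is correct in outline. The paper also reduces to the unique sequence $(4^{n-8},3^8)$, uses a concrete $3$-regular realization at $n=8$ as the base case, and in the step replaces a cross-tree matching $\{e_1,e_2\}$ (with $e_i\in T_i'\setminus E(C)$) by a new degree-$4$ vertex, exactly as you describe.

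Where the paper is crisper is in handling what you correctly flag as the principal difficulty. Rather than a counting argument for large~$n$ plus ad hoc treatment of small~$n$, the paper strengthens the induction hypothesis to carry \emph{two} cross-tree matching pairs: distinct edges $e_i,f_i\in T_i\setminus E(C)$ such that both $\{e_1,e_2\}$ and $\{f_1,f_2\}$ are matchings. The step consumes $(e_1',e_2')$, and then the untouched pair $(f_1',f_2')$ is combined with the four new edges at $v^*$ to produce two fresh cross-tree matching pairs for the next round (since $f_1'$ is disjoint from at least one of $xv^*,yv^*$, and symmetrically for $f_2'$). This eliminates all case analysis. Note that your phrasing ``a size-$2$ matching \emph{within} each symmetric difference'' is not quite the right invariant: what is needed (and what the paper maintains) is a matching consisting of one edge from each tree, and the trick is to keep a spare such pair in reserve.
\textbf{}
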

\begin{proof}
Since~$\sum\dd = 4(n-1) - 4$, $d_1 \leq 4$, and~$d_n = 3$, we get that~$n\geq 8$, $d_{n-7} = 3$, and (if~$n>8$), $d_{n-8} = 4$.
In other words, for each~$n\geq 8$, there is exactly one degree sequence~$\dd^n$ of length~$n$ with the required properties.

We show that~$\dd^n$ admits a realization which is a~\cp graph via induction over~$n$.
In fact, we show a stronger result, namely, that for each~$n\geq 8$, $\dd^n$ has a realization~$G$ which is a~\cp graph with central cycle~$C$, such that (i)~there are two spanning trees~$T_1$ and~$T_2$ of~$G$ that share exactly two edges and both belong to~$C$, and (ii)~for each~$i\in [1,2]$, there are two distinct edges~$e_i,f_i$ from~$T_i$ and outside of~$C$ for which~$\{e_1,e_2\}$ and~$\{f_1,f_2\}$ are matchings in~$G$.
We call edge pairs~$(e_1,e_2)$ and~$(f_1,f_2)$ fulfilling these  conditions~\emph{matching edge pairs}.

\begin{figure}
    \centering
    \scalebox{.85}{
    \includegraphics[height=0.25\linewidth]{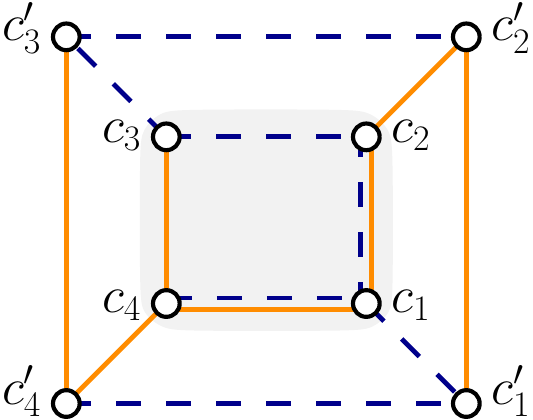}\hspace{0.1\linewidth}
    \includegraphics[height=0.25\linewidth]{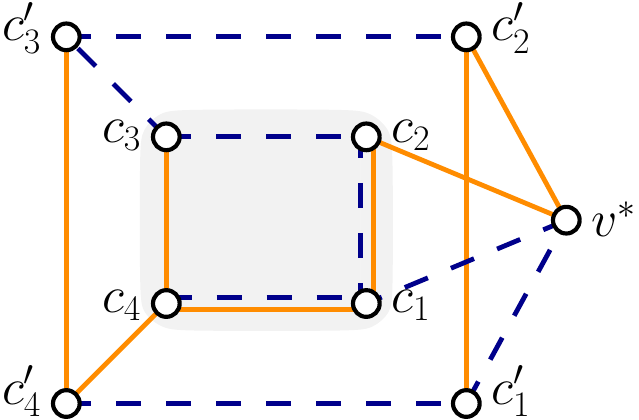}
    }
    \caption{
    Realization of the degree sequence \( \dd^8 = (3,3,3,3,3,3,3,3) \) from the base case in the proof of \Cref{minus four dn eq 3 d1 leq 4}, featuring the central 4-cycle \( C\) highlighted in gray, and two spanning trees \( T_1 \) (solid orange edges) and \( T_2 \) (dashed blue edges) that share exactly two edges, both contained in \( C \).  
    The right side illustrates the inductive step.}
    \label{fig:c4 cycle base case n8}
\end{figure}
For the base case of~$n = 8$, consider the graph given in \Cref{fig:c4 cycle base case n8} 
which is a realization of~$\dd^8 = (3,3,3,3,3,3,3,3)$.
As highlighted in the figure, there is a central cycle~$C$ on the vertices~$c_1,c_2,c_3,c_4$.
Moreover, $(c_1c_1',c_2c_2')$ and~$(c_3c_3',c_4c_4')$ are matching edge pairs.
Hence, the statement holds for the base case of~$n=8$.

Now let~$n>8$ and assume that the statement holds for~$n - 1$.
We show that the statement also holds for~$n$.
Let~$G'$ be a realization of~$\dd^{n-1}$ according to the properties of the induction hypothesis.
That is, $G'$ is a~\cp graph with central cycle~$C$, two spanning trees~$T_1'$ and~$T_2'$ that share exactly two edges, both of which belong to~$C$, and  matching edge pairs~$(e_1',e_2')$ and~$(f_1',f_2')$.
Recall that~$\dd^n$ can be obtained from~$\dd^{n-1}$ by adding a single entry of value~$4$.
The idea to obtain a realization for~$\dd^n$ with these properties is by essentially subdividing both edges~$e_1' = pq$ and~$e_2' = xy$ and identifying both newly added degree-2 vertices to obtain a new vertex~$v^*$ (which then has degree 4). 
Formally, we obtain a graph~$G$ by removing the edges~$e_1'$ and~$e_2'$ from~$G'$ and afterwards adding a new vertex~$v^*$ together with the edges~$pv^*,qv^*,xv^*,yv^*$.

\begin{claim}
$G$ is a~\cp graph realization of~$\dd^n$ that has matching edge pairs.
\end{claim}

Note that each vertex of~$V(G)\setminus \{v^*\}$ has the same degree in both~$G$ and~$G'$ and~$v^*$ has degree 4.
Hence, $G$ is a realization of~$\dd^n$.
Moreover, since~$e_1'$ and~$e_2'$ are not part of the central cycle~$C$ of~$G'$, $C$ is still a central cycle where~$T_1 := (T'_1 - e'_1) \cup \{pv^*, qv^*\}$ and~$T_2 := (T'_2 - e'_2) \cup \{xv^*, yv^*\}$ being spanning trees of~$G$ that each share exactly two edges and both of which belong to~$C$.
This implies that~$G$ is a~\cp graph.
It thus remains to show that there are matching edge pairs~$(e_1,e_2)$ and~$(f_1,f_2)$ in~$G$.
To define these pairs, we use the so far undiscussed pair~$(f_1',f_2')$.
Since~$e_1',e_2',f_1',f_2'$ are pairwise distinct edges in~$G'$, $f_1'$ and~$f_2'$ are still edges of~$G$ and moreover, $f_1'$ is an edge of~$T_1$ and~$f_2'$ is an edge of~$T_2$.
Recall that~$xv^*$ and~$yv^*$ are edges of~$T_2$, and note that~$f_1'$ contains at most one of~$x$ and~$y$ as an endpoint, as otherwise $f_1'$ would be identical to~$e_2'$.
Hence, there is an edge~$e_2\in\{xv^*,yv^*\}$ which is not adjacent to~$f_1'$, that is, $\{f_1',e_2\}$ is a matching in~$G$.
With the same arguments, there is an edge~$e_1\in\{pv^*,qv^*\}$ which is not adjacent to~$f_2'$, that is,~$\{e_1,f_2'\}$ is a matching in~$G$.
This implies that~$(f_1',e_2)$ and~$(e_1, f_2')$ are matching edge pairs.
Consequently, the statement holds for~$n$.

Concluding, for each~$n\geq 8$, the degree sequence~$\dd^n$ admits a \cp graph realization.
\end{proof}

Next, we consider the case that~$5\leq d_1 < n-1$ and~$d_n = 3$.

\begin{figure}
    \centering
    \includegraphics[width=0.5\linewidth]{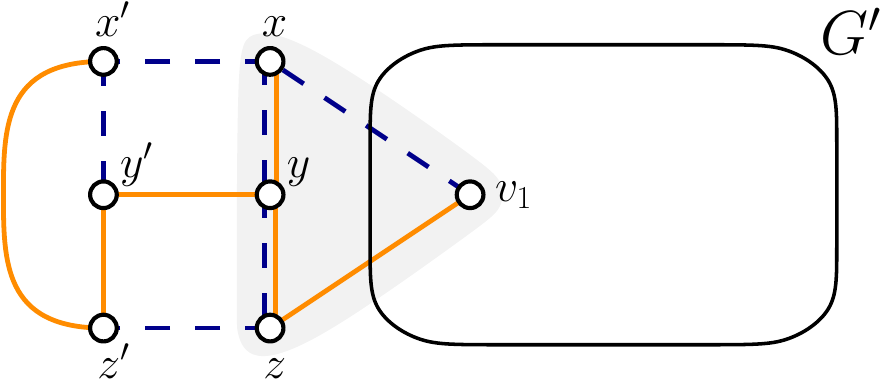}
    \caption{Construction of the graph $G$ described in the proof of \Cref{minus four dn eq 3 d1 geq 5}.}
    \label{fig:d1 greater 4 and dn equal 3}
\end{figure}

\begin{lemma}\label{minus four dn eq 3 d1 geq 5}

Let~$\dd = (d_1, \dots, d_n)$ be a graphical sequence with $\sum_{i=1}^n d_i = 4(n-1) - 4$, $5 \leq d_1 < n-1$, and~$d_n = 3$. 
Then~$\dd$ admits a realization which is a~\cp graph.
\end{lemma}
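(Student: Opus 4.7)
The plan is to reduce $\dd$ to a shorter graphical sequence $\dd'$ and attach a cycle gadget to a \cp realization of $\dd'$, in analogy with the EDST construction of \Cref{minus two dn eq 3}. Concretely, I would form $\dd'$ by replacing the entry $d_1$ of $\dd$ with $4$ and deleting $(d_1 - 4)$ entries equal to $3$, obtaining a sequence of length $n' = n - (d_1 - 4)$ with $\sum \dd' = 4(n'-1) - 4$.

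The justification of the reduction relies on the identity $\sum_{i=1}^{n}(d_i - 3) = n - 8$, a direct rewriting of $\sum\dd = 4(n-1) - 4$ using $d_n = 3$. Since $v_1$ contributes $d_1 - 3 \geq 2$ to this excess while every other contribution is non-negative, one obtains $d_1 + d_2 \leq n - 2$ (in particular $d_1 \leq n - 5$) and a lower bound of $d_1 + 4$ on the number of threes in $\dd$ (comfortably above $d_1 - 4$). These imply $\max(d_2, 4) < n' - 1$, $d'_{n'} = 3$, and $n' > 4$, so that \Cref{realizable if dn geq 3} yields that $\dd'$ is graphical. Depending on whether $d_2 \leq 4$ or $d_2 \geq 5$, the sequence $\dd'$ then falls under \Cref{minus four dn eq 3 d1 leq 4} or under the present lemma applied inductively on $n$, yielding a \cp realization $G'$ of $\dd'$ with central cycle $C$, spanning trees $T'_1, T'_2$ sharing exactly two edges of $C$, and a distinguished degree-$4$ vertex $v^\star \notin V(C)$ corresponding to the substituted entry (such a vertex is available because the inductive construction of \Cref{minus four dn eq 3 d1 leq 4} introduces all its degree-$4$ vertices outside the central cycle).

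The final step is to construct $G$ by adding to $G'$ the cycle $C^\star := u_1 u_2 \dots u_{d_1-4} u_1$ together with the edges $v^\star u_i$ for every $i$. In the resulting graph $G$, each $u_i$ has degree $3$, $v^\star$ reaches degree $d_1$, every other degree is preserved, and $C$ remains induced because $V(C^\star)$ is disjoint from $V(C)$ and $v^\star \notin V(C)$; hence $G$ realizes $\dd$. The subgraph $H := G[\{v^\star\} \cup V(C^\star)]$ is a wheel on $d_1 - 3$ vertices and therefore admits two edge-disjoint spanning trees $T^H_1, T^H_2$; setting $T_i := T'_i \cup T^H_i$ produces two spanning trees of $G$ whose intersection is still precisely the two edges of $C$ originally shared by $T'_1$ and $T'_2$. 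Thus $G$ is \cp.

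The main obstacle is that this cycle gadget needs $d_1 - 4 \geq 3$, i.e., $d_1 \geq 7$; for $d_1 \in \{5, 6\}$ the attached ``cycle'' would have length $1$ or $2$ and is not valid in a simple graph. The two residual cases $d_1 \in \{5, 6\}$ therefore require dedicated ad-hoc constructions, in the spirit of the small-$n$ base cases used in \Cref{minus four dn eq 3 d1 leq 4}. A natural candidate, matching \Cref{fig:d1 greater 4 and dn equal 3}, is to place $v_1$ directly on the central $C_4$ and attach to $v_1$ a cycle of length $d_1 - 2 \geq 3$ of new degree-$3$ neighbours, then realize the remaining entries of $\dd$ via Havel--Hakimi-style adjustments; the \cp property is then verified by exhibiting the two spanning trees explicitly.
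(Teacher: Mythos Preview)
Your recursive reduction to \Cref{minus four dn eq 3 d1 leq 4} is a different route from the paper's. The paper performs a \emph{single} reduction to the two-edge-disjoint-spanning-trees case: it removes six entries of value~$3$ from~$\dd$ and decreases $d_1$ by~$2$, obtaining a sequence~$\dd'$ of length $n' = n-6$ with $\sum\dd' = 4(n'-1)$ and $d'_{n'} \geq 3$; by \Cref{thm:edt_76} this has a realization~$G'$ with two edge-disjoint spanning trees. It then attaches a fixed six-vertex gadget (a $6$-cycle $(x,y,z,z',y',x')$ plus chords $yy',x'z'$ and edges $v_1x,v_1z$) to a vertex $v_1$ of degree $d_1-2$ in~$G'$; the central $C_4$ lives entirely inside this gadget. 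Because only $d_1 - 2 \geq 3$ is required, this works uniformly for every $d_1 \geq 5$.

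Your approach, by contrast, has a genuine gap at $d_1 \in \{5,6\}$, which you acknowledge but do not resolve. The sketch you offer---placing $v_1$ on the central $C_4$, attaching a cycle of length $d_1-2$, then handling the remaining entries by ``Havel--Hakimi-style adjustments''---is not a proof: once the gadget around $v_1$ is fixed, you must still realize all of $d_2,\ldots,d_n$ while preserving two spanning trees that overlap only on~$C$, and nothing in your sketch controls this. Since $d_1 \in \{5,6\}$ covers infinitely many sequences (e.g.\ $(5,5,5,5,3,\ldots,3)$, $(6,5,4,4,3,\ldots,3)$, etc.), this is not a finite list of base cases to be checked. A secondary, smaller issue is that your recursion requires the degree-$4$ vertex $v^\star$ to lie off~$C$ at every level of the induction, not just at the bottom call to \Cref{minus four dn eq 3 d1 leq 4}; this is true (each wheel attachment consumes one such vertex and creates none, so the invariant survives), but your induction hypothesis should say so explicitly rather than appealing only to the base case.
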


\begin{proof}
Before we show that~$\dd$ admits a realization which is a~\cp graph, we observe some structure of~$\dd$.
As we show, the condition that~$d_1 < n-1$ follows implicitly by~$\sum \dd = 4(n-1) - 4$ and~$d_n = 3$.
Completely analogous to~\Cref{claim number of threes}, we can bound the number of entries of value~$3$ in~$\dd$.

\begin{claim}\label{claim number of threes c4}
Let~$\ell$ be the smallest index of~$[1,n]$ for which~$d_\ell = 3$. 
Then, the number of entries of value~$3$ in~$\dd$ is~$8 + \sum_{i=1}^{\ell-1} (d_i - 4)$.
\end{claim}
\Cref{claim number of threes c4} implies that~$n \geq 1 + d_1 - 4 + 8 = d_1 + 5$.
Hence, $n\geq 10$ by the fact that~$d_1 \geq 5$. 
Further, we get that~$d_1 \leq n - 5$.
We can further prove that~$d_2 \leq n-7$.
 To this end, assume towards a contradiction that~$d_2 \geq n-6$.
Since~$n\geq 10$, this implies that~$d_2 \geq 4$.
Note that this implies~$n\geq 11$, since~$\dd$ contains at least~$8 + 1 = 9$ entries of value~$3$ by~\Cref{claim number of threes c4} and the fact that~$d_1 - 4 \geq 1$.
Then, $\sum\dd = d_1 + d_2 + \sum_{i=3}^nd_i \geq 2(n-6) + (n-2)\cdot 3 = 4(n-1) - 14 + n$.
Since~$n \geq 11$, this implies that $\sum\dd \geq 4(n-1) - 3$, which is a contradiction.
Consequently, $d_1\leq n-5$ and~$d_2 \leq n-7$.
We are now ready to prove that~$\dd$ admits a realization which is a~\cp graph.

Let~$\dd':=(d_1',\dots,d'_{n'})$ with~$n'=n-6 \geq 4$ be the degree sequence obtained from~$\dd$ by removing six entries of value~$3$ and decreasing~$d_1$ by~$2$.
That is, $d_1' = \max(d_1-2,d_2) \leq n-7 \leq n'-1$ by our previous observations, and~$d'_{n'} \geq 3$, since~$d_n \geq 3$ and~$d_1 \geq 5$.
Moreover, $\sum\dd' = \sum \dd - 2 - 6\cdot 3 = 4(n-1)-4 - 20 = 4(n'-1)$.
Hence, \Cref{realizable if dn geq 3} implies that~$\dd'$ is graphical.
Moreover, \Cref{thm:edt_76} implies that there is a realization~$G'$ of~$\dd'$ that contains two edge-disjoint spanning trees~$T_1'$ and~$T_2'$.
We construct a graph~$G$ as follows (see~\Cref{fig:d1 greater 4 and dn equal 3}):
Let~$v_1$ be an arbitrary vertex of degree~$d_1-2$ in~$G'$.
We add a cycle~$(x,y,z,z',y',x')$ to~$G'$ as well as the edges~$v_1x, v_1z, yy', x'z'$.
Note that~$G$ is a realization of~$\dd$, since all six vertices~$x,y,z,x',y',z'$ have degree~$3$ in~$G$ and the degree of~$v_1$ was increased by~$2$.
Moreover, as depicted in~\Cref{fig:d1 greater 4 and dn equal 3}, $G[\{v_1,x,y,z,x',y',z'\}]$ is a~\cp graph with central cycle~$C:=G[\{v_1,x,y,z\}]$ and two spanning trees~$T^c_1$ and~$T^c_2$ that share exactly two edges, both of which belong to~$C$.
Consequently, $C$ is a central cycle in $G$, and~$T_1 := T'_1 \cup T^c_1$ and~$T_2 := T'_2 \cup T^c_2$ are two spanning trees of~$G$ that share exactly two edges, both of which belong to~$C$.
Thus, $G$ is a~\cp graph, which completes the proof.
\end{proof}

Finally, we consider the case where~$d_1 < n-1$ and~$d_n = 2$.
This proof is similar to the proof of~\Cref{minus two dn eq 2}.
Essentially, we perform an induction over the length of the sequence and lay off the smallest degree~$d_n = 2$, until we reach a sequence of constant length or~$d_n$ has value 3.
In the latter case, one of the previous two lemmas acts as a base case.

\begin{lemma}
\label{minus four dn eq 2}
Let~$\dd = (d_1, \dots, d_n)$ be a graphical sequence with $\sum_{i=1}^n d_i = 4(n-1) - 4$, $d_1 < n-1$, and~$d_n = 2$. 
Then~$\dd$ admits a realization which is a~\cp graph.
\end{lemma}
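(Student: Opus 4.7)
The plan is to mimic the inductive strategy of~\Cref{minus two dn eq 2}: I would induct on~$n$, laying off the smallest degree~$d_n = 2$ at each step, and appeal to the induction hypothesis when the residual still ends in a~$2$, or to~\Cref{minus four dn eq 3 d1 leq 4} and~\Cref{minus four dn eq 3 d1 geq 5} as base cases when the residual ends in a~$3$. For the induction base, note that~$\sum d_i = 4(n-1)-4$, $d_1 < n-1$, and~$d_n = 2$ force~$n \geq 4$, and that for~$n = 4$ the only possibility is~$(2,2,2,2)$, realized by the $4$-cycle itself. This is trivially a~\cp graph: the cycle plays the role of its own induced central cycle, and any two Hamilton paths obtained by removing two different edges share exactly the two remaining edges, both lying on the central cycle.

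For the inductive step~$n \geq 5$, I would form the residual graphical sequence~$\dd'$ of length~$n' = n-1$ by laying off~$d_n = 2$, subtracting one from~$d_1$ and~$d_2$ and removing~$d_n$; by~\Cref{lem:havel-hakimi-correctness} the result remains graphical. A direct check gives~$\sum \dd' = 4(n'-1) - 4$, and the remaining hypotheses needed to invoke either~\Cref{minus four dn eq 3 d1 leq 4}, \Cref{minus four dn eq 3 d1 geq 5}, or the induction hypothesis are~(i)~$d'_{n'} \in \{2,3\}$ and~(ii)~$d'_1 < n'-1$. Part~(i) follows because~$d_n = 2$ and~$\sum \dd = 4(n-1)-4$ force~$d_{n-1} \in \{2,3\}$ and~$d_2 \geq 3$, so that~$d'_{n'} = \min(d_2-1, d_{n-1}) \in \{2,3\}$. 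For~(ii), the maximum of~$\dd'$ equals~$\max(d_1 - 1, d_3)$; the value~$d_1 - 1$ is at most~$n-3$ by hypothesis, and the case~$d_3 > d_1 - 1$ reduces to~$d_1 = d_2 = d_3$, in which case~$d_1 = n-2$ would force $\sum \dd \geq 3(n-2) + 2(n-4) + 2 > 4(n-1)-4$ for~$n \geq 5$, a contradiction.

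Having validated~$\dd'$, the appropriate lemma (or the induction hypothesis, splitting on whether~$d'_{n'}$ is~$2$ or~$3$, and within the latter on~$d'_1 \leq 4$ versus~$d'_1 \geq 5$) yields a~\cp realization~$G'$ of~$\dd'$ with a central $4$-cycle~$C$ and spanning trees~$T_1', T_2'$ sharing exactly two edges of~$C$. Letting~$u_1, u_2$ denote the two distinct vertices of~$G'$ whose degrees were reduced by the laying-off (so they have degrees~$d_1-1$ and~$d_2-1$ in~$G'$), I would define~$G$ by adding a new vertex~$v_n$ with edges~$u_1 v_n$ and~$u_2 v_n$. Then~$G$ realizes~$\dd$, the cycle~$C$ remains an induced~$C_4$ in~$G$ since~$v_n$ is new and therefore not on~$C$, and the trees~$T_1 := T_1' \cup \{u_1 v_n\}$ and~$T_2 := T_2' \cup \{u_2 v_n\}$ are spanning trees of~$G$ sharing exactly the same two edges of~$C$ as~$T_1'$ and~$T_2'$, so~$G$ is~\cp.

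The main technical obstacle is part~(ii) above: propagating the strict bound~$d_1 < n-1$ through the laying-off operation requires ruling out the corner case~$d_1 = d_2 = d_3 = n-2$, which is precisely where the edge count~$4(n-1)-4$ is genuinely used. Everything else is essentially a lifted copy of the inductive step of~\Cref{minus two dn eq 2}, supplemented by the observation that freshly added vertices do not disrupt the inducedness of the central cycle inherited from~$G'$.
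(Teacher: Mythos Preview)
Your proposal is correct and follows essentially the same approach as the paper: induction on~$n$ with base case~$(2,2,2,2)$, laying off~$d_n=2$ in the inductive step, and dispatching to the induction hypothesis or to~\Cref{minus four dn eq 3 d1 leq 4}/\Cref{minus four dn eq 3 d1 geq 5} depending on whether~$d'_{n'}$ equals~$2$ or~$3$. The only cosmetic difference is that the paper bounds~$d_3 < n-2$ directly (from~$3(n-2)+2(n-3) > 4(n-1)-4$ for~$n\ge 5$), whereas you reach the same conclusion via the case split on whether~$d_3 > d_1-1$; both arguments are equivalent.
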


\begin{proof}
To prove the statement, we show that for each~$n$, each graphical sequence~$\dd^n$ of length~$n$ with $\sum \dd^n = 4(n-1) - 4$, $d_1 < n-1$, and~$d_n = 2$ admits a realization which is a~\cp graph.
We show this statement via induction over~$n$.

For the base case, note that for~$n \in \{1,2,3\}$, there is no graphical sequence of length~$n$ that sums up to~$4(n-1)-4$. 
Moreover, for~$n = 4$, $(2,2,2,2)$ is the only graphical sequence of length~$n$ with~$d_1< n-1 = 3$ and~$d_n = 2$.
Since the unique non-isomorphic realization of~$\dd$ is a cycle of length 4, which obviously is a~\cp graph, the base case holds for~$n\leq 4$.

For the inductive step, let~$n \geq 5$ and assume that the statement holds for~$n-1$. 
Let~$\dd^n$ be an arbitrary graphical sequence of length~$n$ fulfilling~$\sum\dd^n = 4(n-1)-4$, $d_1 < n-1$, and~$d_n = 2$.
We show that~$\dd^n$ admits a realization which is a~\cp graph.
First, observe that~$d_3 < n-2$, as otherwise~$\sum\dd = d_1+d_2+d_3 + \sum_{i=4}^n d_i \geq 3(n-2) + (n-3)\cdot 2 = 4(n-1)-8+n$, which is larger than~$4(n-1)-4$ because of~$n \geq 5$.
Since~$\dd^n$ is graphical, the residual sequence~$\dd' = (d_1', \dots, d_{n'}')$ obtained from laying off~$d_n$ is graphical (see~\Cref{lem:havel-hakimi-correctness}).
Moreover, $\dd'$ has length~$n' = n-1$ and fulfills~$\sum \dd' = \sum \dd - 4 = 4(n-1)-4 - 4 = 4(n'-1)-4$, since $d_n=2$.

Hence, $\dd'$ admits a realization which is a~\cp graph 
\begin{itemize}
\item by the induction hypothesis if~$d'_{n'} = 2$, since~$d'_1 = \max(d_1-1, d_3) < n-2 = n'-1$,
\item by \Cref{minus four dn eq 3 d1 leq 4} if~$d'_{n'} = 3$ and~$d'_{1} \leq 4$, and
\item by \Cref{minus four dn eq 3 d1 geq 5} if~$d'_{n'} = 3$ and~$d'_{1} \geq 5$.
\end{itemize}
Note that~$d'_{n'} < 2$ is not possible, since~$d_2 > 2$.
The latter follows from the fact that $\sum \dd^n = 4(n-1)-4$ and~$d_{n} = 2$ for~$n\geq 5$.
Now, let~$G'$ be a realization of~$\dd'$ which is a~\cp graph, let~$v_1$ be a vertex of degree~$d_1-1$ in~$G'$, and let~$v_2$ be a vertex of degree~$d_2-1$ in~$G'$.
These vertices exists, since~$G'$ is a realization of~$\dd'$ and~$\dd'$ is obtained from~$\dd^n$ by laying off~$d_n=2$.
Consider the graph~$G$ obtained by adding a vertex~$v_n$ to~$G'$ as well as the edges~$v_1v_n$ and~$v_2v_n$.
Note that~$G$ is a realization of~$\dd^n$.
Since~$G'$ is a~\cp graph, $G'$ contains a central cycle~$C$ and two spanning trees~$T'_1$ and~$T'_2$, such that~$T_1'$ and~$T_2'$ share exactly two edges and the two shared edges are from~$C$.
Hence, $C$ is a central cycle in~$G$, and $T_1 := T'_1 \cup v_1v_n$ and~$T_2 := T'_2 \cup v_2v_n$ are spanning trees of~$G$, such that~$T_1$ and~$T_2$ share exactly two edges and the two shared edges are from~$C$.
This implies that~$G$ is a~\cp graph.
Since~$\dd^n$ is an arbitrary graphical sequence of length~$n$ fulfilling~$\sum\dd^n = 4(n-1)-2$, $d_1 < n-1$, and~$d_n = 2$, the statement holds for~$n$.
\end{proof}

\section{The Characterizations for TC-realizable Sequences}\label{sec:characterizations}

We now prove the complete characterization for graphical and multigraphical TC-realizable sequences.

\subsection{Realizations of TC-realizable Graphs}
We now prove our characterization of TC-realizable graphical sequences:
\begin{theorem}\label{main gra}
A graphical sequence~$\dd = (d_1, \dots, d_n)$ with $s=\sum_{i=1}^{n} d_i$ is TC-realizable if and only if one of the following holds
\begin{itemize}
\item $s = 4(n-1)-4$, $d_1 < n-1$, and~$d_n \geq 2$
\item $s \geq 4(n-1)-2$ and (a)~$n \leq 2$ or (b)~$n > 2$, $d_{n-1} \geq 2$, and~$d_n \geq 1$.
\end{itemize}
\end{theorem}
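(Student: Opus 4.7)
The plan is to reduce the statement to the two already-established characterizations \Cref{c4 sequences} and \Cref{two spanning trees}, using the known facts about TC-realizable graphs from \Cref{sec:defs} as the bridge between structural realizations and temporal connectivity. Since $s = 2m$ is always even, the two regimes $s = 4(n-1)-4$ (i.e.\ $m = 2n-4$) and $s \geq 4(n-1)-2$ (i.e.\ $m \geq 2n-3$) are disjoint, and by \Cref{min number edges of tc graphs} every TC-realizable sequence must fall into one of them.

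For the backward direction, I would split according to the two cases of the theorem. If the first condition holds, then \Cref{c4 sequences} produces a realization of $\dd$ which is a \cp graph, and \Cref{c4 property} certifies that such a realization is TC-realizable. If instead the second condition holds, then \Cref{two spanning trees} produces a realization of $\dd$ with two spanning trees sharing at most one edge, and \Cref{one edge towards EDSTs minus one} certifies that such a realization is TC-realizable. In both cases, $\dd$ is TC-realizable.

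For the forward direction, let $G$ be a TC-realization of $\dd$. By \Cref{min number edges of tc graphs}, $s \geq 4(n-1)-4$. If equality holds, then $G$ has exactly $2n-4$ edges, so \Cref{c4 property} forces $G$ to be a \cp graph; applying \Cref{c4 sequences} to $\dd$ then yields $d_1 < n-1$ and $d_n \geq 2$, giving the first condition. Otherwise $s \geq 4(n-1)-2$; for $n \leq 2$ the second condition is immediate, and for $n > 2$ I would deduce $d_n \geq 1$ from the connectedness of any TC graph on more than one vertex, and $d_{n-1} \geq 2$ by a short contradiction in the spirit of \Cref{no two trees if small degree}: two degree-$1$ vertices $u, v$ with unique incident edges of labels $t_1, t_2$ would force $t_1 < t_2$ from any $u \to v$ temporal path (which, being proper and simple, is strict and has length at least two, since if $u, v$ were adjacent they would form a disconnected component), and symmetrically $t_2 < t_1$ from a $v \to u$ path, a contradiction. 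This last argument is essentially the only new content; the bulk of the work has already been carried by the structural characterizations in the preceding two sections, and the remaining bookkeeping is routine.
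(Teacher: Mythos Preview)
Your proposal is correct and follows essentially the same route as the paper: both directions are reduced to \Cref{c4 sequences} and \Cref{two spanning trees} via \Cref{min number edges of tc graphs}, \Cref{c4 property}, and \Cref{one edge towards EDSTs minus one}, and the only additional content---the contradiction from two degree-$1$ vertices forcing mutually incompatible label orderings---is exactly the argument the paper gives. There is nothing to add.
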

\begin{proof}
$(\Leftarrow)$
If~$\sum \dd = 4(n-1)-4$, $d_1 < n-1$, and~$d_n \geq 2$, then due to~\Cref{c4 sequences}, $\dd$ has a realization~$G$ which is a~\cp graph.
By Harary~and~Schwenk~\cite{harary_communication_1974} this implies that~$G$ is TC-realizable and thus that~$\dd$ is TC-realizable.

If~$\sum \dd \geq 4(n-1)-2$ and (i)~$n \leq 2$ or (ii)~$n > 2$, $d_{n-1} \geq 2$, and~$d_n \geq 1$, then $\dd$ admits a realization~$G$ which has two spanning trees that share at most one edge (see~\Cref{two spanning trees}).
By Harary~and~Schwenk~\cite{harary_communication_1974} this implies that~$G$ is TC-realizable and thus that~$\dd$ is TC-realizable.

$(\Rightarrow)$
Suppose that~$\dd$ is TC-realizable and let~$G$ be one of its realizations.
By Hajnal~et~al.~\cite{hajnal_cure_1972}, $G$ has at least~$2n-4$ edges, which implies~$\sum\dd \geq 4(n-1)-4$.

First, consider the case that~$\sum\dd = 4(n-1)-4$.
That is, $G$ has~$2n-4$ edges.
Hence, by Göbel et al.~\cite{gobel_label-connected_1991}, $G$ is a~\cp graph.
Thus, by~\Cref{c4 sequences}, $d_1 < n-1$, and~$d_n \geq 2$.

Second, consider the case that~$\sum\dd \geq 4(n-1)-2$.
Since~$G$ is TC-realizable, $G$ is connected.
For~$n\leq 2$, $\dd$ is from~$\{(0),(1,1)\}$ as these are the only graphical sequences of length at most~2 fulfilling~$\sum\dd\geq 4(n-1)-2$.
Thus, consider~$n > 2$.
Since~$G$ is connected and~$n > 2$, $G$ has a minimum degree of 1, which implies that~$d_n \geq 1$.
It remains to show that~$d_{n-1}\geq 2$.
To this end, assume towards a contradiction that~$d_{n-1} = d_n = 1$.
Then, $G$ contains at least two vertices~$u$ and~$v$ of degree 1.
These vertices are not adjacent in~$G$, as otherwise, $G$ would not be connected.
Hence, the unique edge~$e_u$ incident with~$u$ and the unique edge~$e_v$ incident with~$v$ are distinct.
Since~$G$ is TC-realizable, there is a TC-labeling of the edges of~$G$.
To ensure that there is a temporal path from~$u$ to~$v$, this labeling has to assign a label to $e_u$ which is strictly smaller than the label assigned to~$e_v$.
But simultaneously, to ensure that there is a temporal path from~$v$ to~$u$, the label assigned to $e_v$ has to be strictly smaller than the label assigned to~$e_u$; a contradiction.
Consequently, $G$ contains no two vertices of degree~$1$, which implies that~$d_{n-1} \geq 2$ and~$d_n \geq 1$ in the case that~$n > 2$.
\end{proof}

\subsection{Realizations of TC-realizable Multigraphs}\label{sec:multi}

We now prove our characterization for TC-realizable multigraphical sequences:

\begin{theorem}\label{main multi}

A multigraphical sequence~$\dd = (d_1, \dots, d_n)$ with $s:= \sum_{i=1}^n d_i$ is TC-realizable if and only if one of the following holds
\begin{itemize}
\item $s = 4(n-1)-4$ and $d_n \geq 2$
\item $s \geq 4(n-1)-2$ and (a)~$n \leq 2$ or (b)~$n > 2$, $d_{n-1} \geq 2$, and~$d_n \geq 1$.
\end{itemize}
\end{theorem}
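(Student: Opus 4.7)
The plan is to follow the same template as the proof of \Cref{main gra}, replacing the two structural ingredients used there (\Cref{two spanning trees} and \Cref{c4 sequences}) with multigraphical analogs. As a first step, I would establish two auxiliary statements: first, that every multigraphical sequence with $s \geq 4(n-1)-2$ satisfying either $n \leq 2$, or $n>2$ with $d_{n-1} \geq 2$ and $d_n \geq 1$, admits a multigraph realization with two spanning trees sharing at most one edge; and second, that every multigraphical sequence with $s = 4(n-1)-4$ and $d_n \geq 2$ admits a \cp multigraph realization. Combining the first with \Cref{one edge towards EDSTs minus one} and the second with \Cref{c4 property multi} would immediately yield the $(\Leftarrow)$ direction.

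For the $(\Rightarrow)$ direction, the arguments of \Cref{main gra} should transfer almost verbatim. \Cref{min number edges of tc graphs} supplies the edge lower bound $s \geq 4(n-1)-4$. When $s = 4(n-1)-4$, any TC-realization $G$ has exactly $2n-4$ edges, and the natural multigraphical extension of Göbel et al.'s structural theorem would show that $G$ is a \cp multigraph; the minimum-degree-$2$ property of \cp multigraphs (a straightforward multigraphical analog of \Cref{c4 degree bounds}, obtained directly from the definition via the two spanning trees and the fact that a non-cycle vertex must have edges in both trees) then forces $d_n \geq 2$. When $s \geq 4(n-1)-2$ and $n > 2$, connectivity of a TC-realization forces $d_n \geq 1$, and the bidirectional-temporal-path argument used in \Cref{main gra} (two degree-one vertices would need their single incident edges to carry strictly smaller labels than one another, simultaneously in both directions) rules out $d_{n-1} = d_n = 1$, so $d_{n-1} \geq 2$. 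The case $n \leq 2$ is handled by enumerating the multigraphical sequences of length at most two.

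The real work is in the two multigraphical auxiliary statements above. For the two-spanning-trees statement, the inductive scheme of \Cref{sec:twotrees} can be transferred with little change: \Cref{thm:edt_76} already holds in its multigraphical form, and the multigraphical laying-off process (\Cref{laying off multi}) replaces the graphical one; since the gadgets in the inductive steps only attach simple edges or simple cycles, no multi-edge surgery is needed. For the \cp multigraph statement, the subcase $d_1 < n-1$ follows directly from \Cref{c4 sequences}, because every \cp graph is a \cp multigraph. The genuinely new subcase is $d_1 \geq n-1$, which is excluded in the simple setting by \Cref{c4 degree bounds}. Here the plan is to keep a simple central $C_4$, placing $v_1$ either on it or adjacent to it, and absorbing the excess degree of $v_1$ using multi-edges on the pendant structure, while preserving the two spanning trees that share exactly the two designated cycle edges.

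The main obstacle will be engineering these multi-edge gadgets in the $s = 4(n-1)-4$ case with $d_1 \geq n-1$. The challenge is to simultaneously preserve the induced-$C_4$ property of the central cycle (each of its four edges appearing \emph{exactly} once in the multigraph), respect the tight global count $2n-4$ on edges, and route two spanning trees so that they share precisely the two designated cycle edges. I expect this to resolve through a small collection of base-case gadgets parameterized by the value of $d_1$, combined with a recursion analogous to \Cref{minus four dn eq 2} that lays off a smallest entry and appeals to the base cases when $d_n$ becomes $3$, rather than through any fundamentally new idea beyond the simple case.
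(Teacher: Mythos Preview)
Your overall template is right and matches the paper's, but there is one genuine gap and one place where the paper's route is markedly simpler than what you sketch.

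\textbf{The gap: the $(\Rightarrow)$ direction at $s=4(n-1)-4$.} You plan to conclude $d_n\ge 2$ by first invoking ``the natural multigraphical extension of G\"obel et al.'s structural theorem'' to deduce that any TC-realizable multigraph on $2n-4$ edges is a \cp multigraph, and then reading off the minimum-degree bound. The paper does \emph{not} establish the converse direction of \Cref{c4 property} for multigraphs; only the forward implication (\Cref{c4 property multi}) is stated, and that is all the $(\Leftarrow)$ direction needs. Proving the multigraphical converse would mean revisiting G\"obel et al.'s argument, which is not obviously routine. The paper sidesteps this entirely with a short direct counting argument: if $d_n=1$, let $v_n$ be the degree-$1$ vertex with unique incident edge $e_n$ and neighbor $v^*$; every temporal path to or from $v_n$ crosses $e_n$ at time $\lambda(e_n)$, so each $w\neq v_n$ reaches $v^*$ before $\lambda(e_n)$ and is reached from $v^*$ after $\lambda(e_n)$, forcing $G-v_n$ to contain two edge-disjoint spanning trees; but $G-v_n$ has $n-1$ vertices and only $2n-5<2(n-2)$ edges, a contradiction. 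You should replace your appeal to a multigraphical G\"obel et al.\ with this argument.

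\textbf{A simpler route for the \cp multigraph auxiliary.} Your plan splits on $d_1<n-1$ versus $d_1\ge n-1$ and anticipates bespoke multi-edge gadgets in the latter case. The paper instead splits on $d_{n-2}$. If $d_{n-2}\ge 3$, a short computation gives $d_1\le n-3$, whence \Cref{realizable if dn geq 3} shows the sequence is graphical and \Cref{c4 sequences} applies directly. If $d_{n-2}=d_{n-1}=d_n=2$ (and $d_1\ge 4$, the small cases being handled by hand), one removes the three trailing $2$'s and decreases $d_1$ by $2$; the residual sequence is multigraphical with $\sum=4(n'-1)$ and $d'_{n'}\ge 2$, so it has a realization $G'$ with two edge-disjoint spanning trees by \Cref{two spanning trees multi sequence}. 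Attaching a fresh induced $4$-cycle $(v_1,x,y,z)$ at a vertex $v_1$ of degree $d_1-2$ in $G'$ yields a \cp multigraph realizing $\dd$. This completely avoids the multi-edge gadget engineering you flagged as the main obstacle: the central $C_4$ is always built from scratch on three new degree-$2$ vertices, and all multi-edges (if any) live inside $G'$, away from the cycle.
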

Note that the only difference from the characterization of (simple) graphical sequences (see \Cref{main gra}) is the absence of the restriction $d_1<n-1$ in the case $m=4(n-1)-4$.

To prove the statement, we again consider both items independently and mostly use the same arguments as for the graphical case.

\subsection{Intermediate characterization for spanning trees with one shared edge}
First, we provide a characterization for multigraphical sequences that allow for a realization that has two spanning trees with at most one shared edge.

\begin{theorem}

\label{two spanning trees multi}
Let~$\dd = (d_1, \dots, d_n)$ be a multigraphical sequence.
Then, $\dd$ admits a realization with two spanning trees that share at most one edge if and only if
$\sum_{i=1}^n d_i \geq 4(n-1)-2$ and (a)~$n\leq 2$ or (b)~$n > 2$, $d_{n-1} \geq 2$, and~$d_n \geq 1$.
\end{theorem}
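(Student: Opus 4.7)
The plan is to mirror the proof of \Cref{two spanning trees} for the graphical case, substituting the multigraphical analogs of the key tools: \Cref{two spanning trees multi sequence} in place of \Cref{thm:two-edst-kundu}, and \Cref{laying off multi} in place of \Cref{lem:havel-hakimi-correctness}. The $n \leq 2$ base case is handled by direct inspection of the multigraphical sequences of length at most two, namely $(0)$, $(0,0)$, and $\{(k,k) : k \geq 1\}$, verifying the biconditional case by case.

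For $n > 2$, the forward direction copies \Cref{no two trees if small degree} verbatim: the edge-count bound $|E(G)| \geq 2(n-1) - 1$ gives $\sum \dd \geq 4n-6$; connectivity forces $d_n \geq 1$; and if both $d_n$ and $d_{n-1}$ equaled $1$ in a connected (multi)graph on $n > 2$ vertices, the two distinct pendant edges at the degree-one vertices would be forced into every spanning tree, hence shared between any two of them. None of these arguments uses simplicity. For the backward direction I would distinguish three regimes. When $\sum \dd \geq 4(n-1)$ and $d_n \geq 2$, \Cref{two spanning trees multi sequence} with $k = 2$ directly yields a realization whose two edge-disjoint spanning trees vacuously share at most one edge. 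When $\sum \dd \geq 4(n-1)$ and $d_n = 1$ (so $d_{n-1} \geq 2$), I imitate \Cref{tc if deg 1}: first show $d_1 \geq 3$ from the sum bound, then apply \Cref{laying off multi} at $j = n$ to strip the edge $v_1 v_n$, obtaining a residual multigraphical sequence~$\dd'$ of length $n-1$ with $\sum \dd' \geq 4(n'-1)$ and $d'_{n'} \geq 2$; the previous case yields a realization of $\dd'$ with two edge-disjoint spanning trees, and reattaching $v_n$ via $v_1 v_n$ produces a realization of $\dd$ whose two spanning trees share exactly this edge.

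The remaining case $\sum \dd = 4n-6$ with $d_n \geq 2$ (which forces $d_n \in \{2,3\}$) is the delicate one, and I would handle it by dichotomy on whether $\dd$ is also graphical. If it is, then \Cref{two spanning trees} supplies a simple-graph realization with the desired property, which is simultaneously a multigraph realization. If not, the only Erdős--Gallai inequality that can fail is the one at $r = 1$, forcing $d_1 \geq n$ and thus the presence of doubled edges in any realization. In this sub-case, I would construct the realization explicitly by giving $v_1$ exactly $k := d_1 - (n-1)$ doubled edges to the next-largest-degree vertices $v_2, \ldots, v_{k+1}$ and single edges to the remaining $v_{k+2}, \ldots, v_n$, then realize the residual multigraphical sequence on $\{v_2, \ldots, v_n\}$ and extend it to two spanning trees: one takes the first copy of each $v_1 v_i$ edge to form a star, while the other takes the second copy of every doubled edge, together with the remaining sub-multigraph on $\{v_2, \ldots, v_n\}$ and one single edge $v_1 v_j$ which becomes the lone shared edge with the star. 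The main obstacle is this non-graphical sub-case: one must verify that the residual sequence on $\{v_2, \ldots, v_n\}$ is multigraphical, that it can be realized in a way compatible with the intended extension, and that exactly one of $v_1$'s single edges always suffices to complete the second spanning tree without introducing further shared edges, across all admissible values of $k$ and $d_n \in \{2,3\}$.
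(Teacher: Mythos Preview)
Your forward direction and the sub-cases $\sum\dd \geq 4(n-1)$ (both $d_n\geq 2$ and $d_n=1$) are fine and match the paper. Two issues remain.

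\textbf{A small gap in case coverage.} Your three regimes do not cover $\sum\dd = 4n-6$ with $d_n = 1$ (and $d_{n-1}\geq 2$). This is harmless---your $d_n=1$ argument works verbatim once you weaken the hypothesis to $\sum\dd\geq 4n-6$, since stripping $v_1v_n$ still leaves $\sum\dd' \geq 4(n'-1)$---but as written the case is missing.

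\textbf{A genuine gap in the non-graphical sub-case.} Your proposed explicit construction for $\sum\dd = 4n-6$, $d_n\geq 2$, $\dd$ not graphical, is not verified and, as you acknowledge, is the main obstacle. Concretely: for your $T_2$ to be a spanning tree, the residual sub-multigraph on $\{v_2,\ldots,v_n\}$ must be a \emph{forest} (else there is a cycle avoiding $v_1$), and moreover a forest whose $k+1$ components each contain exactly one of the vertices to which $v_1$ is attached in $T_2$. You have not shown that the residual sequence is always realizable as such a forest, nor that the components can be arranged to each meet a designated hub. This is not immediate: one must control both the forest realization and the choice of the shared single edge $v_1v_j$ simultaneously.

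The paper sidesteps this entirely with a much simpler device. First, when $d_n=3$ one has $d_1 \leq 4n-6-3(n-1)=n-3$, so the sequence is automatically graphical and \Cref{minus two dn eq 3} applies; the non-graphical sub-case therefore only occurs when $d_n=2$. In that case the paper simply \emph{removes} the entry $d_n=2$ (not a laying-off; just delete it), obtaining $\dd'$ of length $n'=n-1$ with $\sum\dd' = 4(n'-1)$ and $d'_{n'}=d_{n-1}\geq 2$; a short computation with \Cref{multi realize chara} (using $d_1\leq 2n-4$) shows $\dd'$ is multigraphical, so \Cref{two spanning trees multi sequence} gives a realization $G'$ with two edge-disjoint spanning trees $T_1',T_2'$. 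Then \emph{subdivide} any edge $uv\in T_1'$ by the new vertex $v_n$: the trees $T_1=(T_1'-uv)\cup\{uv_n,vv_n\}$ and $T_2=T_2'\cup\{uv_n\}$ share exactly the edge $uv_n$. This avoids any delicate forest-realization argument.
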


\begin{proof}
$(\Rightarrow)$
Suppose that~$\dd$ admits a realization~$G$ with two spanning trees that share at most one edge.
For the (multi)graph~$G$ to have two spanning trees that share at most one edge, $G$ needs at least~$2(n-1)-1$ edges.
This implies $\sum_{i=1}^n d_i \geq 4(n-1)-2$.
Further, by the same arguments used to prove~\Cref{no two trees if small degree}, we get that either~$n\leq 2$ or, if~$n > 2$, $d_{n-1} \geq 2$ and~$d_n \geq 1$.
This was due to the fact that if any of these conditions would not hold, then each realization of~$\dd$ is disconnected or contains at least two distinct edges incident with degree-1 vertices; hence, no realization with two spanning trees that share at most one edge is possible.
This completes the proof of this direction.

$(\Leftarrow)$
First consider the case of~$n \leq 2$.
If~$n = 1$, then~$(0)$ is the only multigraphical sequence, for which the unique realization has two spanning trees that share no edge.
If~$n=2$, then each multigraphical sequence~$(d_1,d_2)$ fulfills~$d_1 = d_2$. Out of all such sequences, $(0,0)$ is the only that does not fulfill $\sum\dd\geq 4(n-1)-2$. Every other such sequence has a unique realization and this realization is connected.
Since~$n = 2$, the connectivity implies that there are two spanning trees, both consisting of one multiedge, that share at most one edge.

In the remainder, consider~$n > 2$ and assume that~$d_{n-1}\geq 2$ and~$d_n \geq 1$.
First, consider~$d_n = 1$.
Since~$n>2$ and~$\sum\dd \geq 4(n-1) -2$, $d_1\geq 3$.
Then, by laying off~$d_n$, the resulting degree sequence~$\dd'$ is also multigraphical due to~\Cref{laying off multi}.
Recall that laying off~$d_n$ means, that we remove~$d_n$ from~$\dd$ and decrement~$d_1$ by~$d_n=1$ (and potentially reorder the degree sequence).
The resulting degree sequence~$\dd'$ has then length~$n':=n-1$ and fulfills~$\sum\dd' = \sum\dd -2 \geq 4(n-1)-2-2 = 4(n'-1)$.
Hence, due to~\Cref{two spanning trees multi}, $\dd'$ has a realization~$G'$ that has two disjoint spanning trees~$T_1'$ and~$T_2'$.
Let~$v_1$ be an arbitrary vertex of~$G'$ of degree~$d_1-1$.
Then, the multigraph~$G$ obtained by adding a new vertex~$v_n$ to~$G'$ and the edge~$v_1v_n$ is a realization of~$\dd$.
Moreover, $G$ has the two spanning trees~$T_1:=T'_1 \cup v_1v_n$ and~$T_2:=T'_2 \cup v_1v_n$ that share only one edge.
Hence, the statement holds if~$d_n = 1$.

Thus, consider~$d_n \geq 2$.
If~$\sum\dd \geq 4(n-1)$, then~\Cref{two spanning trees multi sequence} implies that~$\dd$ has a realization with two disjoint spanning trees, which shows the statement under this assumptions.
Hence, it remains to consider the case that~$\sum\dd = 4(n-1)-2$.
Note that this implies that~$d_n \in \{2,3\}$.
We distinguish between the values of~$d_n$.

If~$d_n = 3$, then $n\geq 6$ and~$d_1 \leq n-6$ since~$\sum\dd = 4(n-1) -2$.
By~\Cref{realizable if dn geq 3} this implies that~$\dd$ is also a graphical sequence, for which the statement holds due to~\Cref{minus two dn eq 3}.

It thus remains to consider the case that~$d_n = 2$.
To this end, note that since~$d_n = 2$, $\sum_{i=2}^n d_i \geq 2(n-1)$.
This implies~$d_1 \leq \sum\dd - 2(n-1) = 4(n-1)-2 -2(n-1) = 2(n-1)-2$.
Consider the degree sequence~$\dd'$ obtained by removing~$d_n$ from~$\dd$.
Then, $\dd'$ has length~$n':= n-1$ and fulfills~$\sum\dd' = \sum\dd - 2 = 4(n-1)-4 = 4(n'-1)$.
Since~$d_1 \leq 2(n-1)-2 = 2(n'-1) \leq \frac{1}{2}\cdot \sum\dd'$, \Cref{multi realize chara} implies that~$\dd'$ is multigraphical.
Moreover, since~$d_{n-1} \geq  d_n = 2$, $\dd'$ has a realization~$G'$ with two disjoint spanning trees~$T_1'$ and~$T_2'$.
Let~$uv$ be an arbitrary edge of~$T_1'$ and let~$G$ be the graph obtained from~$G'$ by subdividing edge~$uv$ into a new vertex~$v_n$.
The multigraph~$G$ is a realization for~$\dd$, since each vertex of~$G'$ has the same degree in~$G$, and~$v_n$ has degree~2.
Moreover, $G$ has the two spanning trees~$T_1:= (T_1' - uv) \cup \{uv_n,vv_n\}$ and~$T_2:= T_2' \cup uv_n$ that share only one edge (namely~$uv_n$).
Hence, the statement also holds for~$d_n=2$ and thus in all stated cases.
\end{proof}

\subsection{Intermediate characterization for~\cp multigraphs}
Next, we provide a characterization for multigraphical sequences that allow for a realization which is a~\cp multigraph.

\begin{theorem}\label{c4 sequences multi}
Let~$\dd = (d_1, \dots, d_n)$ be a multigraphical sequence.
Then, $\dd$ admits a realization which is a~\cp multigraph if and only if
$\sum_{i=1}^n d_i = 4(n-1)-4$ and~$d_n \geq 2$.
\end{theorem}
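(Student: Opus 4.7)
The plan is to mirror the structure of the proof of \Cref{c4 sequences}, adapting it so that the restriction $d_1<n-1$ (which was forced by \Cref{c4 degree bounds} in the simple setting) can be dropped when multi-edges are available. For the forward direction, if $G$ realizes $\dd$ as a \cp multigraph with central cycle $C$ and spanning trees $T_1,T_2$ sharing exactly two edges, both on $C$, then $|E(G)|=2n-4$ gives $\sum d_i=4(n-1)-4$, and every vertex $v$ has degree at least $2$: if $v\in C$, the two cycle edges at $v$ (which are single edges, since $C$ is induced) witness this; if $v\notin C$, then $v$ is incident to an edge of $T_1$ and an edge of $T_2$, and these must be two distinct edges because the shared edges of $T_1\cap T_2$ all lie on $C$ and hence are not incident to $v$. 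Thus $d_n\geq 2$.

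For the backward direction, the identity $\sum d_i=4(n-1)-4$ together with $d_n\geq 2$ forces $d_n\in\{2,3\}$. When $d_n=3$, every entry is at least $3$, so $4n-8=\sum d_i\geq d_1+3(n-1)$ yields $d_1\leq n-5$, which in particular forces $n\geq 8$. Consequently $d_4\geq 3$, $d_1\leq n-1$, and $\sum d_i\leq 4(n-1)$, so \Cref{realizable if dn geq 3} shows that $\dd$ is also graphical. Combined with $d_1<n-1$ and $d_n\geq 2$, \Cref{c4 sequences} then produces a \cp graph realization, which is in particular a \cp multigraph realization.

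The remaining case $d_n=2$ is handled by induction on $n$. The base case $n=4$ forces $\dd=(2,2,2,2)$, realized by $C_4$ itself. For the inductive step $n\geq 5$, we reduce to a shorter sequence $\dd'$ of length $n-1$ by removing the degree-$2$ vertex $v_n$. If $d_2\geq 3$, set $\dd'$ to be $(d_1-1,d_2-1,d_3,\dots,d_{n-1})$ reordered; this corresponds to attaching $v_n$ (in the reconstruction) via single edges to two distinct vertices of residual degrees $d_1-1$ and $d_2-1$. If $d_2=2$, then $\dd=(2n-6,2,\dots,2)$ (forced by $\sum d_i=4n-8$ together with $d_n\geq 2$), and we set $\dd'=(2n-8,2,\dots,2)$, corresponding to a double edge between $v_n$ and $v_1$. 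In both cases one checks via \Cref{multi realize chara} that $\dd'$ is multigraphical, has sum $4((n-1)-1)-4$, and has minimum entry at least $2$; the inductive hypothesis then supplies a \cp multigraph realization $G'$ of $\dd'$. Adding $v_n$ back with its two edges yields a realization $G$ of $\dd$, and the \cp structure transfers: the central cycle $C$ of $G'$ remains induced in $G$ since $v_n$ is new, and the extended spanning trees $T_1=T_1'\cup\{e_1\}$, $T_2=T_2'\cup\{e_2\}$, where $e_1$ and $e_2$ are the two new edges incident to $v_n$ (either going to two distinct vertices, or being the two copies of a double edge), still share exactly the two edges of $T_1'\cap T_2'$, all of which lie on $C$.

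The main obstacle is the subcase $d_2=2$, which has no direct analogue in the graphical proof of \Cref{c4 sequences}: here $d_1=2n-6$ can exceed $n-2$, so we cannot avoid multi-edges, and one must verify that distributing the two copies of the double edge $v_1v_n$ one to each extended spanning tree preserves the property of sharing exactly the two cycle edges, and that the residual sequence remains multigraphical even though it need not be graphical.
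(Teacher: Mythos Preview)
Your argument is correct and complete in substance, with one presentational wrinkle: in the subcase $d_n=2$, $d_2\geq 3$, the residual sequence $\dd'$ can have minimum entry $3$ (for instance when $d_{n-1}\geq 3$ and $d_2\geq 4$), so strictly speaking the inductive hypothesis for ``$d_n=2$'' does not apply there; you should either phrase the induction over all multigraphical sequences satisfying the two conditions (so both $d'_{n'}=2$ and $d'_{n'}=3$ are covered), or explicitly invoke the already-established $d_n=3$ case, exactly as \Cref{minus four dn eq 2} does. The check that $\dd'$ is multigraphical in the $d_2\geq 3$ branch also deserves one line: from $d_1+d_2+2(n-2)\leq\sum\dd=4n-8$ and $d_2\geq 3$ you get $d_1\leq 2n-7$, hence $d'_1\leq d_1\leq 2n-7<2(n-2)=\tfrac12\sum\dd'$.

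Your route genuinely differs from the paper's. The paper does \emph{not} induct on $n$ here. Instead it splits on whether $d_{n-2}\geq 3$: if so, the sequence is graphical and \Cref{c4 sequences} applies; if $d_{n-2}=d_{n-1}=d_n=2$, it removes all three of these entries and decreases $d_1$ by $2$, obtaining a multigraphical sequence $\dd'$ of length $n-3$ with $\sum\dd'=4(n'-1)$ and $d'_{n'}\geq 2$, which by \Cref{two spanning trees multi sequence} has a realization $G'$ with two \emph{edge-disjoint} spanning trees; a fresh induced $C_4$ on $\{v_1,x,y,z\}$ is then attached at the vertex $v_1$ of degree $d_1-2$, producing the central cycle in one shot. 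So the paper trades the induction for a single reduction to the edge-disjoint-spanning-trees result. Your approach is more self-contained (it never invokes \Cref{two spanning trees multi sequence} or \Cref{two spanning trees multi}), and the $d_2=2$ branch with the double edge is an elegant way to handle exactly the sequences $(2n-6,2,\dots,2)$ that force multi-edges; the paper's approach is non-recursive and makes the construction of the central cycle explicit, which is convenient for the linear-time algorithm in \Cref{sec:algos}.
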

\begin{proof}
$(\Rightarrow)$
By definition, each~\cp multigraph on~$n$ vertices has exactly~$2n-4$ multiedges.
Hence, $\sum\dd = 4(n-1)-4$ holds.
Next, we show that~$d_n \geq 2$ also has to hold.
The argument is identical to the one in the proof of~\Cref{c4 degree bounds}:
By definition, each vertex~$v$ of the~\cp multigraph~$G$ is (i)~part of the central cycle or (ii)~a vertex that has at least one incident multiedge in each of the two spanning trees that only share edges of the central cycle.
In both cases, vertex~$v$ has degree at least 2.
Hence, the minimum degree of~$G$ is at least 2, which implies that~$d_n \geq 2$ to allow for a realization which is a~\cp multigraph.
This completes this direction.

$(\Leftarrow)$
We have to show that~$\dd$ has a realization which is a~\cp multigraph if~$\sum\dd = 4(n-1)-4$ and~$d_n \geq 2$.
Note that this implies that~$n\geq 4$.
To show the statement, we distinguish between the possible values of~$d_{n-2}$.

If~$d_{n-2}\geq 3$, then~$\sum\dd - d_1 \geq 3(n-3) + d_{n-1} + d_n \geq 3(n-3) + 4 = 3(n-1) - 2$.
By~$\sum\dd = 4(n-1)-4$, this implies~$d_1 \leq n-3$.
Thus, by~\Cref{realizable if dn geq 3}, $\dd$ is also a graphical sequence, since~$d_4 \geq 3$.
This is due to the fact that~$d_{n-2} \geq 3$ and~$n \geq 8$, as otherwise, $\sum \dd \geq 3n > 4(n-1)-4$.
Hence, \Cref{c4 sequences} thus implies that~$ \dd$ admits a realization which is a~\cp graph (and thus a~\cp multigraph).

Hence, we only need to consider the case that~$d_{n-2} = 2 = d_{n-1} = d_{n}$.
First, we consider special cases: $\dd \in \{(2,2,2,2), (3,3,2,2,2)\}$.
Note that these are the only two options for~$\dd$ for which~$\sum\dd = 4(n-1)-4$, $d_1 \leq 4$, and~$d_{n-2} = d_{n-1} = d_{n} = 2$. Both these degree sequences are graphical and fulfill~$d_1 < n-1$ and~$d_n \geq 2$.
Hence, by~\Cref{c4 sequences}, $\dd$ admits a realization which is a~\cp graph (and thus a~\cp multigraph).

For all other options of~$\dd$, we can thus assume that~$d_1 \geq 4$.

We upper-bound the values of~$d_1$ and~$d_2$.
Since~$d_n = 2$, $\sum \dd - d_1 \geq 2(n-1)$.
Moreover, by~$\sum\dd = 4(n-1)-4$, this implies that~$d_1 \leq 2(n-1)-4$.
Furthermore, $d_2 \leq 2(n-1) - 6$, as otherwise, $d_1+d_2 > 4(n-1) - 10$, which would imply~$\sum\dd \geq d_1 + d_2 + (n-3)\cdot 2 > 4(n-1) - 10 + 2(n-3) \geq 4(n-1)-4$ by~$n\geq 5$.
With these upper bounds for~$d_1$ and~$d_2$, we can now present a realization of~$\dd$ which is a~\cp multigraph.

Let~$\dd' := (d_1', \dots, d_{n'}')$ be the degree sequence with~$n' := n-3$ obtained from~$\dd$ by (i)~removing~$d_{n-2}$, $d_{n-1}$, and~$d_{n}$, and by (ii)~reducing~$d_1$ by~$2$ (and potentially reordering).

Hence, $d_1' = \max(d_1-2, d_2) \leq 2(n-1)-6 = 2(n'-1)$ by the above argumentation.
Moreover, $\dd'$ fulfills~$\sum\dd' = \sum\dd - 8 = 4(n-1)-4 - 8 = 4(n'-1)$.
Thus, $d'_1 \leq \sum\dd' -d_1'$, which implies that~$\dd'$ is multigraphical by~\Cref{multi realize chara}.
Furthermore, $d_{n'}' = \min(d_1-2, d_{n-3}) \geq 2$ by the fact that~$n > 4$, $d_n \geq 2$, and~$d_1 \geq 4$.
\Cref{two spanning trees multi} thus implies that~$\dd'$ has a realization~$G'$ which has two disjoint spanning trees~$T_1'$ and~$T_2'$.

We obtain a realization~$G$ of~$\dd$ as follows:
Let~$v_1$ denote an arbitrary vertex of~$G'$ with degree~$d_1-2$.
We add three new vertices~$x,y$, and~$z$ to~$G'$ and add the edges~$xy,yz,v_1x,v_1z$.
Note that~$G$ is a realization of~$\dd$, since we added three new vertices of degree 2 to~$G'$ and increased the degree of a vertex of degree~$d_1-2$ by~$2$.
It remains to show that~$G$ is a~\cp multigraph.
This is the case, since~$C:=(v_1,x,y,z)$ is an induced cycle of length 4 in~$G$ and~$T_1 := T_1' \cup \{v_1x,xy,yz\}$ and~$T_2:= T_2' \cup \{v_1z,xy,yz\}$ are two spanning trees of~$G$ that only share two edges and both these edges are from~$C$ (namely the edges~$xy$ and~$yz$).
Consequently, $\dd$ admits a realization which is a~\cp multigraph.
\end{proof}

With these characterizations, we can prove our characterization for TC-realizable multigraphical sequences, namely~\Cref{main multi}.

\begin{proof}[Proof of~\Cref{main multi}]
The proof is very similar to the one of~\Cref{main gra}. 

$(\Leftarrow)$
If~$s = 4(n-1)-4$ and~$d_n \geq 2$, then due to~\Cref{c4 sequences multi}, $\dd$ has a realization~$G$ which is a~\cp multigraph.
By~\Cref{c4 property multi} this implies that~$G$ is TC-realizable and thus that~$\dd$ is TC-realizable.

If~$s \geq 4(n-1)-2$ and (a)~$n \leq 2$ or (b)~$n > 2$, $d_{n-1} \geq 2$, and~$d_n \geq 1$, then $\dd$ admits a realization~$G$ which has two spanning trees that share at most one edge (see~\Cref{two spanning trees multi}).
By~\Cref{one edge towards EDSTs minus one} this implies that~$G$ is TC-realizable and thus that~$\dd$ is TC-realizable.

$(\Rightarrow)$
Suppose that~$\dd$ is TC-realizable and let~$G$ be one of its realizations.
By Hajnal~et~al.~\cite{hajnal_cure_1972}, $G$ has at least~$2n-4$ edges, which implies~$\sum\dd \geq 4(n-1)-4$.

Firstly, consider the case that~$\sum\dd = 4(n-1)-4$.
We need to show that~$d_n \geq 2$.
Note that~$G$ has exactly~$2n-4$ edges.
Assume towards a contradiction that~$d_n \leq 1$.
Then, $G$ contains at least one vertex~$v_n$ of degree at most~$1$.
If~$v_n$ has degree~$0$, $G$ is not connected; a contradiction to the fact that~$G$ is TC-realizable.
Otherwise, there exists a single edge~$e_n$ incident with~$v_n$.
Let~$\lambda\colon E \to \mathbb{N}$ be a TC labeling of~$G$ and let~$v^*$ be the unique neighbor of~$v_n$ in~$G$.
Since~$e_n$ is the only edge incident with~$v_n$, all temporal paths from and towards~$v_n$ traverse edge~$e_n$ at time~$\lambda(e_n)$.
Moreover, since each vertex of~$V(G)\setminus \{v_n\}$ has a temporal path from and to~$v_n$, this implies that for each vertex~$w$ of~$V(G)\setminus \{v_n\}$ there is (i)~a temporal path from~$w$ that reaches~$v^*$ prior to label~$\lambda(e_n)$ and (ii)~a temporal path from~$v^*$ to~$w$ that starts later than label~$\lambda(e_n)$.
This means in particular that~$G - v_n$ has at least two edge-disjoint spanning trees.
Hence, $G - v_n$ contains at least~$2(n-1) = 2n-2$ edges.
This contradicts the fact that~$G$ has exactly~$2n-4$ edges.
Consequently, $d_n \geq 2$.

Secondly, consider the case that~$\sum\dd \geq 4(n-1)-2$.
Since~$G$ is TC-realizable, $G$ is connected.
For~$n\leq 2$, all multigraphical sequences have connected realizations apart from $(0,0)$, which must not be considered as it does not fulfill~$\sum\dd\geq 4(n-1)-2$.
Hence, consider~$n > 2$.
Since~$G$ is connected and~$n > 2$, $G$ has a minimum degree of 1, which implies that~$d_n \geq 1$.
It remains to show that~$d_{n-1}\geq 2$.
To this end, assume towards a contradiction that~$d_{n-1} = d_n = 1$.
Then, $G$ contains at least two vertices~$u$ and~$v$ of degree 1.
These vertices are not adjacent in~$G$, as otherwise, $G$ would not be connected.
Hence, the unique edge~$e_u$ incident with~$u$ and the unique edge~$e_v$ incident with~$v$ are distinct.
Since~$G$ is TC-realizable, there is a TC labeling of the edges of~$G$.
To ensure that there is a temporal path from~$u$ to~$v$, this labeling has to assign a label to $e_u$ which is strictly smaller than the label assigned to~$e_v$.
But simultaneously, to ensure that there is a temporal path from~$v$ to~$u$, the label assigned to $e_v$ has to be strictly smaller than the label assigned to~$e_u$; a contradiction.
Consequently, $G$ contains no two vertices of degree~$1$, which implies that~$d_{n-1} \geq 2$ and~$d_n \geq 1$ in the case that~$n > 2$.
\end{proof}

\section{Linear-Time Algorithm for Realizing TC (Multi)graphs}\label{sec:algos}
In the previous section, we presented a characterization of (multi)graphical degree sequences that are TC-realizable (see~\Cref{main gra,main multi}).
These necessary and sufficient conditions can be checked in $\Oh(n)$~time in the graphical  case and in $\Oh(n+m)$~time in the multigraphical case, where~$m:= \frac{1}{2}\cdot \sum\dd$ denotes the number of edges in each realization.
Hence, in linear time, we can detect whether there is a TC-realizable graph for the given degree sequence.
In this section, we go one step further and show that in both the graphical and multigraphical case, we can construct such a realization (together with a TC labeling) in $\Oh(n+m)$~time.
To this end, we observe that all existential proofs in the previous sections are constructive and allow for linear time implementations with the help of two simple data structures.

\subsection{The data structures}
In the following, we assume that the graphs are stored via adjacency lists.
Since our arguments rely on spanning trees, we assign two Boolean flags to each edge to indicate whether this edge belongs to the first tree, the second tree, both trees, or neither trees.
This allows us to extract the two spanning trees in $\Oh(n+m)$~time, once the final realization is constructed.

The data structure for degree sequences is as follows.
Instead of storing a degree sequence as a sequence of numbers, we store it as a double-linked list that stores entries of the form~$(x,c)$.
Such entries indicate that the respective degree sequence contains exactly~$c$ entries of value~$x$.
The entries of the double-linked list are sorted decreasingly according to their~$x$-values.
In this way, for both the graphical and multigraphical cases, we can lay off a degree~$d_i$ in $\Oh(d_i)$~time, since we only need to remove the~$d_i$ entry and rearrange the~$c$-values of at most~$d_i+1$ previous entries of the data structure.
Using this data structure, we can ensure that the degree sequence always stays sorted and no reordering is necessary.

Similarly, the data structure for (multi)graphs~$G$ is as follows.
It is also a double-linked list and stores entries of the form~$(x,S)$.
Such entries indicate that the vertices of~$S$ are exactly the vertices of~$G$ that have degree exactly~$x$.
These entries are also sorted decreasingly according to their~$x$-values. 
This way, we can reattach a vertex of degree~$d_n$ to~$G$ by adding edges properly to vertices of the first~$d_n$ entries of the data structure.
That is, we can obtain a realization for the degree sequence that, after laying off a vertex of degree~$d_n$, leads to the degree sequence of~$G$.
In particular, this can be done in $\Oh(d_n)$~time.

In case of a realization for a~\cp (multi)graph, we also output the central cycle separately.

\subsection{Realizations for two edge-disjoint spanning trees}
    We now turn to the algorithmic aspects of finding a realization with two edge-disjoint spanning trees and show that such a realization can be computed in linear time.
    
    For the graphical case, we used the result by Kundu~\cite{kundu_disjoint_1974}.     While their proof is constructive in nature, the arguments are brief and incomplete.
    Thus, we give a complete construction in \Cref{lem: simple constructing 2 edst in poly} and argue how it can be implemented in $\Oh(n+m)$ time using our data structure.
    
    For the multigraphical case, we build on the result by Gu et al.~\cite{gu_multigraphic_2012}, which was proven non-constructively. 
    In \Cref{lem: multi constructing 2 edst in poly}, we give a complete inductive construction for such multigraphical sequences, and argue that this construction, too, can be carried out in $\Oh(n+m)$ time using our data structures.

    \begin{lemma}\label{lem: simple constructing 2 edst in poly}
    
        Let $\dd=(d_1,\dots,d_n)$ be a graphical sequence with $\sum\dd\geq4(n-1)$ and $d_n\geq 2$. 
        We can compute a realization~$G$ of $\dd$ and two disjoint spanning trees of~$G$ in $\Oh(n+m)$~time.
    \end{lemma}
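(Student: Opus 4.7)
The plan is to make explicit the inductive nature of Kundu's argument~\cite{kundu_disjoint_1974} (which underlies the $k=2$ case of \Cref{thm:two-edst-kundu} for graphical sequences) and to implement it using the data structures described above. We proceed by induction on $n$, handling the base case $n \leq 4$ directly: for $n=4$ the only graphical sequence satisfying the hypotheses is $(3,3,3,3)$, whose unique realization $K_4$ trivially contains two edge-disjoint spanning trees.

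For the inductive step, we first observe that the hypotheses force $d_2 \geq 3$: otherwise $d_i \leq 2$ for all $i \geq 2$, which together with $d_1 \leq n-1$ gives $\sum \dd \leq (n-1) + 2(n-1) < 4(n-1)$. When $d_n = 2$, we apply the laying-off procedure of \Cref{def:havel hakimi laying off simple} to $v_n$. The resulting residual sequence $\dd'$ has length $n' = n-1$, satisfies $\sum \dd' = \sum \dd - 4 \geq 4(n'-1)$, and has $d'_{n'} \geq \min(d_2 - 1, d_{n-1}) \geq 2$; thus it remains within the inductive scope. We invoke the inductive hypothesis to obtain a realization $G'$ with two edge-disjoint spanning trees $T_1', T_2'$, then form $G$ by adding $v_n$ together with edges to its two Havel--Hakimi neighbors, extending the trees by marking one new edge as part of $T_1$ and the other as part of $T_2$. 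When $d_n \geq 3$, a more delicate reduction is needed; here we mirror the structural case analysis used in \Cref{minus two dn eq 3} and the surrounding lemmas, peeling a carefully chosen degree-$3$ substructure so that both the sum and minimum-degree conditions remain intact for the residual.

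For the linear-time implementation, we rely on the double-linked list representation of the degree sequence and the vertex-count grouping of the graph. Each laying-off step processes only the first $\Oh(d_n)$ entries of these lists, the corresponding reattachment (choosing where to place the two tree edges incident to $v_n$ as well as any remaining edges) is also $\Oh(d_n)$, and the Boolean tree-membership flags on each edge are set on the fly. Summed over all inductive steps, this yields a total cost of $\Oh(\sum_{i=1}^n d_i) = \Oh(m)$ for all peels and reattachments, plus $\Oh(n)$ bookkeeping for initialization and base cases; the two spanning trees can then be extracted from the flags in $\Oh(n+m)$ time.

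The main obstacle will be the case $d_n \geq 3$ with $\sum \dd$ tight near $4(n-1)$, where a naive peeling of $v_n$ would cause $\sum \dd'$ to fall below $4(n'-1)$ and thereby escape the inductive scope. The resolution is to either lay off a more cleverly selected low-degree vertex, or to absorb a small cluster of vertices at once while preserving the tree structure, in the same spirit as the structural tricks already used in Section~\ref{sec:twotrees}. None of these refinements raises the per-step cost beyond $\Oh(d_n)$, so the overall $\Oh(n+m)$ bound is preserved.
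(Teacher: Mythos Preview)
Your outline matches the paper's proof in its overall architecture (induction on $n$, base case $K_4$, Havel--Hakimi peel when $d_n=2$, and a special treatment when $d_n\ge 3$ at the tight sum), and your linear-time accounting is correct. However, the one place where real work is needed---the case $\sum\dd = 4(n-1)$ and $d_n = 3$---is left as a hand-wave, and the hints you give do not lead to a working construction.

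Concretely: in this tight case, laying off $d_n=3$ via Havel--Hakimi yields $\sum\dd' = 4(n-1)-6 < 4(n'-1)$, so the residual escapes the inductive hypothesis, exactly as you note. But ``laying off a more cleverly selected low-degree vertex'' cannot help, since $d_n$ \emph{is} the smallest degree; and pointing to \Cref{minus two dn eq 3} is off target, as that lemma treats $\sum\dd = 4(n-1)-2$ and uses wheel-plus-cycle gadgets that do not transfer here. The paper's actual move is different and specific: remove the entry $d_n=3$ and decrement $d_1$ by~$1$, giving a residual $\dd'$ of length $n'=n-1$ with $\sum\dd' = 4(n'-1)$ exactly. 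This residual is \emph{not} produced by Havel--Hakimi, so its graphicality must be argued separately via \Cref{realizable if dn geq 3} (checking $d'_1\le n'-1$, $d'_4\ge 3$, $d'_{n'}\ge 3$). One then takes a realization $G'$ with two edge-disjoint spanning trees (by induction), and reattaches $v_n$ by \emph{subdividing} an edge $v_iv_j$ of $G'$ not incident to $v_1$ and adding the single edge $v_1v_n$; the trees extend as $T_1 := T_1'\cup v_1v_n$ and $T_2 := (T_2'- v_iv_j)\cup\{v_nv_i,v_nv_j\}$. This residual-plus-subdivision step is the missing idea in your proposal, and without it the induction does not close. (Incidentally, when $d_n\ge 3$ but $\sum\dd > 4(n-1)$, the ordinary Havel--Hakimi peel already works---no delicate reduction is needed there, contrary to what your write-up suggests.)
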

    \begin{proof}
        To prove the statement, we show that for each $n$ and for each graphical sequence $\dd$ of length $n$ with $\sum\dd\geq4(n-1)$ and $d_n\geq 2$, we can construct a realization with two edge-disjoint spanning trees. We show this statement via induction over $n$.
        
        For the base case, note that for~$n \in \{1,2,3\}$, there is no graphical sequence of length~$n$ that sums up to~$4(n-1)$. 
        Moreover, for~$n = 4$, $(3,3,3,3)$ is the only graphical sequence of length~$n$ with~$\sum\dd\geq 4(n-1)$ and~$d_n \geq 2$.
        Since the complete graph on four vertices, which obviously contains two edge-disjoint spanning trees, is the only realization of~$\dd$, the base case holds for~$n\leq 4$.
        
        For the inductive step, let $n\geq 5$ and assume that the statement holds for $n-1$.
        Let $\dd$ be an arbitrary graphical sequence of length $n$ fulfilling $\sum\dd\geq4(n-1)$ and $d_n\geq 2$. We show that $\dd$ admits a realization with two edge-disjoint spanning trees.
        Observe that whenever $\sum\dd=4(n-1)$, it holds that $d_n\in\{2,3\}$ as otherwise $\sum\dd\geq4n$.
                 We distinguish two cases depending on whether $\sum\dd=4(n-1)$ and $d_n=3$ or not, and begin by addressing the latter. 

        \medskip\noindent{\sffamily\bfseries\color{darkgray} Case 1:}\quad 
            If $\sum\dd= 4(n-1)$ and $d_n=2$, or $\sum\dd> 4(n-1)$ and $d_n\geq 2$, consider the residual sequence $\dd' = (d_1', \dots, d_{n'}')$ of length $n'=n-1$ obtained from laying off the entry~$d_n$ according to \Cref{def:havel hakimi laying off simple}.

            Due to \Cref{lem:havel-hakimi-correctness}, $\dd'$ is graphical.
            To show that $\dd'$ admits a realization containing two edge-disjoint spanning trees, we have to show that $d_{n'}'\geq 2$ and $\sum\dd'\geq 4(n-2)$. We distinguish by the value of $d_n$.
        
            If $d_n=2$, then $d_2>2$ as otherwise $\sum\dd=d_1+2(n-1)<3(n-1)$. Hence, $d'_{n'}=\min(d_2-1,d_{n-1})\geq 2$ since $n\geq 5$, and $\sum\dd'=\sum\dd-4\geq4(n-1)-4=4(n-2)$.
                        If~$d_n=3$, then $d'_{n'}=\min(d_{n-1},d_3-1)\geq 2$. Recall that~$\sum \dd$ is even and larger than~$4(n-1)$. Hence $\sum\dd \geq 4(n-1)+2$, and it follows 
            $\sum\dd'=\sum\dd-6\geq4(n-1)+2-6=4(n-2)$.
                        If $d_n\geq4$, then $d'_{n'}\geq d_{n-1}-1\geq4-1=3$, and             it follows from $\sum\dd\geq n\cdot d_n$ that $\sum\dd'=\sum\dd-2d_n\geq n\cdot d_n - 2d_n = (n-2)\cdot d_n \geq 4(n-2)$.
        
            In all three cases, $\dd'$ admits a realization $G'$ containing two edge-disjoint spanning trees $T_1'$ and $T_2'$ by the induction hypothesis.
            Now consider the graph $G$ obtained from $G'$ by adding a vertex $v_n$ of degree $d_n$ and connecting it to $d_n$ distinct vertices $v_i$ of $G'$ of degree $d_i-1$ for $1\leq i\leq d_n$. Then $G$ is a realization of $\dd$. Moreover, $G$ contains the edge-disjoint spanning trees $T_1:=T'_1\cup v_1v_n$ and $T_2:=T'_2\cup v_2v_n$.

        \medskip\noindent{\sffamily\bfseries\color{darkgray} Case 2:}\quad         
            Otherwise, $\sum\dd= 4(n-1)$ and $d_n= 3$. In that case, consider the sequence $\dd'$ obtained from \dd by reducing $d_1$ by 1 and removing the entry $d_n$. 
            
            We show that $\dd'$ is graphical and admits a realization with two edge-disjoint spanning trees.
            Note that $\sum\dd'=4(n-1)-1-3=4(n-2)$ is even and $\dd'$ has length $n'=n-1\geq 4$. We distinguish by the value of $n'$.
            
            If $n=5$, then by $\sum\dd= 4(n-1)$, $d_n \geq 3$ and $d_1\leq n-1=4$, it follows that $\dd=(4,3,3,3,3)$ and $\dd'=(3,3,3,3)$. We showed in the base case that $\dd'=(3,3,3,3)$ is graphical and admits a realization $G'$ with two edge-disjoint spanning trees $T_1'$ and $T_2'$.
            
            If $n>5$, observe that $d'_{n'} \geq 3$ and $d'_1 =\max(d_1-1,d_2) \leq n'-1$ (otherwise $d_1\geq n$ or $\sum\dd=2(n-1)+3(n-2)>4(n-1)$). Furthermore, $d'_4\geq 3$ by $d_n\geq 3$ and $n> 5$. 
            It now follows from \Cref{realizable if dn geq 3} that $\dd'$ is graphical. Thus, $\dd'$ admits a realization $G'$ with two edge-disjoint spanning trees $T_1'$ and $T_2'$ by the induction hypothesis.

            Now consider the graph $G$ obtained from $G'$ by adding a degree-3 vertex $v_n$, connecting it to a vertex $v_1$ of $G'$ of degree $d_1-1$, and inserting it on an arbitrary edge $v_iv_j$ of $G'$ with $v_1\notin\{v_i,v_j\}$ (replacing $v_iv_j$ with $v_nv_i$ and $v_nv_j$). Then $G$ is a realization of~$\dd$. Moreover, $G$ contains the edge-disjoint spanning trees $T_1:=T_1'\cup v_1v_n$ and $T_2:=(T_2'- v_iv_j) \cup \{v_nv_i, v_nv_j\}$.

        \smallskip\noindent\textit{Linear-Time Implementation:}
        This constructive algorithm uses an induction over~$n$.
        In the base case, we provided concrete TC-realizable graphs which can be handled in constant time.
        
        In the inductive step, we considered two cases.
        In Case 1, $\sum\dd>4(n-1)$ or $d_n\neq3$. 
        Here, the degree $d_n\geq 2$ is laid off.
        The resulting degree sequence~$\dd'$ allows for the desired property by the induction hypothesis, and a respective realization for~$\dd'$ can be computed in~$\Oh((n-1) + (m-d_n))$~time.
        Since we can lay off and reattach in $\Oh(d_n)$~time with our data structure, the total running time is $\Oh((n-1) + (m-d_n)) +\Oh(d_n)=\Oh(n+m)$~time since we can implement the induction by a linear loop.
        
        In Case 2, $\sum\dd=4(n-1)$ and $d_n=3$. Here, the entry $d_1$ is decreased by 1, and the entry $d_n=3$ is removed. This operation can be performed in constant time with our data structure. 
        The resulting degree sequence~$\dd'$ then allows for a realization $G'$ with two edge-disjoint spanning trees by the induction hypothesis and a respective realization for~$\dd'$ can be computed in~$\Oh((n-1) + (m-2))$~time.
        Then, to obtain a realization for $\dd$, one has to add a vertex, connect it to a vertex $v_1$ of degree $d_1-1$, and insert it on an edge not adjacent to $v_1$. 
        This operation takes only constant time, which leads to $\Oh(n+m)$ time total.
    \end{proof}

    For multigraphical sequences, we split the proof in two parts: we first show how to construct the desired realization for a multigraphical sequence $\dd$ with $\sum\dd=4(n-1)$ and $d_n\geq 2$ in \Cref{lem: multi constructing 2 edst in poly - equiality}, and then describe how this can be used to obtain a realization for every multigraphical sequence fulfilling $\sum\dd>4(n-1)$ and $d_n\geq 2$.
    
    \begin{lemma}\label{lem: multi constructing 2 edst in poly - equiality}
    
        Let $\dd=(d_1,\dots,d_n)$ be a multigraphical sequence with $\sum\dd = 4(n-1)$ and $d_n\geq 2$. 
        We can compute a realization~$G$ of $\dd$ and two disjoint spanning trees of~$G$ in $\Oh(n+m)$~time.
    \end{lemma}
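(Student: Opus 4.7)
The plan is to mirror the inductive skeleton of~\Cref{lem: simple constructing 2 edst in poly}, using the multigraphical laying off process~(\Cref{laying off multi}) which peels one edge at a time and permits multiedges. Since $\sum\dd=4(n-1)$ and $d_n\geq 2$, an averaging argument forces $d_n\in\{2,3\}$. The base cases would cover $n\leq 4$ together with the multi-star family $\dd=(2(n-1),2,\dots,2)$, whose unique realization consists of $n-1$ double edges at a centre $v_1$; distributing the two copies of each double edge, one per tree, immediately yields two edge-disjoint spanning trees.

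For the inductive step with $d_n=2$, since $\dd$ is not a multi-star we have $d_2\geq 3$. Laying off $d_n$ by decrementing the two largest entries $d_1$ and $d_2$ yields a residual multigraphical sequence $\dd'$ of length $n-1$ with $\sum\dd'=4(n-2)$ and $d'_{n'}=\min(d_2-1,d_{n-1})\geq 2$; the induction hypothesis then furnishes a realization $G'$ with edge-disjoint spanning trees $T_1',T_2'$. I would reassemble $G$ by attaching a new vertex $v_n$ via two distinct edges $v_1v_n\in T_1$ and $v_2v_n\in T_2$, which preserves edge-disjointness since the two new edges go into different trees and neither belongs to $G'$. For $d_n=3$, I would follow Case 2 of~\Cref{lem: simple constructing 2 edst in poly} essentially verbatim: decrement $d_1$ by one, remove $d_n$, verify via~\Cref{multi realize chara} that $\dd'$ is multigraphical with $\sum\dd'=4(n-2)$ and $d'_{n'}\geq 2$, apply the induction hypothesis, and reassemble by adding a degree-$3$ vertex $v_n$ joined to a vertex of degree $d_1-1$ and subdividing an edge of $T_2'$ not incident to that vertex.

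The main obstacle I anticipate is identifying the one genuinely multigraphical corner case: when $d_n=2$ and the "spread" laying off would produce $d'_{n'}<2$, which forces $d_2=2$ and hence reduces $\dd$ to the multi-star. Absorbing this into the base case isolates the multigraphical phenomenon and allows the remaining arguments to proceed in lockstep with the graphical case. The linear-time bound follows exactly as before: with the double-linked-list data structure each recursive step performs $\Oh(d_n)$ work to lay off, reattach, and update the edges' tree flags, and these costs telescope to $\Oh(n+m)$ overall.
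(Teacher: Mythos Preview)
Your plan is correct and follows the same inductive skeleton as the paper (induction on~$n$, split on $d_n\in\{2,3\}$), but the paper streamlines both branches in ways that let it avoid your extra machinery.

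For $d_n=2$, the paper lays off the entry~$d_n$ twice against the \emph{current} maximum, so the two new edges may land on the same vertex as a multiedge. This means the multi-star $(2(n-1),2,\dots,2)$ is not a special case at all: laying off produces the multi-star on $n-1$ vertices, and the recursion bottoms out at $n=2$ with $\dd=(2,2)$. Your insistence on attaching $v_n$ to two \emph{distinct} vertices $v_1,v_2$ is what forces you to carve out the multi-star family as a separate base case; once you allow the multiedge reattachment, that family dissolves into the induction.

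For $d_n=3$, the paper does not redo the inductive argument in the multigraphical setting. Since $d_n=3$ and $\sum\dd=4(n-1)$ force $d_1\le n-1$, \Cref{realizable if dn geq 3} shows $\dd$ is already \emph{graphical}, so one can simply invoke \Cref{lem: simple constructing 2 edst in poly} as a black box. Your route via ``decrement $d_1$, remove $d_n$, subdivide a $T_2'$-edge not incident to $v_1$'' also works, but you would still owe the reader a line explaining why such an edge exists (it does: $v_1$ has at least one incident edge in $T_1'$, hence at most $d_1-2\le n-3$ in $T_2'$, leaving at least one of the $n-2$ tree edges free).

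In short: your proof goes through, but the paper's version needs only the base cases $n\in\{2,3\}$ and no separate treatment of the $d_n=3$ branch beyond a reduction to the simple-graph lemma.
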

    
    \begin{proof}
        To prove the statement, we show that for each $n$, for each multigraphical sequence~$\dd$ of length $n$ with $\sum\dd=4(n-1)$ and $d_n\geq 2$, we can construct a realization with two edge-disjoint spanning trees.
        We show this statement via induction over $n$.

        For the base case, consider $n\in\{2,3\}$. For $n=2$, $(2,2)$ is the only multigraphical sequence with $\sum\dd = 4(n-1)$ and $d_n\geq 2$. The multigraph $(\{v_1,v_2\},\{e_1=v_1v_2, e_2=v_1v_2\})$ is a realization of $(2,2)$ with two edge-disjoint spanning trees $T_1=e_1$ and $T_2=e_2$.
        For $n=3$, $(3,3,2)$ is the only multigraphical sequence with $\sum\dd = 4(n-1)$ and $d_n\geq 2$. The multigraph $(\{v_1,v_2,v_3\},\{e_1=v_1v_2, e_2=v_1v_2, e_3=v_1v_3, e_4=v_2v_3\})$ is a realization of $(3,3,2)$ with two edge-disjoint spanning trees $T_1=\{e_1,e_3\}$ and $T_2=\{e_2,e_4\}$.
        
        For the inductive step, let $n\geq 4$ and assume the statement holds for $n-1$. Let~$\dd$ be an arbitrary multigraphical sequence of length $n$ fulfilling $\sum\dd=4(n-1)$ and $d_n\geq 2$. We~describe how to construct a realization of $\dd$ with two edge-disjoint spanning trees.
        Observe that $d_1\leq\sum_{i=2}^n d_i$ and $\sum\dd$ is even, since $\dd$ is multigraphical.
        Furthermore, $d_n\in\{2,3\}$, as otherwise $\sum\dd\geq4n$.
        
        \medskip\noindent{\sffamily\bfseries\color{darkgray} Case 1:}\quad
            If $d_n=2$, consider the residual sequence $\dd' = (d_1', \dots, d_{n'}')$ of length $n'=n-1$ obtained from \dd by laying off the entry $d_n$ twice: reducing $d_n$ and $d_1$ by one and reordering; removing $d_n$ and reducing $d_i=\max(d_1-1,d_2)$ and reordering. 
            Due to \Cref{laying off multi}, $\dd'$ is multigraphical, and by $\sum\dd'=\sum\dd-4=4(n-2)$ and $d'_{n'}\geq d_{n-1}\geq 2$, $\dd'$ admits a realization $G'$ containing two edge-disjoint spanning trees $T_1'$ and $T_2'$.
                        Now consider the graph $G$ obtained from $G'$ by adding a vertex $v_n$ of degree $d_n=2$ and first connecting it to a vertex $v_i$ of degree $d_i=\max(d_1-1,d_2)$ and then connecting it to a vertex $v_1$ of degree $d_1-1$. Then $G$ is a realization of $\dd$. Moreover, $G$ contains the edge-disjoint spanning trees $T_1:=T'_1\cup v_1v_n$ and $T_2:=T'_2\cup v_iv_n$ (if~$v_1 = v_i$, these are then distinct multiedges).
    
        \medskip\noindent{\sffamily\bfseries\color{darkgray} Case 2:}\quad
            If~$d_n=3$, then by~$\sum\dd = 4(n-1)$, we get that~$d_1 \leq n-1$ and~$n\geq 4$.
            Thus, by~\Cref{realizable if dn geq 3}, $\dd$ is also a graphical sequence, for which~\Cref{lem: simple constructing 2 edst in poly} provides the desired construction. 
        \smallskip\noindent\textit{Linear-Time Implementation:}
        This constructive algorithm uses an induction over~$n$.
        In the base case of $n\leq 3$, we provide concrete realizations which can be handled in constant time each.
        
        In the inductive step, we considered two cases.
        In Case 1, $d_n=2$ and the entry $d_n$ is laid off.
        The resulting degree sequence~$\dd'$ then allows for a realization $G'$ with two edge-disjoint spanning trees by the induction hypothesis and a respective realization for~$\dd'$ can be computed in~$\Oh((n-1) + (m-2))$~time.
        Since we can lay off and reattach in $\Oh(d_n) = \Oh(1)$~time with our data structure, the total running time is then $\Oh(n+m)$~time since we can implement the induction by a linear loop.
        In Case 2, $d_n=3$ and the sequence is graphical. By \Cref{lem: simple constructing 2 edst in poly}, it allows for a realization with two edge-disjoint spanning trees by the induction hypothesis that can be computed in $\Oh(n+m)$ time.
    \end{proof}

    The construction for sequences with a degree sum larger than $4(n-1)$ follows directly from the laying-off process for multigraphical sequences (\Cref{laying off multi}), which reduces the degree sum by two while preserving the length of the sequence.
    
    \begin{lemma}\label{lem: multi constructing 2 edst in poly}
    
        Let $\dd=(d_1,\dots,d_n)$ be a multigraphical sequence with $\sum\dd\geq4(n-1)$ and $d_n\geq 2$. 
        We can compute a realization~$G$ of $\dd$ and two disjoint spanning trees of~$G$ in $\Oh(n+m)$~time.
    \end{lemma}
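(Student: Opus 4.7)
The plan is to reduce the general case to \Cref{lem: multi constructing 2 edst in poly - equiality}, which already handles the subcase $\sum\dd = 4(n-1)$. Let $k := (\sum\dd - 4(n-1))/2$; if $k = 0$ we are done immediately by the previous lemma. Otherwise, I will iteratively apply the multigraphical laying-off process (\Cref{laying off multi}) exactly $k$ times, each time reducing the degree sum by $2$, to produce a sequence~$\dd^*$ of the same length~$n$ with $\sum\dd^* = 4(n-1)$ and, as I will argue, still $d^*_n \geq 2$. Invoking the previous lemma on~$\dd^*$ yields a realization~$G^*$ together with two edge-disjoint spanning trees~$T_1^*, T_2^*$. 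Re-adding the~$k$ edges corresponding to the laying-off operations (tracked during the process) produces a realization~$G$ of~$\dd$. Since $G^*$ is a subgraph of~$G$, the trees~$T_1^*$ and~$T_2^*$ remain edge-disjoint spanning trees of~$G$.

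The main obstacle is maintaining~$d_n \geq 2$ throughout the laying-off phase so that the previous lemma remains applicable at the end. For this, I always choose to decrement two entries of value at least~$3$; after such a step, both decremented entries remain $\geq 2$ and all other entries are untouched, so the minimum stays~$\geq 2$. This choice is always available while $\sum\dd > 4(n-1)$: if at most one entry were~$\geq 3$, then $d_j = 2$ for all $j \geq 2$, and the multigraphicality bound $d_1 \leq \sum_{i=2}^n d_i = 2(n-1)$ from~\Cref{multi realize chara} would force $\sum\dd = d_1 + 2(n-1) \leq 4(n-1)$, contradicting $\sum\dd > 4(n-1)$. Hence two entries of value~$\geq 3$ always exist, and by~\Cref{laying off multi} we may simultaneously decrement the entry~$d_1$ together with some~$d_j$ satisfying $j \geq 2$ and $d_j \geq 3$.

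For the $\Oh(n+m)$ running time, the linked-list data structure introduced at the start of this section lets each laying-off step---locating the two largest entries of value~$\geq 3$, decrementing their counts, and splitting or merging the affected list nodes---run in $\Oh(1)$ time. I additionally push onto a stack, at each step, the pair of vertex slots whose entries were decremented, so that the corresponding edges can be reconstructed later in $\Oh(1)$ per edge. The $k \leq m$ laying-off steps thus take $\Oh(m)$ time in total. Afterwards, \Cref{lem: multi constructing 2 edst in poly - equiality} builds~$G^*$ with its two spanning trees in $\Oh(n+m)$ time, and popping the stack to re-add the $k$ recorded edges to~$G^*$ takes a further $\Oh(m)$ time. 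Summing all contributions yields the claimed $\Oh(n+m)$ bound.
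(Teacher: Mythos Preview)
Your proposal is correct and follows essentially the same route as the paper: the paper also reduces to \Cref{lem: multi constructing 2 edst in poly - equiality} by iteratively decrementing~$d_1$ and~$d_2$ by one (noting, via the same multigraphicality argument you give, that~$d_2>2$ whenever~$\sum\dd>4(n-1)$), and then re-adds one edge per step using the graph data structure. The only cosmetic difference is that the paper phrases this as an induction on~$\sum\dd$ and locates the two vertices of degree~$d_1-1$ and~$d_2-1$ via its~$(x,S)$ list rather than an explicit stack, but the computation and running-time analysis are the same.
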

    
    \begin{proof}
        To prove the statement, we show that for each $n$, for each multigraphical sequence~$\dd$ of length $n$ with $\sum\dd\geq4(n-1)$ and $d_n\geq 2$, we can construct a realization with two edge-disjoint spanning trees.
        We show this statement via induction over $\sum\dd$.
        
        For the base case $\sum\dd=4(n-1)$ the claim follows from \Cref{lem: multi constructing 2 edst in poly - equiality}. 
        
        For the inductive step, let $\sum\dd> 4(n-1)$ and assume the statement holds for $\sum\dd-2$. Let~$\dd$ be an arbitrary multigraphical sequence with $\sum\dd>4(n-1)$ fulfilling  $d_n\geq 2$.
        Consider the residual sequence $\dd^{'} = (d_1', \dots, d_{n}')$ of length $n$ obtained from $\dd$ by reducing $d_2$ and $d_1$ by one. Note that $d_2>2$ since $d_1\leq\sum_{i=2}^n d_i$ and $\sum\dd>4(n-1)$.
                It follows that $\sum\dd'=\sum\dd-2\geq 4(n-1)$, $\dd'$ is multigraphical by \Cref{laying off multi}, and $d'_n\geq 2$. By the induction hypothesis, $\dd'$ admits a realization $G'$ containing two edge-disjoint spanning trees $T_1'$ and $T_2'$.
        
        Now consider the graph~$G$ obtained from~$G'$ by adding an edge between vertices $v_j$ of degree $d_j-1$ and $v_1$ of degree $d_1-1$ in~$G'$. Then, $G$ is a realization of~$\dd$ and contains the spanning trees~$T_1 := T'_1$ and~$T_2 := T'_2$.
        
        \smallskip\noindent\textit{Linear-Time Implementation:}
        This constructive algorithm uses an induction over~$\sum\dd$.
        In~the~base case of $\sum\dd=4(n-1)$, it follows from \Cref{lem: multi constructing 2 edst in poly - equiality} that the desired realization can be computed in $\Oh(n+m)$ time.
        
        In the inductive step, we decrease the entries $d_1$ and $d_2$ by 1 one each. The resulting degree sequence allows for a realization $G'$ with two edge-disjoint spanning trees that can be computed in $\Oh(n+(m-1))$ time by the induction hypothesis.
        Since we can add an edge between two vertices of degree $d_1-1$ and $d_2-1$ in $\Oh(1)$~time with our data structure, the total running time is $\Oh(n+m)$~time.
                                \end{proof}

\subsection{Realizations for spanning trees that share at most one edge and~\cp graph realizations}

We now go through the individual existential proofs of the previous sections and argue that all these proofs imply linear-time implementable constructions by using the described data structures to efficiently lay off degrees and reattach vertices.

\begin{lemma}

Let~$\dd$ be a (multi)graphical degree sequence.
If~$\dd$ is TC-realizable, then we can compute a realization of~$\dd$ in $\Oh(n+m)$~time.
\end{lemma}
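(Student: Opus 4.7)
The plan is to first decide in $\Oh(n+m)$~time which regime of \Cref{main gra} (resp. \Cref{main multi}) the input falls into, and then dispatch to the corresponding constructive procedure developed in \Cref{sec:twotrees,sec:c4,sec:multi}. Since the degree sequence is stored in the compressed double-linked list of pairs $(x,c)$ described at the start of this section, reading $n$, $\sum\dd$, $d_1$, $d_n$, and $d_{n-1}$ and testing the inequalities is immediate; the trivial cases~$n \leq 2$ are handled in constant time by outputting the unique realization. In the remaining cases we branch on whether $\sum\dd = 4(n-1)-4$ or $\sum\dd \geq 4(n-1)-2$, and within each branch on (multi)graphicality of the input.

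In the regime $\sum\dd \geq 4(n-1)-2$ we need a realization with two spanning trees sharing at most one edge. If additionally $\sum\dd \geq 4(n-1)$ and $d_n \geq 2$, we directly invoke \Cref{lem: simple constructing 2 edst in poly} or \Cref{lem: multi constructing 2 edst in poly}, which already run in $\Oh(n+m)$ time and yield two edge-disjoint spanning trees (any two will share at most zero edges). The two boundary subcases $d_n = 1$ and $\sum\dd = 4(n-1)-2$ are handled exactly as in the proofs of \Cref{tc if deg 1,minus two dn eq 3,minus two dn eq 2} (resp. the analogous subcases inside the proof of \Cref{two spanning trees multi}): lay off the smallest entry, recurse on the residual sequence, and reattach the removed vertex with one or two edges, copying the spanning-tree flags appropriately.

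In the regime $\sum\dd = 4(n-1)-4$ we build a \cp (multi)graph realization by following the proofs of \Cref{minus four dn eq 3 d1 leq 4,minus four dn eq 3 d1 geq 5,minus four dn eq 2} (resp. the case distinction on $d_{n-2}$ and on $d_1$ in the proof of \Cref{c4 sequences multi}). Each of these is either a constant-size base-case realization, or attaches a constant-size gadget (a subdivision of two edges, or a universal-vertex-plus-cycle, or the fixed 4-cycle gadget $(v_1,x,y,z)$) to a realization of a residual sequence that is itself obtained from \Cref{lem: simple constructing 2 edst in poly}, \Cref{lem: multi constructing 2 edst in poly}, or a recursive call at a smaller~$n$. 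In every such step, the central cycle and the matching edge pairs that support the induction are maintained as auxiliary output of the recursion.

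The main obstacle is the bookkeeping: each recursive step must locate a vertex of the prescribed residual degree in $\Oh(1)$ time, insert or delete $\Oh(d_n)$ edges, and simultaneously update the degree-sequence list, the graph-side list of pairs $(x,S)$, and the per-edge spanning-tree flags, all while keeping the lists sorted. This is precisely what the data structure described at the opening of \Cref{sec:algos} supports, so each call performs work proportional to the degree it lays off or to the constant-size gadget it inserts. Summing these along the recursion gives $\sum_i d_{n_i} = \Oh(m)$ plus $\Oh(n)$ overhead for the recursion depth and initial setup, yielding the claimed $\Oh(n+m)$ total time. The spanning trees themselves are then extracted in a single $\Oh(n+m)$ pass over the adjacency list by reading the Boolean flags.
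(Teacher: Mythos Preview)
Your proposal is correct and mirrors the paper's own proof, which likewise walks lemma by lemma through \Cref{tc if deg 1,minus two dn eq 3,minus two dn eq 2,minus four dn eq 3 d1 leq 4,minus four dn eq 3 d1 geq 5,minus four dn eq 2,two spanning trees multi,c4 sequences multi} and argues that each lay-off/reattach or gadget-attachment step costs time proportional to the laid-off degree or to the constant gadget size via the described data structure. One small slip: the ``universal-vertex-plus-cycle'' gadget and the non-trivial reduction that is \emph{not} a simple lay-off of $d_n$ belong to \Cref{minus two dn eq 3} (the $\sum\dd=4(n-1)-2$, $d_n=3$ subcase), not to the $C_4$ regime, but this misattribution does not affect your running-time accounting.
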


\begin{proof}
We start with the proofs in~\Cref{sec:twotrees}.
For~$n \leq 2$, graphical sequences only have unique realizations, which can be clearly computed in constant time.
We thus focus on~$n > 2$.

Consider the construction of~\Cref{tc if deg 1}.
In this proof, the degree~$d_n = 1$ was laid off and the resulting degree sequence was shown to be graphical and admitting a realization with two edge-disjoint spanning trees.
Such a realization can be computed in $\Oh((n-1)+(m-1))$~time according to the previous algorithm.
Afterward the realization for~$\dd$ can be obtained by adding a degree-1 vertex in constant time.
In total this leads to a running time of $\Oh((n-1)+(m-1))$~time.

Note that in all further graphical cases, $m= \Theta(n)$.

Consider the construction of~\Cref{minus two dn eq 3}.
For constant size~$n$, we provided concrete realizations and for all other cases, the realization was obtained by adding a cycle of size~$x$ to a specific vertex in a realization of a specific graphical degree sequence of length~$n-x$ that allows for two edge-disjoint spanning trees and which has either constant size or fulfills the requirements of~\Cref{tc if deg 1}.
In both cases, the total running time is then $\Oh(x) + \Oh(n-x) = \Oh(n)$~time.

Consider the construction of~\Cref{minus two dn eq 2}.
Besides some constant size exceptions (which clearly can be handled in constant time each), the constructive algorithm used induction over~$n$.
In the inductive step, the degree~$d_n = 2$ was laid off.
The resulting degree sequence then either allowed for the desired property by the induction hypothesis or one of the previous lemmas and their respective algorithm.
Since we can lay off and reattach in $\Oh(d_n) = \Oh(1)$~time with our data structure, the total running time is then $\Oh(n)$~time since we can implement the induction by a linear loop.

Next, we consider the proofs of~\Cref{sec:c4}.

Consider the construction of~\Cref{minus four dn eq 3 d1 leq 4}.
The proof provides an induction over the length of~$n$ with base case~$n=8$.
In the inductive step, a new entry of value~$4$ is added.
To obtain a realization for the degree sequence of length~$n$, one takes a realization~$G'$ of the (unique) degree sequence of length~$n-1$ fulfilling the conditions of the induction.
In this graph~$G'$, a specific matching of size~$2$ is removed and a new vertex of degree~$4$ in introduced that becomes a neighbor of all four endpoints of the matching.
Note that this inductive step takes only constant time each if the specific matching is provided.
The latter is ensured by the induction hypothesis, since the graph has a constant maximum degree and a desired matching as described in the proof of~\Cref{minus four dn eq 3 d1 leq 4} can thus be found in constant time.
Clearly, this induction can then be implemented in $\Oh(n)$~time.

Consider the construction of~\Cref{minus four dn eq 3 d1 geq 5}.
It was shown that six entries of value~$3$ can be removed from~$\dd$ and~$d_1$ can be decreased by~$2$ to obtain a graphical degree sequence which admits a realization~$G'$ with two edge-disjoint spanning trees.
Such a realization can be computed by the initial algorithm in~$\Oh(n)$.
To obtain a realization for~$\dd$, the constant size gadget shown in~\Cref{fig:d1 greater 4 and dn equal 3} can be attached to an arbitrary vertex of degree~$d_1-2$ in~$G'$.
Since this gadget has constant size and constantly many edges attached to it, the \cp graph realization of~$\dd$ can be computed in $\Oh(n)$~time.

Consider the construction of~\Cref{minus four dn eq 2}.
Recall that similarly to~\Cref{minus two dn eq 2}, the constructive algorithm is again an induction over~$n$ by laying off~$d_n=2$ until reaching a sequence of constant length or fulfilling the conditions of~\Cref{minus four dn eq 3 d1 leq 4} or~\Cref{minus four dn eq 3 d1 geq 5}.
By the same arguments used for the linear time algorithm for~\Cref{minus two dn eq 2}, the constructive algorithm for~\Cref{minus four dn eq 2} can also be performed in $\Oh(n)$~time.

Finally, we consider the proofs of~\Cref{sec:multi} for multigraphical degree sequences.

Consider the construction of~\Cref{two spanning trees multi}.
For~$n\leq 2$, all multigraphical degree sequences only have a unique realization which can be found in $\Oh(n+m)$~time.
If~$n > 3$, (i)~the degree sequence was shown to be graphical (if~$d_n= 3$) for which the algorithms of~\Cref{sec:twotrees} imply the desired algorithm, or (ii)~the smallest degree can be laid off if~$d_n = 1$ or simply removed from the sequence if~$d_n=2$.
In both latter cases, the resulting sequence was shown to be multigraphical and allows for a realization with two edge-disjoint spanning trees.
By the initial algorithms for this case, such a realization can be computed in $\Oh((n-1) + (m-1))$~time and the lowest-degree vertex can be reattached in constant time afterwards, leading to $\Oh(n+m)$~time.

It remains to consider the construction of~\Cref{c4 sequences multi}.
If~$d_{n-2} \geq 3$, the degree sequence was shown to be graphical for which the algorithms of~\Cref{sec:c4} imply the desired algorithm.
Otherwise, the degree sequence had constant size or we could remove the entries~$d_{n-2}$, $d_{n-1}$, and~$d_{n}$ from~$\dd$ and reduce~$d_1$ by~$2$.
This operation can be done in constant time with our data structure.
Afterwards, the resulting sequence was shown to be multigraphical and allows for a realization~$G'$ with two edge-disjoint spanning trees.
By the initial algorithms for this case, such a realization can be computed in $\Oh((n-3) + (m-4))$~time.
Then, to obtain a TC-realizable graph for~$\dd$, one only has to add three vertices that together with an arbitrary vertex of degree~$d_1-2$ of~$G'$ will be an induced cycle of length 4 (namely the central cycle of the resulting~\cp multigraph).
Adding these 3 vertices and 4 edges takes only constant time, which leads to $\Oh(n+m)$~time in total.
\end{proof}

\subsection{Computing a TC labeling}

    Recall that during the construction of the TC-realizable graph~$G$, we always stored which edges belong to which of the two spanning trees $T_1$ and $T_2$, allowing us to extract both trees in linear time.
    By construction, the trees are edge-disjoint, share exactly one edge $e$, or share exactly two edges, in which case $G$ contains a central cycle $C$.
    In the latter two cases, the output was $e$ or $C$.
        
    We now describe how to efficiently label the edges of~$G$ using the extracted trees~$T_1$ and~$T_2$, the potential edge $e$, and the potential central cycle~$C$.
    The labeling procedures are based on the classical approach by Baker~and~Shostak~\cite{baker_gossips_1972}, which we show can be executed in linear time, given~$T_1$, $T_2$, $e$ and~$C$.
    For the sake of completeness, we recall these procedures.
    
    The labelings rely on the general idea that each vertex travels ``up'' along~$T_1$ to reach a central structure---a central vertex, an overlapping edge $e$, or an overlapping central cycle~$C$---and then ``down'' along~$T_2$ to reach every other vertex. This scheme is referred to as \emph{pivot labeling} in recent literature~\cite{casteigts_temporal_2021,christiann_inefficiently_2024}, where the central structure acts as a pivot (turning point).

    We begin by assigning a common root~$v_r$ to both~$T_1$ and~$T_2$.  
    \begin{itemize}
      \item If~$T_1$ and~$T_2$ are edge-disjoint, we choose an arbitrary vertex as the root.  
      \item If the trees share an edge or we return a central cycle, we contract the edge or cycle into a single root vertex.
    \end{itemize}
    
    Next, we label the edges of~$T_1$ in increasing order from the leaves to the root:
    \begin{itemize}
      \item Initialize a label counter~$t = 0$, and $X$ as the set $L$ of leaves of $T_1$.
      \item Assign the label $t+i$ to the edge~$\ell_ip_i$ where~$\ell_i\in X$ and $p_i$ is its parent in $T$. Set $t=t+L$ and $X=\{p_i:1\leq i\leq L\}$.
      \item       Assign the label $t+i$ to the unlabeled edge~$p_iq_i$ where~$p_i\in X$ and~$q_i$ is its parent in $T$.
      Set $t=t+\lvert X\rvert$ and $X=\{q_i:1\leq i\leq \lvert X\rvert\}$.
      \item Continue this process until all paths from the leaves have reached the root, \ie $X=\{v_r\}$.
    \end{itemize}
    In this way, the edges in~$T_1$ are labeled in increasing order as we move from the leaves to the root, and every vertex reaches the root by some time $t_r$.

    Next, we label the central structure. If the trees are edge-disjoint, no additional labeling is required. If they share a single edge~$e$, we assign it label~$t_r + 1$ to ensure that all vertices can traverse the edge. If there exists a shared central cycle~$C = (a, b, c, d)$, we assign label~$t_r + 1$ to the opposite edges~$ab$ and~$cd$, and label~$t_r + 2$ to the remaining two edges, that is, $\lambda(ab) = \lambda(cd) = t_r + 1$ and $\lambda(bc) = \lambda(da) = t_r + 2$.
    
    Finally, we label the edges of~$T_2$ in decreasing order from the root to the leaves:  
    \begin{itemize}
      \item Initialize a label counter~$t = t_r+3$, $X=\{v_r\}$, and $S_{v_r}$ as the set of children of $v_r$ in $T_2$.  
      \item Assign the label $t+i$ to the unlabeled edge $v_rs_i$ where $s_i\in X_{v_r}$. Set $t=t+\lvert S_{v_r} \rvert$, $X=S_{v_r}$, and $S_{s_i}$ as the set of children of each $s_i$ in $T_2$.
      \item For each vertex $x\in X$, assign the label $t+i$ to the unlabeled edge $xy_i$ where $y_i\in S_x$. Set $t=t+\lvert S_x\rvert$, $X=X-x\cup S_x$, and $S_{y_i}$ as the set of children of each $y_i$ in $T_2$.
            \item Repeat this process until all edges of~$T_2$ have been labeled, \ie $X=\emptyset$.
    \end{itemize}
    In this way, the edges in $T_1$ are labeled in increasing order as we move from the root to the leaves, and every vertex is reachable from the root with a temporal path starting at $t_r+3$.
    
    Given the trees $T_1$ and $T_2$ and the potential overlapping edge $e$ or central cycle $C$, this labeling procedure can be executed in time $\Oh(n+m)$ via BFS algorithms on both spanning trees.

We conclude the following algorithmic main result of our work.

\begin{theorem}

Let~$\dd$ be a (multi)graphical degree sequence.
We can decide in $\Oh(n+m)$~time whether~$\dd$ is TC-realizable.
If this is the case, then in $\Oh(n+m)$~time, we can compute a (multi)graph~$G$ which is a realization of~$\dd$ together with a TC labeling for~$G$. 
\end{theorem}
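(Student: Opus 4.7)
The plan is to assemble the pieces already developed earlier in the paper into a single linear-time pipeline. For the decision part, I would directly invoke Theorems~\ref{main gra} and~\ref{main multi}: TC-realizability reduces to (a)~verifying that $\dd$ is (multi)graphical and (b)~comparing $s = \sum d_i$ to the thresholds $4(n-1)-4$ and $4(n-1)-2$ together with constant-time lookups of $d_1$, $d_{n-1}$, and $d_n$. The multigraphical test from Theorem~\ref{multi realize chara} is trivially $\Oh(n)$. The Erd\H{o}s--Gallai test of Theorem~\ref{def: simple graphical} on a sorted input is also $\Oh(n)$: I would maintain a prefix sum of the $d_i$ and, as $r$ grows from $1$ to $n-1$, an incrementally updated sum $\sum_{i=r+1}^n \min(r, d_i)$, using a single sweeping pointer to track the position where $d_i$ transitions from $\geq r$ to $< r$. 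All remaining checks are $\Oh(1)$.

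When the decision is positive, I would invoke the preceding constructive lemma, which, using the double-linked-list data structure introduced at the start of the section, delivers in $\Oh(n+m)$ time a realization $G$ together with two spanning trees $T_1, T_2$ encoded by Boolean edge flags, and additionally outputs either a single shared edge $e$ (when $T_1$ and $T_2$ share an edge) or the central cycle $C$ (when $G$ is a \cp (multi)graph). A single scan of the adjacency list in $\Oh(n+m)$ time extracts $T_1$ and $T_2$ from these flags.

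For the labeling, I would apply the three-phase Baker--Shostak pivot labeling recalled in the excerpt. First, designate a root $v_r$: an arbitrary vertex when $T_1, T_2$ are edge-disjoint, an endpoint of $e$ when they share one edge, or the contraction of $C$. Second, run a BFS on $T_1$ from $v_r$ to obtain each vertex's depth, then label the edges of $T_1$ with consecutive integers $1, 2, \dots, t_r$ in \emph{decreasing} order of the depth of their child endpoint, which guarantees a non-decreasing path from every vertex to $v_r$. Third, label the pivot with a constant number of additional slots (one label on $e$, or the $t_r{+}1, t_r{+}1, t_r{+}2, t_r{+}2$ scheme on $C$). Finally, run a BFS on $T_2$ from $v_r$ and label its edges in \emph{increasing} order of child depth starting at $t_r + 3$, giving a non-decreasing path from $v_r$ to every vertex. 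Each phase is a linear-time tree traversal, so the labeling runs in $\Oh(n+m)$. The main subtlety will be propagating the bookkeeping cleanly from the constructive lemma into the labeling phase so that $e$ and $C$ are available without recomputation and the BFS orderings can be realized in a single pass per tree; since the earlier lemmas already track spanning-tree membership and the central structure as a by-product, this is essentially an assembly of established components, yielding the claimed $\Oh(n+m)$ bound.
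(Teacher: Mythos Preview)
Your proposal is correct and follows essentially the same pipeline as the paper: decide via the characterizations of Theorems~\ref{main gra}/\ref{main multi} (with the $\Oh(n)$ Erd\H{o}s--Gallai sweep you spell out, which the paper leaves implicit), construct via the preceding constructive lemma, and label via the Baker--Shostak pivot scheme using two BFS passes. The only minor slip is that when $T_1$ and $T_2$ share one edge $e$, the paper \emph{contracts} $e$ into the root rather than rooting at an endpoint; your variant works too, but you must then explicitly exclude $e$ from both tree-labeling phases so it is labeled only once (at the pivot step), otherwise it receives two labels.
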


\section{Conclusion}
In this work, we presented complete characterizations of graphical and multigraphical degree sequences that admit temporally connected realizations under proper labelings.
All characterizations presented in this paper are constructive and lead to linear-time algorithms for constructing TC realizations (including the corresponding labelings), provided that the input satisfies the corresponding conditions.

As explained in the introduction, the fact that we only considered proper labelings is not a limitation, as the remaining cases involving non-proper labelings are either trivial or covered by our techniques.
Thus, our results cover directly or indirectly all the possible combinations of settings among proper/non-proper, simple/non-simple, and strict/non-strict. 

Several natural questions could follow from this work.
For instance, one may ask to enumerate all TC-realizable graphs for a given degree sequence, or ask for realizations with additional constraints such as bounded lifetime, bounded diameter, or the existence of temporal spanners of a certain size.
Another promising avenue is to optimize over these criteria while searching for a realization.
Finally, the realizability question studied in this paper could be generalized to temporal~$k$-connectivity or to directed temporal graphs.

\bibliographystyle{abbrv}
\bibliography{references,zotero-arnimi}
\end{document}